
\documentclass[a4paper,12pt]{article}

\RequirePackage{amsthm,amsmath,amsfonts,amssymb}
\RequirePackage[authoryear]{natbib}
\RequirePackage{xr}
\RequirePackage[OT1]{fontenc}
\RequirePackage{dsfont,bm}
\RequirePackage{algorithm}
\RequirePackage{algpseudocode}
\RequirePackage{graphicx} 
\RequirePackage[section]{placeins}
\RequirePackage{color}
\RequirePackage{enumerate}
\usepackage[shortlabels]{enumitem}
\usepackage{multirow}
\usepackage[english]{babel} 
\usepackage{xcolor}
\usepackage{bbm}
\usepackage{mathtools}
\usepackage{upgreek}
\usepackage[textfont=it]{caption}
\usepackage{xfrac}
\usepackage{verbatim}
\usepackage{url}
\usepackage{mathtools}
\usepackage{lmodern}
\usepackage{breqn}
\usepackage{colortbl}
\usepackage{etoolbox}
\usepackage{anyfontsize}
\usepackage{subcaption}
\usepackage[multiple]{footmisc}
\usepackage{thmtools}
\usepackage{geometry}
\geometry{
  margin=2.54cm
}
\RequirePackage[colorlinks,citecolor=blue,urlcolor=blue,linkcolor=blue]{hyperref}


\newcommand{\Xmp}{(X_k)_{k\in\mathbb{N}_0}}
\newcommand{\set}{\{0,1\}^\ell }

\newcommand{\dint}{\,\mathrm{d}}


\newcommand{\R}{{\mathbb R}}

\newcommand{\N}{{\mathbb N}}

\renewcommand\thmcontinues[1]{Continued}
\makeatletter
\renewcommand\@makefnmark{\hbox{{\rm\textsuperscript \@thefnmark }}}
\def\@fnsymbol#1{\ensuremath{\ifcase#1\or A\or *\or B\or C\or D\or E\or F\or G
    \or \dagger \else\@ctrerr\fi}}
\makeatother

\newtheorem{thm}{Theorem}[section]
\newtheorem{lem}[thm]{Lemma}

\newtheorem{prop}[thm]{Proposition}

\newtheorem{rem}[thm]{Remark}
\newtheorem{example}[thm]{Example}

\newtheorem{defi}[thm]{Definition}


\begin{document}
\pagestyle{plain}
\title{Analyzing cross-talk between superimposed signals: Vector norm dependent hidden Markov models and applications to ion channels}
\hypersetup{pdftitle={VND Markov chains}}

\author{Laura Jula Vanegas\thanks{Institute for Mathematical Stochastics, Georg-August-Universit\"at G\"ottingen, Germany}\phantom{\footnotesize 1\,\,}\thanks{The authors contributed equally.},
  Benjamin Eltzner\thanks{Max Planck Institute for Multidisciplinary Sciences, G\"ottingen, Germany}\phantom{\footnotesize 1\,\,}\footnotemark[2],
  Daniel Rudolf\thanks{Universit\"at Passau, Germany}\phantom{\footnotesize 1\,}\thanks{Felix-Bernstein-Institute for Mathematical Statistics in the Biosciences, G\"ottingen, Germany}\phantom{\footnotesize 1\,\,}\footnotemark[2], \\
  Miroslav Dura\thanks{Cellular Biophysics and Translational Cardiology Section, Heart Research Center G\"ottingen, Department of Cardiology \& Pneumology, University Medical Center G\"ottingen, Germany}\phantom{\footnotesize 1\,}\thanks{DZHK (German Centre for Cardiovascular Research), partner site G\"ottingen, Germany},
  Stephan E. Lehnart\footnotemark[7]\phantom{\footnotesize 1\,}\thanks{DFG Cluster of Excellence ``Multiscale Bioimaging: from molecular Machines to Networks of excitable cells'', University Medical Center G\"ottingen, Germany}, and
  Axel Munk\footnotemark[1]\phantom{\footnotesize 1\,}\footnotemark[5]\phantom{\footnotesize 1\,}\footnotemark[8]\phantom{\footnotesize 1\,\,}\thanks{Corresponding author, e-mail: munk@math.uni-goettingen.de}}

\date{\today}

\maketitle

\begin{abstract}
  We propose and investigate a hidden Markov model (HMM) for the analysis of dependent, aggregated, superimposed two-state signal recordings. A major motivation for this work is that often these signals cannot be observed individually but only their superposition.
  Among others, such models are in high demand for the understanding of cross-talk between ion channels, where each single channel
  cannot be measured separately. As an essential building block, we introduce a parameterized vector norm dependent Markov chain model and characterize it in terms of permutation invariance as well as conditional independence. 
  This building block leads to a hidden Markov chain sum process which can be used for analyzing the dependence structure of superimposed two-state signal observations within an HMM. 
  Notably, the model parameters of the vector norm dependent Markov chain are uniquely determined by the parameters of the sum process and are therefore identifiable. 
  We provide algorithms to estimate the parameters, discuss model selection and apply our methodology to real-world ion channel data from the heart muscle, where we show competitive gating.
\end{abstract}

\noindent\textbf{Keywords:} Hidden Markov models, vector norm dependency, permutation invariance, lumping property, aggregated data, cross-talk, ion channels




\section{Introduction}

\subsection{Motivation}
\emph{Hidden Markov models} (HMMs) were introduced in the late 1960s \citep{baum1966,baum1970} and have since been widely adopted, see \citep{cappe2005,westhead2017hidden,zucchini2017hidden} for recent monographs. HMMs can be used to model signals that stem from an underlying, not directly observable, Markov chain and are nowadays well established tools in a variety of disciplines including information science \citep{gales2008}, biology \citep{krogh2001, chen2016} and medicine \citep{manogaran2018}. In particular, HMMs serve as a standard modeling tool in physiology for the analysis of ion channel recordings, see e.g. \citep{ball1992,becker1994,sakmann1995b,gunst2001,venkataramanan2002,khan2005}.

Whereas classical theory is mainly concerned with univariate scenarios, more recently, much progress has been made in the case where the Markov chain is multivariate and exhibits dependencies between its components. Such models usually rely on the assumption that one has access to observations of all single components and have been proven useful when analyzing smartphone sensor data of many sources \citep{vanderkamp2017}, disease interaction in medical research \citep{sherlock2013}, or for classification tasks in computer vision \citep{brand1997} to mention a few applications. An asymptotic analysis of multivariate HMMs can be found, e.g., in \citep{bielecki2013} and computational aspects are discussed in \citep{touloupou2020}.

However, only little methodology is available for the case when the signal cannot be marginally observed but only a superimposed version is available. This appears somewhat surprising, as the modeling, recovery and analysis of superimposed Markovian signals is in high demand for various applications, e.g., ion channel investigations \citep{chung2007}, super-resolution microscopy \citep{staudt2020}, or magnetotelluric data assessment \citep{neukirch2019}. Besides the masking effect from superposition, such analysis is hindered by ``crosstalk'' between these signals, i.e., by its statistical dependency. Nevertheless, in many applications the understanding of this ``crosstalk'' is actually the primary aim of the data analysis.

Therefore, in this paper we develop and characterize a novel Markov chain model allowing crosstalk between single two-state signals. This will be employed in an HMM and we provide statistical methodology for its analysis. While our methodology is applicable in any situation where superpositions of general two-state Markov-systems are observed we focus for illustrative purposes on a challenging ion channel application and show the advantages of our approach in that scenario. 

\subsection{Ion channels ensembles}\label{subsec: intro_ion_channels}
Ion channels are large protein complexes in the cell membranes of living organisms that control the flux of charged ions into and out of the cell. Ion concentrations in cells are crucial for key functionality of cells like signal transmission in nerve cells and contraction of muscle cells \citep{chung2007}. Therefore, understanding the conductance properties of ion channels is a major endeavor in physiology and of great medical importance. Fundamental to this is the patch-clamp technique, which allows to measure ion channel conductivity of single channels. The development of artificial lipid bilayers has facilitated the exclusion of interfering environmental factors, see~\citep{sakmann1995b}. In addition, current investigation of automatized patch-clamp-like techniques can lead to faster data collection and fully automated data analysis is in high demand, see \citep{perkel2010}. As mentioned above, single channel modeling is often done via HMMs, but more recently also non-parametric change point regression methods have been developed as a flexible and computationally efficient alternative, see e.g.~\citep{gnanasambandam2017,pein2018a,Bartsch2019,PEM2021,jula2021multiscale}.

However, isolating experimentally a single ion channel is not always possible or desirable. Measuring conductivity of multiple ion channels simultaneously allows one to simplify experimental design and enables the study of channel interactions. Moreover, having a reliable model for multiple channels can lead to important biological insight as observed in \citep{mirams2011}. While non-parametric change point regression methods for single channel analysis do not provide enough structure to infer properties of single channels from total conductivity of an ensemble of channels we will see that the proposed HMM allows to recover channel dependencies from superpositions by encoding interactions in the transition matrices. The simplest case occurs for independent channels \citep{dabrowski1992}, which, however, is not fulfilled in many applications, see, e.g., \citep{keleshian2000a} and the data analyzed in the present paper.

Ion channels can open and close, a process called \textit{gating}, in order to control the flux of charged ions across the membrane of the cell or intracellular organelles. The electrical current due to migrating ions is measured as a function of time, see Figure~\ref{fig:intro} for a data set. In Section \ref{sec:ion-channels}, we investigate a time series of current measurements on a synthetic lipid bilayer with multiple Ryanodine Receptor type 2 (RyR2) ion channels. Such channels are primarily found in cardiac muscle cells and neurons, since these receptors are important in controlling intracellular Ca${}^{2+}$ release from the endoplasmatic reticulum during cardiac excitation-contraction coupling. Moreover, genetic and proteomic defects in RyR2 lead to abnormally increased resting Ca${}^{2+}$ release, causing cardiac arrhythmia and contractile dysfunction \citep{taur2005,salvage2019}.

In the present experiment, the Ca$^{2+}$ concentration on the cis side is low, namely 150 nM, while the concentration on the trans side is much higher, namely 5 mM and 10 mM, respectively. No external voltage was applied, so the measured currents are purely an effect of ion concentration differences between the two sides of the membrane. Experiments were performed in the Lehnart Lab of the Cellular Biophysics and Translational Cardiology Section in the Heart Research Center G\"ottingen (HRCG). A central question arising from this study is whether RyR2 channels act independently, cooperatively or competitively. The detailed investigation of this is a major motivation of this paper.

For recovering the dependency from superpositions when the channels interact, the state of the art model and method was developed by \cite{chung1996}, which we refer to as CK model, see Subsection~\ref{subsec: CK-model} for details. Unfortunately it relies on a simplifying assumption that may be very restrictive for application purposes and is demonstrated not to be satisfied for the RyR2 data, allowing only clustered dependency, i.e., a specific form to incorporate a higher probability for the channels to be in the same state.

\subsection{Contribution of this paper}

The core of this paper is a novel Markov model called
vector norm dependent (VND) model for ensembles of coupled two-state Markov chains. In the following, states will be identified by 
`$0$' and `$1$'. Within the ion channel context `$0$' refers to a channel being `closed' and `$1$' to a channel being `open'.
The VND model, introduced in Definition~\ref{def:vnd}, contains as a special case the model of uncoupled signals, \cite{dabrowski1992}, where all signals within the superposition always act independently. However, depending on parameter values, the probability for signals to be `$0$' or `$1$' can depend on the previous number of signals being `$1$' in a positively or negatively correlated fashion. 
The model emerges from two easily interpretable properties, namely \emph{permutation invariance},
see Definition~\ref{def:perm_inv}, and \emph{conditional independence}, see Definition~\ref{def:cond_ind}, as shown in Theorem~\ref{thm:vnd-char}. This is particularly significant, since it gives the practitioner a guideline of characteristics that indicate in which scenarios the model is generally applicable.
Furthermore, by construction, the VND model satisfies the so-called lumping property, see Definition~\ref{def:lump} below. This is fundamental since, if it holds, the superimposed signal process, which we just call sum process, is again Markovian and we can use HMM techniques to estimate the corresponding transition matrix from noisy observations simplifying data analysis and interpretation significantly.
The VND model is, on the one hand, sufficiently flexible to describe a wide range of behaviors, such as competitive\footnote{Competitive dependencies increase the probability of signals to be in different states.} or cooperative dependencies, see Definition~\ref{def:coop-comp} below, with assumptions that fit well to the application. On the other hand, it is specific enough to allow for estimation of the parameters from superimposed data. As a consequence, we show in Theorem~\ref{thm:params-unique} that the VND model allows fully reconstructing the transition matrix of the underlying vector Markov chain from the transition matrix of the Markovian sum process.

In  Section~\ref{sec:ion-channels}, the VND model is illustrated in action: We show a competitive dependency, 
unnoticed by the widely used CK model, 
of RyR2 ion channels, which play an important role in cardiac muscle cells. This is supported by a BIC-type model selection in our data applications in Section~\ref{subsec:det_suit_model}, which is investigated in more detail in Section~\ref{subsec: est_num_chan}. There, we consider the question whether the number of channels in the membrane can be reliably determined even if the data shows at most a small fraction of the total number of channels open at the same time. To this end we perform a simulation study using BIC-type criteria and cross validation based model selection. It turns out that the number of channels determined by these criteria typically comes close to the maximum number of active channels visible in the data set, which is often a conservative estimate of the number of channels. In Section~\ref{subsec:robust}, we show that the findings of cooperative or competitive gating are robust to such an underestimation of the number of channels.

A documented R package for simulation and estimation in the VND model can be found at \url{https://github.com/ljvanegas/VND}.

\subsection{Literature review}
Various extensions of HMMs have been suggested, see e.g. \citep{sin1995,mari1997,fine1998,guan2016,siekmann2016, diehn2019}. Most related to our setting are factorial HMMs \citep{ghahramani1997, chen2009a}, that consider several independent unobservable chains. In this sense, each individual 
signal can be seen as an unobservable layer. In contrast, in our setting the chains are coupled and the dependency structure plays a key role. Coupled HMMs, see \citep{brand1997}, deal with dependency by embedding the system in a multidimensional chain as we do, and then applying HMM techniques to it. However, in our setting we cannot observe the state of each individual 
signal at any moment in time, which significantly complicates the situation and is the major motivation for our approach.

There are several works related to this situation, i.e., when the observations depend only on the sum of Markov chains; most of them require independence of channels, see e.g. \cite{yeo1989, fredkin1991, dabrowski1992, klein1997}. The concept of exchangeability of Markov chains in a multidimensional setting, which is equivalent to permutation invariance, was explored by \cite{gottschau1992} and extended to continuous time Markov chains by \cite{ball1997} denoted as \emph{aggregation}. Most similar in spirit to our work is the CK model by \cite{chung1996}, which provides a simple way to model the dependency of channels. However, the dependency structure of such a model is limited to a specific form of cooperative gating, since it is a linear interpolation between the fully coupled case, where all signals are always in the same state, and the independent case. In particular, within the ion channel setting it can only model fully clustered gating, where all channels have a tendency to open and close synchronously.

\section{Theoretical framework} \label{sec:theory}

\subsection{Setup}
Suppose there are $\ell\in\mathbb{N}$ \emph{emitters}, e.g. single ion channels or digital signals, where each generates a $\{0,1\}$-valued discrete-time sequence which we call \emph{signal}. The two values $\{0,1\}$ that each entry of the sequence can attain are called \emph{states}. Each signal at each time-point is absorbed by an aggregation procedure, leading to a superposition of the whole system of signals, which is afterwards noisily recorded by the receiver. For a schematic view of this setting see  Figure~\ref{fig:test}.
For ion channels, here $0\mathrel{\widehat{=}}$ closed, $1\mathrel{\widehat{=}}$ open, this corresponds to the measurement of total current of the superimposed channels. 
\begin{figure}[h!]
  \centering
  \includegraphics[width=\textwidth]{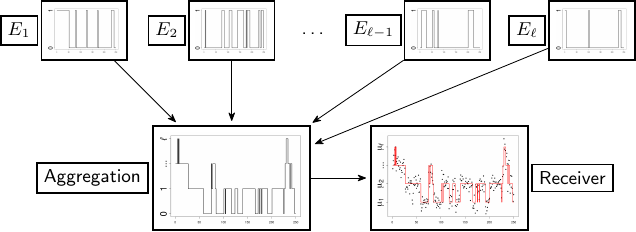}
  \caption{At a fixed time-point each emitter produces a $\{0,1\}$-valued entry of the signal. The experimental setup now leads to an aggregation, where the signals are superposed. Such aggregation is not directly observable but is hidden according to an HMM, such that only the sum of all signals, perturbed by random noise, can be recorded. \label{fig:test}}
\end{figure}
There, a single ion channel takes the role of an emitter and their conductance takes the role of the signals. We refer to \cite{zhang2001blind} and \cite{behr2018multiscale} for further applications in digital communication and cancer genetics.

For a formal description, suppose that in the following all random variables are defined on a common probability space $(\Omega,\mathcal{F},\mathbb{P})$. As an essential building block, for any ${\ell\in\mathbb{N}}$, ${k\in\mathbb{N}}$ and $j\in\{1,\dots,\ell\}$ let $X_{k}^{(j)} \colon \Omega \to \{0,1\}$ be a random variable describing the state of the $j$-th emitter at time point $k$. Considering those random variables simultaneously in $k$ we assume that each of the $\ell$ individual signals from the emitters are modeled by a homogeneous Markov chain $(X_k^{(j)})_{k\in\mathbb{N}_0}$, with $j=1,\ldots,\ell$ on the finite state space $\{0,1\}$. 
Then, the whole system of the $\ell$ signals can be modeled by an $\ell$-dimensional homogeneous Markov chain $\Xmp$ on the finite state space $\{0,1\}^\ell$, where $X_k := (X_k^{(1)},\dots,X_k^{(\ell)})^T$. Note that the transition matrix, say $M\in \R^{2^\ell\times 2^\ell}$, of the multidimensional process contains the full information on the dependence (coupling) between emitters. Therefore, finding suitable parameterizations of the matrix $M$ and its identification is one of the key points of this paper.

Moreover, most relevant is that we observe noisy measurements only on the sum of the system of signals of the individual emitters and not for each emitter separately, i.e., marginally. Therefore, we only have access to the sum process $(S_k)_{k\in\mathbb{N}}$ on the finite state space $[\ell\,] := \{ 0,\dots, \ell \}$ given by
\[
S_k := \sum_{j=1}^\ell X_k^{(j)}.
\]
Note that the sum process starts with $S_1$ to restrict the influence of the initial distribution of $(X_k)_{k\in\mathbb{N}_0}$.
The process $(S_k)_{k\in\mathbb{N}}$ can be seen as counting how many signals are in state ``$1$'', i.e., how many channels are open in the case of ion channels, at each discretized time point $k\in\mathbb{N}$. In general, this process is difficult to characterize, however, under certain conditions, the sum process $(S_k)_{k\in\mathbb{N}}$ is again a homogeneous Markov chain. For this, the lumping property turns out to be a sufficient criterion, see \cite[Chapter~6.3]{kemeny1976}.

\subsection{Lumping property}\label{sec:lump}
We start with providing notation. For $x\in \{0,1\}^\ell$, let $x=(x^{(1)},\dots,x^{(\ell)})^T$ and define the $1$-norm by
$
\Vert x \Vert_1 := \sum_{i=1}^{\ell} \vert x^{(i)} \vert, 
$ 
which denotes the number of non-zero entries of $x$.
Further, for $m\in[\ell\,]$ let
\begin{equation}\label{eq:zm}
\mathcal{Z}_m := \left\{z\in\{0,1\}^\ell \colon 
\Vert z \Vert_1
=m\right\}.
\end{equation}
Now we are able to define the lumping property and provide the aforementioned sufficiency criterion.
\begin{defi}[Lumping property]\label{def:lump}
  We say that $\Xmp$ satisfies the lumping property if for any $k\in\mathbb{N}$, $i,j\in[\ell\,]$ and $x,y\in \mathcal{Z}_i$ 
  it holds that
  \[
  \mathbb{P}(S_{k+1}=j \mid X_k=x) = \mathbb{P}(S_{k+1}=j \mid X_k=y),
  \]
  whenever $\mathbb{P}(X_k=y)\cdot \mathbb{P}(X_k=x)>0$.
\end{defi}
The lumping property implies that the sum process, i.e., the sequence $(S_k)_{k\in\mathbb{N}}$ of random variables, is a homogeneous Markov chain. The following result is proven in Appendix~\ref{suppl_sec: proof_lump}.
\begin{thm}\label{thm:q_dep_on_m}
  Let $M=(m_{x,y})_{x,y\in\{0,1\}^\ell}$ be the transition matrix of the Markov chain $\Xmp$. If $\Xmp$ satisfies the lumping property, then $(S_k)_{k\in\mathbb{N}}$ is a Markov chain with transition matrix $Q=(q_{i,j})_{i,j\in[\ell\,]}$  given by 
  \begin{equation}
  \label{eq:q_dep_on_m}
  q_{i,j} = \sum_{y\in \mathcal{Z}_j} m_{x,y}
  \end{equation}
  for arbitrary $x\in \mathcal{Z}_i$.
\end{thm}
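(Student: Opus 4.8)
The plan is to assemble the claim directly from the two results that precede it, namely Proposition~\ref{thm:lump_leads_to_MC} and the equivalent characterization of the lumping property in Lemma~\ref{lem: aux_lump}. Proposition~\ref{thm:lump_leads_to_MC} already guarantees that, under the lumping property, $(S_k)_{k\in\mathbb{N}}$ is a homogeneous Markov chain; hence it possesses a well-defined transition matrix $Q=(q_{i,j})_{i,j\in[\ell\,]}$ with $q_{i,j}=\mathbb{P}(S_{k+1}=j\mid S_k=i)$, the value being independent of $k$ by homogeneity for any $k$ with $\mathbb{P}(S_k=i)>0$. The only remaining task is to express these entries through the entries $m_{x,y}$ of $M$, and the bridge between the two viewpoints is exactly the identity \eqref{eq:equiv_lump} of Lemma~\ref{lem: aux_lump}.

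First I would fix $i,j\in[\ell\,]$ and a representative $x\in\mathcal{Z}_i$ with $\mathbb{P}(X_k=x)>0$, observing that the relevant events coincide, $\{S_k=i\}=\{X_k\in\mathcal{Z}_i\}$ and $\{S_{k+1}=j\}=\{X_{k+1}\in\mathcal{Z}_j\}$, by the very definition $S_k=\Vert X_k\Vert_1$. Applying \eqref{eq:equiv_lump} then yields
\[
q_{i,j}=\mathbb{P}(S_{k+1}=j\mid S_k=i)=\mathbb{P}(S_{k+1}=j\mid X_k=x).
\]
Next I would unfold the right-hand side by decomposing the event $\{S_{k+1}=j\}=\{X_{k+1}\in\mathcal{Z}_j\}$ into the disjoint singletons $\{X_{k+1}=y\}$ for $y\in\mathcal{Z}_j$, so that by finite additivity of the conditional probability
\[
\mathbb{P}(S_{k+1}=j\mid X_k=x)=\sum_{y\in\mathcal{Z}_j}\mathbb{P}(X_{k+1}=y\mid X_k=x)=\sum_{y\in\mathcal{Z}_j}m_{x,y},
\]
which is precisely the asserted formula.

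The step that needs genuine care — and which I expect to be the main obstacle — is justifying that the formula holds for an \emph{arbitrary} $x\in\mathcal{Z}_i$, i.e. that the row-block sum $\sum_{y\in\mathcal{Z}_j}m_{x,y}$ does not depend on which representative of $\mathcal{Z}_i$ is chosen. For representatives charged with positive probability this is exactly the content of the lumping property (Definition~\ref{def:lump}), which forces $\mathbb{P}(S_{k+1}=j\mid X_k=x)$ to agree across all $x\in\mathcal{Z}_i$ with $\mathbb{P}(X_k=x)>0$, while homogeneity removes the dependence on $k$ and keeps the conditional probabilities consistent across time points. The only delicate point is the treatment of states never visited with positive probability, for which the conditional expressions are undefined; I would handle this either by restricting the claim to reachable states (the statistically relevant ones) or by noting that the value of $q_{i,j}$ produced by the reachable representatives already determines the law of $(S_k)_{k\in\mathbb{N}}$, so any unreachable row of $M$ may be assigned the same block sum without affecting the transition matrix $Q$.
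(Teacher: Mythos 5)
Your proposal is correct and takes essentially the same route as the paper, which states Theorem~\ref{thm:q_dep_on_m} as an immediate consequence of Proposition~\ref{thm:lump_leads_to_MC} and Lemma~\ref{lem: aux_lump}; you simply make that assembly explicit, via \eqref{eq:equiv_lump} and the finite-additivity decomposition of $\{S_{k+1}=j\}$ over the singletons $\{X_{k+1}=y\}$, $y\in\mathcal{Z}_j$. Your closing remark on zero-probability representatives is a sensible reading of the phrase ``arbitrary $x\in\mathcal{Z}_i$'' that the paper leaves implicit.
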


\begin{rem}\label{rem:par_lum}
 If $\Xmp$ satisfies the lumping property, then the number of free parameters of the corresponding transition matrix $M=(m_{x,y})_{x,y\in\{0,1\}^\ell}$ is reduced to $2^\ell(2^\ell-1)-\ell(2^\ell-1-\ell)$. 
 This can be justified as follows: For any $i\in [\ell]$ choose a representative $x_i\in \mathcal{Z}_i$. Then, by \eqref{eq:q_dep_on_m}, we have for any $x\in \mathcal{Z}_i\setminus \{x_i\}$ and for any $j\in \{1, \dots, \ell\}$ 
 that\footnote{We leave $j=0$ out here, since for any Markov chain, whether it satisfies the lumping property or not, rows sum to $1$ and therefore
 	$
 	\sum_{y\in \mathcal{Z}_0} m_{x,y} = 1 - \sum_{j=1}^{\ell} \sum_{y\in \mathcal{Z}_j} m_{x,y} \, .
 	$
 }
 \[
 	\sum_{y\in \mathcal{Z}_j} m_{x,y} = \sum_{y\in \mathcal{Z}_j} m_{x_i,y},
 \]
 which yields
 \[
  m_{x,x_j} = \sum_{y\in\mathcal{Z}_j} m_{x_i,y} - \sum_{y\in\mathcal{Z}_j,\,y\not=x_j} m_{x,y}.
 \]
 Hence $\ell \sum_{i=0}^\ell (\vert \mathcal{Z}_i \vert-1) = \ell (2^\ell-\ell-1)$ entries can be expressed in terms of other entries of the transition matrix. This means that the lumping property reduces the number of free parameters by $\ell(2^\ell - \ell-1)$
 compared to the $2^\ell(2^{\ell}-1)$ free entries of a transition matrix for the general case. However, the number of parameters remains of the order $\mathcal{O}(2^{2\ell})$.
\end{rem}

For illustrative purposes we consider the reduction of the parameters in the case $\ell=2$.
\begin{example}
  Let $\ell=2$ and $M$ be the transition matrix of $\Xmp$. Then $M$ can be parameterized by
  \begin{equation}\label{eq:m_2x2}
  M=\begin{pmatrix}
  m_{(0,0),(0,0)} & m_{(0,0),(1,0)} & m_{(0,0),(0,1)}&m_{(0,0),(1,1)}\\
  m_{(1,0),(0,0)} & m_{(1,0),(1,0)} & m_{(1,0),(0,1)}&m_{(1,0),(1,1)}\\
  m_{(0,1),(0,0)} & m_{(0,1),(1,0)} & m_{(0,1),(0,1)}&m_{(0,1),(1,1)}\\
  m_{(1,1),(0,0)} & m_{(1,1),(1,0)} & m_{(1,1),(0,1)}&m_{(1,1),(1,1)}
  \end{pmatrix},\end{equation}
  with $m_{ x,(1,1) }=1-\sum_{y\neq (1,1)}m_{x,y}$ for all $x\in\{0,1\}^\ell$. According to Theorem \ref{thm:q_dep_on_m}, if we assume the lumping property is satisfied, then we have the following extra conditions
  \begin{align*}
  m_{(1,0),(0,0)} &= m_{(0,1),(0,0)}\\
  m_{(1,0),(1,0)}+m_{(1,0),(0,1)} &= m_{(0,1),(1,0)}+m_{(0,1),(0,1)}\,\
  \end{align*}
  which reduces the number of free entries from $12$ to $10$.
\end{example}

\subsection{Permutation invariance}
\label{sec:perm_inv}

In this section we introduce permutation invariance of a vector Markov chain $\Xmp$ and discuss its relationship to the lumping property and its consequences.

\begin{defi}[Permutation invariance] \label{def:perm_inv}
  We call a vector Markov chain $\Xmp$ \emph{permutation invariant} if for any $k\in\N$, $x,y\in\{0,1\}^\ell$, and any permutation matrix $P\in\{0,1\}^{\ell\times\ell}$ holds
  \[\mathbb{P}(X_{k+1}=y\mid X_k=x)=\mathbb{P}(X_{k+1}=P y\mid X_k= P x).\]
\end{defi}

This condition constrains clustered interactions, in particular, since any interaction must have the same effect on all signals, which only allows for clustered behavior among all signals, not subsets of them. Furthermore, we show that it implies the lumping property. We also discuss the number of free entries or parameters which determine the transition matrix of a Markov chain with this property. Note that permutation invariance means that relabeling of the coordinates of the vector Markov chain $\Xmp$ does not change conditional distributions. The following proposition is proven in Appendix~\ref{suppl_sec: proof_pi_lump}.

\begin{prop} \label{prop:pi_lump}
  If $\Xmp$ is permutation invariant, it satisfies the lumping property.
\end{prop}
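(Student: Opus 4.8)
The plan is to reduce the lumping property to a one-line consequence of permutation invariance, exploiting that any two states of equal $1$-norm differ only by a relabeling of coordinates. Fix $k\in\mathbb{N}$, indices $i,j\in[\ell\,]$ and states $x,y\in\mathcal{Z}_i$ with $\mathbb{P}(X_k=x)\cdot\mathbb{P}(X_k=y)>0$; the goal is the identity $\mathbb{P}(S_{k+1}=j\mid X_k=x)=\mathbb{P}(S_{k+1}=j\mid X_k=y)$ demanded by Definition~\ref{def:lump}. First I would observe that, since $\Vert x\Vert_1=\Vert y\Vert_1=i$, the two $0/1$-vectors carry exactly the same number of ones, so there is a permutation matrix $P\in\{0,1\}^{\ell\times\ell}$ with $Px=y$. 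This single $P$ is the only object the argument needs, and the positivity hypothesis $\mathbb{P}(X_k=x)\cdot\mathbb{P}(X_k=y)>0$ guarantees that all conditional probabilities appearing below (conditioned on either $X_k=x$ or $X_k=Px=y$) are well defined.

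Next I would expand the event $\{S_{k+1}=j\}=\{X_{k+1}\in\mathcal{Z}_j\}$ into its atoms and push the permutation through each term. Writing
\[
\mathbb{P}(S_{k+1}=j\mid X_k=x)=\sum_{z\in\mathcal{Z}_j}\mathbb{P}(X_{k+1}=z\mid X_k=x)
\]
and applying Definition~\ref{def:perm_inv} with $P$ to each summand turns $\mathbb{P}(X_{k+1}=z\mid X_k=x)$ into $\mathbb{P}(X_{k+1}=Pz\mid X_k=Px)=\mathbb{P}(X_{k+1}=Pz\mid X_k=y)$, the last equality using $Px=y$. The remaining step is a reindexing: a coordinate permutation preserves the number of nonzero entries, so $z\mapsto Pz$ restricts to a bijection of $\mathcal{Z}_j$ onto itself, and substituting $w=Pz$ leaves the summation range unchanged, giving
\[
\sum_{z\in\mathcal{Z}_j}\mathbb{P}(X_{k+1}=Pz\mid X_k=y)=\sum_{w\in\mathcal{Z}_j}\mathbb{P}(X_{k+1}=w\mid X_k=y)=\mathbb{P}(S_{k+1}=j\mid X_k=y),
\]
which is exactly the asserted equality and hence establishes the lumping property.

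I expect no substantive obstacle: the computation is routine once the permutation $P$ is produced. The two points that genuinely carry the argument, and which I would state carefully, are (i) that the stability of $\mathcal{Z}_j$ under $z\mapsto Pz$ is what legitimizes the reindexing (this is the real content), and (ii) that permutation invariance is invoked only on states of positive probability, so the conditional probabilities are meaningful throughout. The main risk is purely bookkeeping—keeping the roles of $z$ and $Pz$ straight and verifying norm-preservation of $P$—rather than any conceptual difficulty.
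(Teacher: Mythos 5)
Your proof is correct and follows essentially the same route as the paper's: construct a permutation matrix $P$ with $Px=y$ from the equality of $1$-norms, expand $\mathbb{P}(S_{k+1}=j\mid X_k=x)$ as a sum over $\mathcal{Z}_j$, apply permutation invariance termwise, and reindex using that $P$ maps $\mathcal{Z}_j$ bijectively onto itself. If anything, you are slightly more explicit than the paper about the reindexing step and the role of the positivity hypothesis, which the paper's proof leaves implicit.
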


The converse of Proposition \ref{prop:pi_lump} is in general not true, thus permutation invariance can be considered as a stronger condition than the lumping property. However, it is more accessible in the sense that it is easier to verify and has a more direct interpretation.
Moreover, permutation invariance leads to a considerable reduction of the number of parameters of the transition matrix of the Markov chain $\Xmp$. For the proof of the next result we refer to Appendix~\ref{suppl_sec: proof_pi_lump}.

\begin{prop} \label{prop:perm-inv-param}
  The transition matrix $M$ of a permutation invariant Markov chain $\Xmp$ is determined by $\ell(\ell+1)(\ell+5)/6$ parameters.
\end{prop}

It is particularly beneficial that the permutation invariance reduces the number of parameters of the transition matrix
from an exponential number in $\ell$ to the order $\mathcal{O}(\ell^3)$. Thus, it has significantly fewer free entries/independent parameters which eases estimation of each parameter significantly. For illustrative purposes we consider $\ell=2$ in the following example.
\begin{example}
  \label{exa_cont}
  Let $\ell=2$ and $M$ be the transition matrix of $\Xmp$. Then $M$ can be parameterized as in \eqref{eq:m_2x2}. If we have permutation invariance we obtain the extra conditions
  \begin{align*}
  m_{(0,0),(1,0)} &= m_{(0,0),(0,1)} &
  m_{(1,0),(0,0)} &= m_{(0,1),(0,0)}\\
  m_{(1,0),(1,0)} &= m_{(0,1),(0,1)} &
  m_{(1,0),(0,1)} &= m_{(0,1), (1,0)}\\
  m_{(1,1),(1,0)} &= m_{(1,1), (0,1)},
  \end{align*}
  which reduce the number of free entries from 12 to 7. Note that $m_{(1,0),(1,1)} = m_{(0,1), (1,1)}$ follows from row normalization.
\end{example}

\begin{rem}\label{rem:par_pi}
  For all $\ell \ge 2$ we have $\ell(\ell+1)(\ell+5)/6 > \ell(\ell+1)$. Therefore, the transition matrix of a permutation invariant Markov chain $\Xmp$ roughly has about $\ell/6$ times more free entries/parameters than the corresponding transition matrix of the resulting sum Markov chain $(S_k)_{k \in \mathbb{N}}$. Having our ion channel interpretation in mind this shows that permutation invariance is not sufficient for the transition probabilities of the individual channel currents to be fully determined by the transition probabilities of the sum current.
\end{rem}

\subsection{Chung-Kennedy model}
\label{subsec: CK-model}

We present the Chung-Kennedy model, abbreviated as CK model, following \cite{chung1996} by introducing two complementary examples of permutation invariant vector Markov chains, where a convex combination of both yields to the CK model. Note that by the permutation invariance those Markov chains also satisfy the lumping property.

\begin{example}[Fully coupled case]
	\label{ex: FC-case}
  For $a,b\in \{0,1\}$ let $\vec{a} := (a,\dots,a)^T \in\{0,1\}^\ell$ and $\vec{b}=(b,\dots,b)^T\in \{0,1\}^\ell$. Define the transition matrix $M^{\rm(FC)}=(m^{\rm(FC)}_{x,y})_{x,y\in\{0,1\}^\ell}$
  by
  \[
  m^{\rm(FC)}_{x,y} := 
  \begin{cases}
  \mathbb{P}(X^{(1)}_{k+1}=b\mid X^{(1)}_{k}=a) &\mbox{for } x = \vec{a}, \, y = \vec{b} \\
  0.5 &\mbox{for } x\not\in \{\vec{0},\vec{1} \},\, y\in \{\vec{0},\vec{1} \}\\
  0 & \mbox{for }y\not\in \{\vec{0},\vec{1} \}
  \end{cases}
  \]
  where $x=(x^{(1)},\dots,x^{(\ell)})^T$ and $y=(y^{(1)},\dots,y^{(\ell)})^T$ with $x,y\in\{0,1\}^\ell$.
  Then, for a Markov chain $\Xmp$ with transition matrix $M^{\rm(FC)}$ one obtains for any $i\in[\ell\,]$, that
  \[
  (X_k^{(i)})_{k\in\mathbb{N}} = (X_k^{(1)})_{k\in\mathbb{N}}.
  \]
  Therefore, except for the initial state, the Markov chain $\Xmp$ is completely determined by $(X_k^{(1)})_{k\in\mathbb{N}}$, where the entries of $X_k$ are just copies of $X_k^{(1)}$ and thus $X_k\in\{\vec{0},\vec{1}\}$ for any $k\in\mathbb{N}$.
  We refer to this scenario as the fully coupled case. Either $X_k\in \mathcal{Z}_0$ or $X_k\in\mathcal{Z}_\ell$ for any $k\in\mathbb{N}$, and therefore, $S_k\in\{0,\ell\}$. 
  In this case, permutation invariance follows trivially.
\end{example}
\begin{example}[Uncoupled case] \label{ex:uncoupled_case}
  Let the Markov chains $(X_k^{(1)})_{k\in\mathbb{N}_0},\dots,(X_k^{(\ell)})_{k\in\mathbb{N}_0}$ be independent identically distributed, such that each transition matrix is specified by $\lambda,\eta \in (0,1)$ with
  \[
  \mathbb{P}(X_{k+1}^{(i)}=0\mid X_k^{(i)}=0) := \lambda,\qquad
  \mathbb{P}(X_{k+1}^{(i)}=1\mid X_k^{(i)}=1) := \eta
  \]
  for all $i\in[\ell\,]$. Thus, the entries of the vector process $\Xmp$ are independent. We refer to this scenario as the uncoupled case.	
  For the transition matrix, denoted by $M^{\rm(UC)}=(m_{x,y}^{\rm(UC)})_{x,y\in\{0,1\}^{\ell}}$, of the Markov chain $\Xmp$ we have
  \[
  m^{\rm(UC)}_{x,y} = \lambda^{\vert \{i\colon x^{(i)}=y^{(i)}=0  \} \vert} (1-\lambda)^{\vert \{i\colon x^{(i)}=0,\,y^{(i)}=1  \} \vert}
  \eta^{\vert \{i\colon x^{(i)}=y^{(i)}=1  \} \vert} (1-\eta)^{\vert \{i\colon x^{(i)}=1,\,y^{(i)}=0  \} \vert},
  \]
  where $x=(x^{(1)},\dots,x^{(\ell)})^T$, $y=(y^{(1)},\dots,y^{(\ell)})^T$ with $x,y\in\{0,1\}^\ell$, and the cardinality of a set $A$ is denoted by $\vert A \vert$. For any permutation matrix 
  $P \in\{0,1\}^{\ell\times\ell}$
  note that 
  \begin{equation} 	\label{eq:permut_inv}
  m^{\rm(UC)}_{x,y} = m^{\rm(UC)}_{P x,P y},
  \end{equation}
  such that the permutation invariance property holds.
\end{example}

The Markov chains of the previous examples are complementary to each other in the sense that in the coupled case there is total dependence within the single signals and in the uncoupled case there is total independence between the signals. The idea of the CK model is to consider a Markov chain model with convex combination of the transition matrices of the former examples:

\begin{defi}[CK model, see \citep{chung1996}] \label{ex:lin-inter}
  For $\kappa\in[0,1]$, suppose that the transition matrix $M^{\rm(CK)}=(m^{\rm(CK)}_{x,y})_{x,y\in\{0,1\}^{\ell}}$ of $\Xmp$, is given by
  \[
  M^{\rm(CK)}= \kappa M^{\rm(FC)} + (1-\kappa) M^{\rm(UC)},
  \]
  where $M^{\rm(FC)}$ and $M^{\rm(UC)}$ denote the transition matrices of the previous examples.
\end{defi}

\begin{rem}
  By the fact that the permutation invariance property holds in the examples above we have for any $x,y\in \mathcal{Z}_i$ and permutation matrix $P\in\{0,1\}^{\ell\times \ell}$ for a Markov chain $\Xmp$ within the CK model that
  \begin{align*}
  \mathbb{P}(X_{k+1}=y\mid X_k=x) 
  & = m^{\rm(CK)}_{x,y} = \kappa m^{\rm(FC)}_{x,y} + (1-\kappa) m^{\rm(UC)}_{x,y}
  = \kappa m^{\rm(FC)}_{P x,P y} + (1-\kappa) m^{\rm(UC)}_{P x,P y}\\
  & = m^{\rm(CK)}_{Px,Py} = \mathbb{P}(X_{k+1}=P y\mid X_k= P x).
  \end{align*}
  Thus, the Markov chain $\Xmp$ with transition matrix $M^{\rm(CK)}$ is also permutation invariant.
\end{rem}

The vector Markov chain within the CK model is able to cover both, fully coupled and uncoupled independent behavior. The parameter $\kappa$ allows a balancing between both scenarios. In the field of ion channel research, it is the main competitor to the vector norm dependent model that we introduce now.

\subsection{Vector norm dependency}
\label{subsec: VND}

In the following a more sophisticated model of an aggregated state dependency of a Markov chain $\Xmp$ which satisfies the lumping property is suggested.

\begin{defi}[Vector Norm Dependency] \label{def:vnd}
  A vector Markov chain $\Xmp$ on $\{0,1\}^\ell$ with transition matrix $M^{\rm (VND)}=(m^{\rm(VND)}_{x,y})_{x,y\in\{0,1\}^\ell}$ is called vector norm dependent (VND) if for all $i\in\{1,\dots,\ell\}$, $r\in [\ell\,]$, $k\in\mathbb{N}_0$ and $b\in \{0,1\}$, the expression
  \begin{align}\label{eq:vnd_const_map}
  \mathbb{P}\left(X_{k+1}^{(i)}=b\middle| X_{k}^{(i)}=b,\Vert X_k \Vert_1 = r	\right)
  \end{align}
  is independent of $i$ and $k$, and
  \begin{align}\label{eq:m_vnd_def}
  m^{\rm(VND)}_{x,y} = \prod_{i=1}^{\ell} \mathbb{P}\left( X_{k+1}^{(i)}=y^{(i)} \middle| X_k^{(i)} = x^{(i)}, \Vert X_k \Vert_1 = \Vert x \Vert_1 \right),
  \end{align}
  for any $x=(x^{(1)},\dots,x^{(\ell)})^T$ and $y=(y^{(1)},\dots,y^{(\ell)})^T$ with $x,y\in\{0,1\}^\ell$. 
\end{defi}

A vector norm dependency structure as introduced in Definition \ref{def:vnd} refers to the fact that the entries of $X_{k+1}$ of the vector Markov chain $\Xmp$ might depend on $\Vert X_k \Vert_1$, that is, a transition from the $i$-th entry $X_k^{(i)}$ to $X_{k+1}^{(i)}$ is allowed to depend in an explicit way on 
$\Vert X_k \Vert_1$. In our ion channel application, this means that the probability of every open channel to close and every closed channel to open may depend on the number of open channels.
\begin{rem} \label{rem:param_vnd}
 For a vector norm dependent Markov chain with transition matrix $M^{\rm (VND)}=(m^{\rm(VND)}_{x,y})_{x,y\in\{0,1\}^\ell}$, equation \eqref{eq:vnd_const_map} says that for all $i,j\in\{1,\dots,\ell\}$, $r\in [\ell\,]$, $k\in\mathbb{N}_0$ and $b\in \{0,1\}$ the following holds
  \begin{align*}
  \mathbb{P}\left(X_{k+1}^{(i)}=b\middle| X_{k}^{(i)}=b,\Vert X_k \Vert_1 = r	\right) 
  & =
  \mathbb{P}\left(X_{k+1}^{(j)}=b\middle| X_{k}^{(j)}=b,\Vert X_k \Vert_1 = r	\right). 
  \end{align*}	
  Observe that within the VND Markov chain model the transition matrix $M^{\rm (VND)}$ is determined by $2\ell$ parameters, $\lambda_j$ and $\eta_{j+1}$, $j=0,\ldots,\ell-1$ which are given by
  \begin{align}
  \label{eq:trans_p_r}
  \lambda_j & := \mathbb{P}\left(	X_{k+1}^{(i)} = 0\middle| X_{k}^{(i)}=0,\Vert X_k \Vert_1 = j	\right),	\\
  \label{eq:trans_q_r}
  \eta_{j+1} & := \mathbb{P}\left(	X_{k+1}^{(i)} = 1\middle| X_{k}^{(i)}=1,\Vert X_k \Vert_1 = j+1	\right), 
  \end{align}
 	since, for $r\in\{0,\dots,\ell-1\}$ it follows from equation \eqref{eq:m_vnd_def} that
  \begin{align}\label{eq:m_vnd_formula}
  m^{\rm(VND)}_{x,y} &  = 
  \lambda_r^{\vert\{ i \colon x^{(i)}=y^{(i)}=0  \}\vert} 
  (1-\lambda_r)^{\vert\{ i \colon x^{(i)}=0, y^{(i)}=1  \}\vert} \\
  \notag
  & \qquad \qquad
  \times \eta_{r+1}^{\vert\{ i \colon x^{(i)}=y^{(i)}=1  \}\vert}
  (1-\eta_{r+1})^{\vert\{ i \colon x^{(i)}=1, y^{(i)}=0  \}\vert},
  \end{align}
  where $x=(x^{(1)},\dots,x^{(\ell)})^T$ and $y=(y^{(1)},\dots,y^{(\ell)})^T$ with $x,y\in\{0,1\}^\ell$.
  Thus, the entries of $M^{(\rm VND)}$ are determined by the $2\ell$ numbers $\lambda_0,\dots,\lambda_{\ell-1}, \eta_1,\dots,\eta_{\ell}$ and, in this sense, the number of parameters is $2\ell$. This is, for instance, in contrast to Example~\ref{ex:uncoupled_case} where only two parameters determine the transition matrix $M^{\rm(UC)}$. Moreover, 
  the $2\ell$ parameters have a direct interpretation, as the probability of each signal to stay in state ``\,$0$'' or ``\,$1$'' given the previous value of the signal and the total number of signals in state ``\,$1$''. 
\end{rem}

From the observation derived in equation \eqref{eq:m_vnd_formula} 
we immediately get the following result.
\begin{prop}  \label{prop:vnd_perm_inv}
  A vector norm dependent Markov chain $\Xmp$ 
  is permutation invariant.
\end{prop}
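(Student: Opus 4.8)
The plan is to verify permutation invariance directly from the closed form \eqref{eq:m_vnd_formula}: it suffices to show that $m^{\rm(VND)}_{x,y}=m^{\rm(VND)}_{Px,Py}$ for every permutation matrix $P\in\{0,1\}^{\ell\times\ell}$ and all $x,y\in\{0,1\}^\ell$, since the chain is homogeneous and hence $\mathbb{P}(X_{k+1}=y\mid X_k=x)=m^{\rm(VND)}_{x,y}$. First I would fix $P$ and let $\sigma$ be the associated permutation of $\{1,\dots,\ell\}$, so that multiplication by $P$ acts on coordinates via $(Px)^{(i)}=x^{(\sigma(i))}$ for some permutation $\sigma$; the whole argument then rests on two elementary invariances.

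The first invariance is that $\Vert Px\Vert_1=\Vert x\Vert_1=:r$, because $P$ only reorders the entries of $x$. Consequently the index $r$ that selects the parameters $\lambda_r$ and $\eta_{r+1}$ in \eqref{eq:m_vnd_formula} is identical for the pairs $(x,y)$ and $(Px,Py)$, so exactly the same bases appear in both products. The second invariance concerns the four exponents, each of which is the cardinality of an index set of the form $\{i\colon x^{(i)}=a,\,y^{(i)}=b\}$ for a fixed bit pair $(a,b)\in\{0,1\}^2$. Since $(Px)^{(i)}=a$ and $(Py)^{(i)}=b$ hold precisely when $x^{(\sigma(i))}=a$ and $y^{(\sigma(i))}=b$, the map $i\mapsto\sigma(i)$ is a bijection between $\{i\colon (Px)^{(i)}=a,\,(Py)^{(i)}=b\}$ and $\{i\colon x^{(i)}=a,\,y^{(i)}=b\}$; hence all four cardinalities are unchanged.

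Combining the two invariances, every factor of $m^{\rm(VND)}_{Px,Py}$ coincides with the corresponding factor of $m^{\rm(VND)}_{x,y}$, which yields the desired identity and therefore the condition of Definition \ref{def:perm_inv}. I do not expect any genuine obstacle: once \eqref{eq:m_vnd_formula} is in hand the result is essentially bookkeeping about how $P$ permutes the four index sets. The only point requiring a line of care is the range of $r$. For $r=\Vert x\Vert_1\in\{0,\dots,\ell-1\}$ one applies \eqref{eq:m_vnd_formula} verbatim, whereas the extreme case $r=\ell$ (and likewise $r=0$) is immediate, since then $x$ equals the all-ones (respectively all-zeros) vector, which is fixed by every $P$; alternatively one may argue from the product representation \eqref{eq:m_vnd_def}, whose factors are well defined for every $r\in[\ell\,]$ and depend only on $x^{(i)}$, $y^{(i)}$ and $\Vert x\Vert_1$, so that the same bijection argument applies uniformly.
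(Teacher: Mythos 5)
Your proof is correct and takes essentially the same route as the paper: the paper asserts that permutation invariance follows immediately from the closed form \eqref{eq:m_vnd_formula}, and your argument is exactly the bookkeeping behind that claim, namely that $\Vert Px\Vert_1=\Vert x\Vert_1$ fixes the parameters and that simultaneous permutation leaves the four index-set cardinalities unchanged. Your additional care with the boundary case $r=\ell$ (and $r=0$) via the product representation \eqref{eq:m_vnd_def} is a detail the paper leaves implicit, and it is handled correctly.
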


Now we provide a characterization of norm dependent Markov chains in terms of elementary and easily interpretable properties. To this end we introduce conditional independence of a vector Markov chain.

\begin{defi}[Conditional independence] \label{def:cond_ind}
  We call $\Xmp$ \emph{conditionally independent} w.r.t. the past, if
  \begin{align*}
  \mathbb{P}(X_{k+1}=y\mid X_k=x)&=\prod_{i=1}^\ell \mathbb{P}(X_{k+1}^{(i)}=y^{(i)}\mid X_k=x),
  \end{align*}
  for any $k\in\mathbb{N}_0$ and for all $x,y\in\set$ where $y=(y^{(1)},\dots,y^{(\ell)})^T$.
\end{defi}

In our application, this condition translates to saying that, given the state at the previous time point, all channels behave statistically independent. A dependency between channels is therefore only possible through the state at the previous time point, so interaction between channels always occurs with temporal delay. 

\begin{rem}
	\label{rem: FC_not_cond_ind}
	 Note that the property of the former definition is not always satisfied. For example,
	 the Markov chain $(X_k)_{k\in\mathbb{N}_0}$ with transition matrix $M^{(\text{FC})}$ that has been introduced in Example~\ref{ex: FC-case} 
	 is not conditionally independent. This can be seen by considering the case $\ell=2$. For $x = (1,0)^T$ we have
		\begin{align*}
		\mathbb{P}(X_{k+1}=(0,0)^T\mid X_k=x) = \mathbb{P}(X_{k+1}=(1,1)^T\mid X_k=x)&=0.5, \\
		\mathbb{P}(X_{k+1}=(0,1)^T\mid X_k=x) = \mathbb{P}(X_{k+1}=(1,0)^T\mid X_k=x)&=0.
		\end{align*}
		Now assume that there are probabilities
		\begin{align*}
		p_1 :=&~ \mathbb{P}(X_{k+1}^{(1)}=1\mid X_k=x), & p_2 :=&~ \mathbb{P}(X_{k+1}^{(1)}=0\mid X_k=x),\\
		p_3 :=&~ \mathbb{P}(X_{k+1}^{(2)}=1\mid X_k=x), & p_4 :=&~ \mathbb{P}(X_{k+1}^{(2)}=0\mid X_k=x) \, .
		\end{align*}
		Then, conditional independence requires the constraints $p_1 p_3 = p_2 p_4 = 0.5$ and $p_1 p_4 = p_2 p_3 = 0$, which are incompatible.
\end{rem}
Interestingly,
we can characterize vector norm dependency in terms of conditional independence and permutation invariance. This is formulated in the following result, which is proven in Appendix~\ref{suppl_sec:proof_app_lem}.

\begin{thm}[Characterization of VND Markov chain] 
  \label{thm:vnd-char}
  For a vector Mar\-kov chain $\Xmp$ assume that the initial distribution is permutation invariant\footnote{The distribution of $X_0$ is permutation invariant if $\mathbb{P}(X_0=y) = \mathbb{P}(X_0=Py)$ for any $y\in \{0,1\}^\ell$ and any permutation matrix $P\in \{0,1\}^{\ell\times\ell}$.}.
  Then, the following statements are equivalent:
  \begin{enumerate}
    \item The Markov chain $\Xmp$ is vector norm dependent;
    \item The Markov chain $\Xmp$ is permutation invariant and conditionally independent.
  \end{enumerate}
\end{thm}

Due to this theorem the VND model is only applicable if permutation invariance and conditional independence are sensible assumptions. For example, within the ion channel setting this is very useful, since both of these assumptions are readily interpretable: Permutation invariance can typically be taken for granted if only a single type of ion channel is present in the membrane. Conditional independence implies that no direct interaction between ion channels occurs, but interactions can be mediated by a collective effect, which can be interpreted as an effective potential. In particular, if the VND model provides a good fit, this can be seen as an indication that direct channel interactions are not required to explain channel conductivity in a system. Instead indirect channel interaction may be mediated by a large scale property like ion concentration or electrostatic potential.

The CK model satisfies permutation invariance as shown above, but since the fully coupled Markov model violates conditional independence, see Remark~\ref{rem: FC_not_cond_ind}, the CK model also violates conditional independence and is not included in the class of VND models, unless $\kappa=0$, which is the uncoupled case.

We stress that within the setting of a VND Markov chain a great variety of emitter interactions can be modeled. Here we identify two archetypes of such interactions that we also exhibit in our ion channel data analysis.
\begin{defi} \label{def:coop-comp}
	We call a VND Markov chain $\Xmp$ \emph{cooperative} if for all $a \in \{1, \dots, \ell-1\}$ it	holds
	\begin{equation} \label{eq: cooperative}
	\frac{\mathbb{P}\left(	X_{k+1}^{(i)} = 1\middle| X_{k}^{(i)}=0,\Vert X_k \Vert_1 = a	\right)}{\mathbb{P}\left(	X_{k+1}^{(i)} = 1\middle| X_{k}^{(i)}=0,\Vert X_k \Vert_1 = 0\right)}  
	> 1,\quad 
	\frac{\mathbb{P}\left(	X_{k+1}^{(i)} = 0\middle| X_{k}^{(i)}=1,\Vert X_k \Vert_1 = a	\right)}{\mathbb{P}\left(	X_{k+1}^{(i)} = 0\middle| X_{k}^{(i)}=1,\Vert X_k \Vert_1 = \ell\right)}  
	> 1,
	\end{equation}
	and we call it \emph{competitive} if for all $a \in \{1, \dots, \ell-1\}$ it holds
	\begin{equation} \label{eq: competitive}
	\frac{\mathbb{P}\left(	X_{k+1}^{(i)} = 1\middle| X_{k}^{(i)}=0,\Vert X_k \Vert_1 = 0\right)}{\mathbb{P}\left(	X_{k+1}^{(i)} = 1\middle| X_{k}^{(i)}=0,\Vert X_k \Vert_1 = a	\right)}  
	> 1,\quad 
	\frac{\mathbb{P}\left(	X_{k+1}^{(i)} = 0\middle| X_{k}^{(i)}=1,\Vert X_k \Vert_1 = a+1	\right)}{\mathbb{P}\left(	X_{k+1}^{(i)} = 0\middle| X_{k}^{(i)}=1,\Vert X_k \Vert_1 = 1\right)}  
	> 1.
	\end{equation}
\end{defi}

By the fact that $\Xmp$ is a VND Markov chain we have that the ratios in \eqref{eq: cooperative} and \eqref{eq: competitive} do not depend on $i\in\{1,\dots,\ell\}$ and $k\in\mathbb{N}_0$. Moreover, note that \eqref{eq: cooperative} is equivalent to $\frac{1-\lambda_{a}}{1- \lambda_0}>1$ and $\frac{1-\eta_a}{1-\eta_\ell}>1$ for all $a\in\{1,\dots,\ell-1\}$ as well as that \eqref{eq: competitive} is equivalent to $\frac{1- \lambda_0}{1-\lambda_{a}}>1$ and $\frac{1-\eta_{a+1}}{1-\eta_1}>1$ for all $a\in\{1,\dots,\ell-1\}$.
In a cooperative VND Markov chain model, signals switch to ``\,$1$'' more easily, as soon as at least one signal is ``\,$1$'' and switch to ``$0$'' more easily as soon as at least one signal is ``\,$0$''. The result is that the states with either all signals ``\,$0$'' or all signals ``\,$1$'' are more likely visited than for independent signals. In a competitive model, a signal is more likely to switch to ``\,$1$'' while no other is ``\,$1$'' and it is more likely to switch to ``\,$0$'' if more than one signal is ``\,$1$''. In this case, the state with one signal being ``\,$1$'' is more likely visited than for independent signals. It is obvious that these two properties are mutually exclusive. In Appendix~\ref{suppl_sec:comp-coop-def} we discuss generalized terminology regarding cooperative and competitive behavior.
Note that even the generalized notions of cooperative and competitive gating are not exhaustive, i.e., there are Markov chains, even within the class of VND chains, with more complex coupling behavior which elude such a simple classification.

We end this section by providing a representation of the transition matrix of the sum Markov chain $(S_k)_{k\in\mathbb{N}}$ based on a vector norm dependent Markov chain $\Xmp$. For the proof of the following result we refer to the end of Appendix~\ref{suppl_sec:proof_app_lem}.
\begin{prop}
  \label{prop:repres_Q_vnd}
  Given a vector norm dependent Markov chain $\Xmp$ with transition matrix $M^{(\rm VND)}$ determined by the parameters $\lambda_0,\dots,\lambda_{\ell-1},\eta_1,\dots,\eta_{\ell}\in [0,1]$, see Remark~\ref{rem:param_vnd}. Then,
  the transition matrix $Q=(q_{i,j})_{i,j\in[\ell\,]}$ of the corresponding sum Markov chain $(S_k)_{k\in\mathbb{N}}$ is given by
  \begin{align*}
  q_{i,j} := \sum_{r=\max\{0,i-j\}}^{\min\{i,\ell-j\}} {i\choose r} {\ell-i\choose j-i+r} \eta_{i}^{i-r}(1-\eta_{i})^{r} \lambda_i^{\ell-j-r}(1-\lambda_i)^{j-i+r},
  \end{align*}
  for any $i,j\in[\ell\,]$, where for completeness we set $\eta_0:=1$ and $\lambda_\ell:=1$.
\end{prop}

\subsection{Inverse lumping and identifiability}
\label{subsec: inv_lump_ident}

Let $\Xmp$ be a vector Markov chain on $\{0,1\}^\ell$ with transition matrix $M=(m_{x,y})_{x,y\in \{0,1\}^\ell}$. Suppose that $\Xmp$ satisfies the lumping property and therefore the corresponding sum process $(S_k)_{k\in\mathbb{N}}$ is a Markov chain with transition matrix $Q=(q_{i,j})_{i,j\in [\ell\,]}$, see Theorem~\ref{thm:q_dep_on_m}. The transition matrix $Q$ is completely determined by $M$, see equation~\eqref{eq:q_dep_on_m}. This fact has been already used in \cite{chung1996} leading to the CK model of Definition~\ref{ex:lin-inter}. However, we are interested in reversing the perspective. Namely, we ask the following question: \emph{Can we uniquely recover $M$ from $Q$?} We refer to this as the \emph{inverse lumping problem}. 

Without further conditions on $\Xmp$ this is not possible. Indeed the lumping property is not sufficient and even if we have permutation invariance we know from Proposition~\ref{prop:perm-inv-param} that the number of parameters determining $M$ is $\ell(\ell +1)(\ell+5)/6$, which is larger than the number of parameters that determine $Q$. This shows that the property of permutation invariance is not sufficient for an affirmative answer to the inverse lumping problem. We illustrate this in the case $\ell=2$.
\begin{example}
  Consider the same setting as in Example~\ref{exa_cont}, i.e.,
  let $\Xmp$ be a permutation invariant Markov chain and $Q$ be the transition matrix of $(S_k)_{k \in \mathbb{N}}$. Then, by Theorem~\ref{thm:q_dep_on_m} we have
  \begin{equation*}
  Q=\begin{pmatrix}
  m_{(0,0),(0,0)} & 2m_{(0,0),(1,0)} &m_{(0,0),(1,1)}\\
  m_{(1,0),(0,0)} & m_{(1,0),(1,0)} + m_{(1,0),(0,1)}&m_{(1,0),(1,1)}\\
  m_{(1,1),(0,0)} & 2m_{(1,1),(1,0)} &m_{(1,1),(1,1)}
  \end{pmatrix}.\end{equation*}
  Note that $Q$ is parameterized with $7$ parameters. In particular, knowing $Q$ the transition matrix $M$ cannot be fully recovered, since we are not able to identify $m_{(1,0),(1,0)}$ and $m_{(1,0),(0,1)}$.
\end{example}
This indicates the need to add structural assumptions on $\Xmp$. Suppose now that $\Xmp$ is permutation invariant and conditionally independent. Then, for suitable initial distributions, this leads to a vector norm dependent Markov chain, see Theorem~\ref{thm:vnd-char}. The number of parameters determining $M$ in this setting is $2\ell$, see Remark~\ref{rem:param_vnd}, which is smaller or equal than $\ell(\ell+1)$. Therefore, a solution of the inverse lumping problem is not immediately excluded. 
Moreover, the transition matrix $Q^{\rm(VND)}$ can be explicitly stated in terms of the parameters $\lambda_j$ and $\eta_{j+1}$ for $j=0,\dots,\ell-1$ of the VND Markov chain, see Proposition \ref{prop:repres_Q_vnd}. By equation \eqref{eq:m_vnd_formula} also $M^{(\rm VND)}$ can be explicitly stated in terms of the parameters $\lambda_j$ and $\eta_{j+1}$. This enables one to extract knowledge about $M^{(\rm VND)}$ from $Q^{\rm(VND)}$ to the extent that the sum process determines the parameters of $M^{\rm (VND)}$ uniquely. For the proof of the following result we refer to Appendix~\ref{sec:proof-params}.

\begin{thm}	[Inverse lumping identifiability for VND Markov chains] \label{thm:params-unique}
  Let $(S_k)_{k\in\mathbb{N}}$ be a sum Markov chain with transition matrix $Q^{\rm(VND)}$ on $[\ell\,]$ based on a vector norm dependent Markov chain $\Xmp$ with transition matrix $M^{\rm(VND)}$. If $\ell$ is odd, then the parameters $\lambda_j$ and $\eta_{j+1}$ with $j=0,\dots,\ell-1$ defining $M^{(\rm VND)}$ are uniquely determined by the entries of $Q^{\rm(VND)}$. If $\ell$ is even, the same holds true provided that $\lambda_{\ell/2} \geq 1-\eta_{\ell/2}$.
\end{thm}

Using equation \eqref{eq:m_vnd_def}, this means for a VND Markov chain we can recover the transition matrix $M^{\rm (VND)}$ from the transition matrix $Q^{\rm (VND)}$ of the corresponding sum process. This paves the way for estimating these parameters from data, which we address in the following section. In the setting of the previous theorem we can therefore give an affirmative answer to the question of the inverse lumping problem.

Let us emphasize that the VND modeling is flexible enough for describing the system behavior we are interested in and at the same time is specific enough so that we can recover the parameters uniquely from the transition matrix $Q^{\rm(VND)}$. This conclusion cannot be reached by the lumping property or even permutation invariance alone, since the number of free parameters of $M$ is larger than the number of entries of $Q$ (see Remarks \ref{rem:par_lum} and \ref{rem:par_pi}), which leads to an underdetermined system.

\section{VND Hidden Markov model estimation} \label{sec:hmm-est}
We provide a short review on homogeneous HMM, define basic concepts and explain our HMM setting. Additionally, we introduce a customized Baum-Welch algorithm, which is adjusted to account for our specific modeling.

\subsection{VND Hidden Markov model and vector norm dependency}
Assume that
the measurable space $(\Omega,\mathcal{F})$ is equipped with  
a family of probability measures $(\mathbb{P}_\theta)_{\theta\in\Theta}$, where $\Theta \subseteq \mathbb{R}^{d} $ for some $d\in \mathbb{N}$ denotes an underlying parameter set. Let $\ell\in\mathbb{N}$ (e.g., corresponding to the number of channels) and let $(S_k,Y_k)_{k\in\mathbb{N}}$ be a bivariate stochastic process defined on $(\Omega,\mathcal{F})$, where $(S_k)_{k\in\mathbb{N}}$ is a Markov chain on the finite state space $[\ell\,]$ and $(Y_k)_{k\in\mathbb{N}}$, conditioned on $(S_k)_{k\in\mathbb{N}}$, is a real-valued, independent sequence of random variables. Taking the parameterized family of probability distributions into account, this leads to a parameterized homogeneous HMM $(S_k,Y_k)_{k\in\mathbb{N}}$. Given observed data $y_1,\dots,y_K\in \mathbb{R}$ for $K\in\mathbb{N}$, i.e., realizations of $Y_1,\dots,Y_K$, the goal is to determine the ``true'' underlying parameter $\theta^*\in \Theta$. In this context we call $(S_k)_{k\in\mathbb{N}}$ the hidden Markov chain and the distribution of $Y_k$ given $S_k$ the \emph{emission distribution}. Furthermore, assume that the parameter set can be represented as $\Theta = \Theta_H \times \Theta_E$, where $\Theta_H$ corresponds to the part of the parameters which come from the hidden Markov chain and $\Theta_E$ denotes the part which comes from the emission distribution.

Suppose now that $(S_k)_{k\in\mathbb{N}}$ is the sum process based on a vector norm dependent Markov chain $(X_k)_{k\in\mathbb{N}_0}$ on $\{0,1\}^\ell$. 
Set $\Theta_H = [0,1]^{2\ell}$, such that an element $\theta_H\in\Theta_H$ is given by $\theta_H=(\lambda_0,\dots,\lambda_{\ell-1},\eta_1,\dots,\eta_\ell)$, where $\lambda_r$ and $\eta_{r+1}$ with $r=0,\dots,\ell-1$ determine the transition matrix $Q^{\text{(VND)}}(\theta_H)=(q^{\text{(VND)}}_{i,j}(\theta_H))_{i,j\in[\ell\,]}$ of
the Markov chain $(S_k)_{k\in\mathbb{N}}$ as in Proposition~\ref{prop:repres_Q_vnd}. In formulas, for any $k\in\mathbb{N}$ and any $\theta\in\Theta$ we have
\[
\mathbb{P}_\theta (S_{k+1}=s_{k+1} \mid S_{k}=s_{k},\ldots,S_1=s_{1}) 
= \mathbb{P}_\theta (S_{k+1}=s_{k+1}| S_{k} = s_{k})
= q^{(\text{VND})}_{s_k,s_{k+1}}(\theta_H),
\]
where $(s_1,\dots,s_{k+1})\in [\ell]^{k+1}$ and $\theta=(\theta_H,\theta_E)\in \Theta$. In particular, note that the transition matrix $Q^{\text{(VND)}}(\theta_H)$ does not depend on $k$, which means that the hidden Markov chain is homogeneous.

Additionally, for $\theta_E \in \Theta_E$ we assume that the emission distribution is determined by a strictly positive probability Lebesgue density function  $g_{\theta_E}\colon \mathbb{R}\times [\ell\,] \to (0,\infty)$, such that
\begin{align*}
\mathbb{P}_\theta (Y_k\in A\mid S_k = s_{k},\ldots,S_1=s_{1}, Y_{k-1} = y_{k-1},\ldots,Y_1=y_{1}) & = \mathbb{P}_\theta(Y_k\in A \mid S_{k} = s_{k}) \\
& = \int_A g_{\theta_E}(y,s_k) \dint y,
\end{align*}
for any $k \in\mathbb{N}\setminus\{1\}$, $(s_1,\dots,s_k)^T \in[\ell\,]^k$, $(y_1,\dots,y_{k-1})^T\in \mathbb{R}^{k-1}$, any Borel set $A\subseteq \mathbb{R}$ and $\theta=(\theta_H,\theta_E)\in \Theta$. Let us emphasize here that we consider only homogeneous emission measures, i.e., $g_{\theta_E}$ does not depend on $k$ (in contrast to the considerations in \cite{diehn2019}). As a concrete setting, we provide a Gaussian standard scenario.

\begin{example}
  \label{ex:Gauss_emission}
  With $\mathcal{N}(m,v^2)$ and $m\in\mathbb{R}$ as well as $v>0$ we denote the normal distribution with mean $m$ and variance $v^2$.
  Let $\Theta_E=\mathbb{R}^2\times (0,\infty)^{\ell+1}$ and $\theta_E\in \Theta_E$ with $\theta_E = (\mu,\nu,\sigma_0,\dots,\sigma_{\ell})$. For  $k\in\mathbb{N}$ 
  consider
  \[
  Y_k = \mu + S_k \nu + \sigma_{S_k} \xi
  \] 
  with $\xi\sim\mathcal{N}(0,1)$, i.e, $Y_k\sim \mathcal{N}(\mu+S_k\nu,\sigma^2_{S_k})$. Therefore, for any $j\in[\ell\,]$ we have
  \[
  g_{\theta_{E}}(y,j) = \frac{1}{\sqrt{2\pi \sigma_j^2}} \exp\left(-\frac{\vert y-\mu-j\nu \vert^2}{2\sigma_{j}^2}\right). 
  \]
\end{example}

Assume that $\pi=(\pi^{(0)},\dots,\pi^{(\ell)})\in [0,1]^{[\ell\,]}$ is the probability vector that provides the initial distribution of the Markov chain $(S_k)_{k\in\mathbb{N}}$. By convention, let $\mathbb{P}_\theta(S_1=s \mid S_0):=\pi^{(s)}$ for any $s\in [\ell\,]$.
For simplicity, we assume $\pi$ to be known. 
Furthermore, to shorten the notation for a finite sequence $z_1,\dots,z_K$ we write $z_{1:K}$. (Thus, the event $\{S_{1:K} = s_{1:K}\}$ coincides with $\{S_1 = s_1, \ldots, S_k = s_k\}$.)
Then, 
the probability of 
$Y_{1:K}$ from the HMM being in a Borel set $A\subseteq \mathbb{R}^K$ is determined by
\begin{align}\label{eq:likelihood}
\notag
\mathbb{P}_\theta(S_{1:K} = s_{1:K}, Y_{1:K} \in A) 
= & \int_A \prod_{k=1}^K g_{\theta_E}(y_k,s_k) \mathbb{P}_\theta (S_k = s_k | S_{k-1}=s_{k - 1}) 
\dint y_{1:K}
\\
=&  \int_A g_{\theta_E}(y_1,s_1) 
\pi^{(s_1)} \times \prod_{k=2}^K g_{\theta_E}(y_k,s_k) q^{(\text{VND})}_{s_{k-1},s_k}
\dint y_{1:K}. 
\end{align}

\subsection{Parameter estimation and customized Baum-Welch algorithm}\label{sec:BW_alg}

For parameter estimation within the previously described HMM setting we modify the well-known  Baum-Welch algorithm, whose convergence was discussed in \cite{baum1970}. It is based on the expectation maximization (EM) paradigm, which means that an expectation computation step is followed by a maximization step. In order to discuss the algorithm we provide for $s_{1:K}\in [\ell\,]^K$ and $y_{1:K}\in \mathbb{R}^K$ the log-likelihood function $\theta \mapsto \mathcal{L}(\theta;s_{1:K},y_{1:K})$ within our parameterization. From \eqref{eq:likelihood} we can deduce for $\theta=(\theta_H,\theta_E)\in \Theta$ that 
\begin{align}\label{eq:like2}
\mathcal{L}(\theta,s_{1:K}, y_{1:K}) &= \log\pi^{(s_1)} + \sum_{k=1}^{K-1}\log q^{(\text{VND})}_{s_k,s_{k+1}}(\theta_H) + \sum_{k=1}^K\log g_{\theta_E}(y_k,s_k).
\end{align}
Note that we can simplify the standard algorithm by using the parameterized components of the transition matrix $Q^{(\text{VND})}(\theta_H)$ in the second term of the log-likelihood function. This is in contrast to settings where all $(\ell+1)\ell$ entries of possible transition matrices determine the parameters of the hidden Markov chain.

This modified Baum-Welch algorithm leads to a sequence $(\theta_t)_{t\in\mathbb{N}}\subset \Theta$ and consists of the following steps which are performed with increasing iteration index $t\in \mathbb{N}$
until a convergence criterion is reached:
\begin{enumerate}
  \item For all $k=1,\dots,K$ compute the so-called univariate and bivariate filtering distributions, given by $s\mapsto \mathbb{P}_{\theta_t}(S_k=s\mid Y_{1:K}=y_{1:K})$
  and $(r,s)\mapsto \mathbb{P}_{\theta_t}(S_{k}=r,S_{k+1}=s\mid Y_{1:K}=y_{1:K})$ with $r,s\in [\ell\,]$, by a forward-backward algorithm using the parameter $\theta_t\in \Theta$ determined in the previous iteration.
  \item Expectation step: Given $\theta_t$, and of course $y_{1:K}$, define $\theta \mapsto f(\theta;\theta_t,y_{1:K})$  with $\theta=(\theta_H,\theta_E)\in \Theta$ by
  \begin{align*}
  f(\theta;\theta_t,y_{1:K})
  = & \sum_{k=1}^K \sum_{s_k\in [\ell\,]} \log g_{\theta_E}(y_k,s_k) \mathbb{P}_{\theta_t}(S_k=s_k\mid Y_{1:K}=y_{1:K})\\
  +\sum_{k=1}^{K-1}  \sum_{s_k,s_{k+1}\in [\ell\,]} & 
  \log q^{(\text{VND})}_{s_k,s_{k+1}}(\theta_H)
  \mathbb{P}_{\theta_t}(S_k=s_k,S_{k+1}=s_{k+1}\mid Y_{1:K}=y_{1:K}).	
  \end{align*}
  (We obtain this function by taking the expectation of \eqref{eq:like2} w.r.t. the distribution of $S_{1:K}$ given $Y_{1:K}=y_{1:K}$ under $\mathbb{P}_{\theta_t}$ and neglect multiplicative constants as well as terms that do not depend on $\theta$.)
  \item Maximization step: Compute $\theta_{t+1} := \text{arg}\max_{\theta\in \Theta} f(\theta;\theta_t,y_{1:K})$. 
\end{enumerate}

As mentioned above, in \cite{baum1970}, it is shown that the sequence $(\theta_t)_{t\in\mathbb{N}}\subset \Theta$ converges under weak assumptions to a local maximum of the likelihood. In particular, in the Gaussian emission distribution setting of Example~\ref{ex:Gauss_emission} and our VND modeling convergence is achieved. Since the MLE is asymptotically efficient, see \cite{BRR1998}, it achieves a better estimation accuracy than the proposed least-squares procedure in \cite{chung1996}. Integrating the parametric form in the expectation step has the benefit that in each iteration the parameters remain in the parameter space. In contrast to the classical Baum-Welch algorithm, we do not obtain a closed expression for $\theta_{t+1}$ in the third step, since the parametric form of $q^{(\text{VND})}_{s_{k},s_{k+1}}(\theta_{t+1})$ is convoluted (see Proposition~\ref{prop:repres_Q_vnd}). Instead, we propose to solve this maximization problem numerically.

It is not \textit{a priori} clear if the function $\theta \mapsto f(\theta,\theta_t,y_{1:K})$ is strictly unimodal. If multiple local minima exist, this would impede the numerical optimization. From \eqref{eq:like2} we observe that all terms except the VND parameter term $\sum_{k=1}^{K-1}\log q^{(\text{VND})}_{s_k,s_{k+1}}(\theta_H)$ are strictly concave in the Gaussian case and any case of emission densities with strictly concave likelihood. In Appendix~\ref{suppl_sec:max} we investigate the VND parameter term numerically for a model with $\ell=2$. As it can be seen, the likelihood is unimodal in $\lambda_0$ and $\eta_2$ but the joint likelihood for $\lambda_1$ and $\eta_1$ is bimodal. These two modes correspond exactly to the parameter ambiguity in Theorem \ref{thm:params-unique}. Thus, restricting to $\lambda_1\geq 1- \eta_1$, there are no local maxima which impede parameter estimation.

\subsection{Model selection}\label{sec:modelsel}

In many applications the exact number of emitters $\ell\in\N$ is unknown and is often determined heuristically by visual inspection\footnote{Visual inspection refers to counting the number of visible distinct current levels within the given data.} or with some uncertainty by additional experiments and expert knowledge, as discussed in Section~\ref{subsec: intro_ion_channels}. Therefore, the question naturally emerges, whether the number of channels can be reliably estimated from the data. This estimation problem is an instance of model selection, a topic which in the general context of HMMs has been widely studied. It is known to be notoriously difficult as a reasonable balancing between model complexity and statistical precision has to be found, which in general depends on the unknown model itself. Commonly used methods are penalized likelihood criteria, such as the  Bayesian information criteria (BIC) and various cross validation approaches. A survey of multiple methods is given in \cite{celeux2008}.

Theoretical properties of these criteria are generically proven in the context of nested models. For some examples, see \citep{csiszar2000, gassiat2003, celeux2008, lehericy2019, yonekura2021}. 
Unfortunately, the VND Markov chain models for different numbers of emitters are not nested, but we still find the aforementioned model selection methodology useful in the context of our ion channel data, see Section~\ref{subsec: est_num_chan} for the performance of several of such criteria in our context. In particular, they return the largest number of open channels occurring in the data as proxy for $\ell$, thus apparently confirming the visual inspection.

In general, any reasonable estimation method of $\ell$, based solely on finitely many observations $y_1,\dots,y_K$, can only return an estimate of the maximal observed number of emitters. Especially in the case of competitive interaction, this is likely to be smaller than the actual number $\ell$. Already in the simplified setting of $\ell$ emitters that act identically state- and time-independently of each other, the estimation of $\ell$ from superpositional signals corresponds to the approximation of the number of trials $\ell$ in a binomial model with unknown success probability $p$ based on finitely many observations.
Here, it is known that without further prior information it is not possible to recover $\ell$ consistently when $p$ is too small \citep{Schmidt2021}. In our VND Markov chain models this deficiency leads to an underestimation of the true number of emitters. A simulation study illustrating this effect for parameters chosen according to the estimates from the ion channel data is presented in Appendix~\ref{suppl_sec: est_num_chan}. 
Notably, we found that the estimated parameters, even if $\ell$ is underestimated, still reliably identify the interaction archetypes presented in Definition~\ref{def:coop-comp}. In this sense, the estimated VND parameters are robust under the underestimation of the number of channels, see Section~\ref{subsec:robust}, which 
is of particular practical relevance.

\section{Application to ion channels} \label{sec:ion-channels}
Gating dynamics of ligand-gated ion channels vary depending on binding-ligand concentration (e.g. the cytosolic and/or intraorganelle luminal Ca${}^{2+}$ concentration). Here we model a set of multiple ion channels in an artificial lipid bilayer by a Gaussian emission HMM as explained in Example~\ref{ex:Gauss_emission}.
Our goal is to model the multiple channels in the bilayer as well as the dependencies between them. These dependencies are not necessarily caused by direct physical interaction of channels but may be mediated indirectly by joint environment factors, such as an overall increase in Ca${}^{2+}$ concentration on either side of the membrane.

\subsection{Data sets}\label{subsec:data_sets}
Our present application is in RyR2 channels, as introduced in Section~\ref{subsec: intro_ion_channels}. 
On these channels, extensive research about the gating mechanism at low levels of Ca${}^{2+}$ where only single channels are open at a time has been conducted. 
The next step is to investigate the dependencies between multiple channels since the probability of subcellular calcium release may depend significantly on those channel dependencies. Several investigations of local channel clustering have been conducted on living cells, cf. \citep{walker2014,walker2015}. While nonindependent gating has been reported for RyR2 channels in some studies, the question if and how cooperative gating occurs in subcellular channel clusters is not conclusively answered, cf. \citep{marx2001,laver2004,chen2009,walker2014,williams2018}.

We investigate two data sequences, which were measured subsequently on the same system of wild-type RyR2 channels in a synthetic lipid bilayer with different luminal (trans) concentrations of Ca${}^{2+}$ ions of 5 mM (data set 1) and 10 mM (data set 2) at a constant cytosolic (cis) Ca${}^{2+}$ concentration of 150 nM. Additionally, 53 mM Ba${}^{2+}$ was present on the trans side of the lipid bilayer and used as a principal charge carrier, since higher conductance of RyR2 for Ba${}^{2+}$ results in better signal-to-noise ratio.
Materials and methods are discussed in more detail in Appendix~\ref{suppl_sec:exp_meth}.

\subsection{Determining a suitable model}\label{subsec:det_suit_model}

By counting the number of current levels which are clearly distinguishable in the data, we can see at most three channels open at any one time, thus we determine $\ell=3$, i.e., we assume that three channels are present in the membrane. This is also supported by the formal model selection, see Section~\ref{subsec: est_num_chan}. Based on that we compare and investigate three models. The baseline model, against which we compare the two other models is the model of uncoupled independent channels (UC). Here, the sum Markov chain $(S_k)_{k\in\mathbb{N}}$ is based on $(X_k)_{k\in\mathbb{N}_0}$ considered in Example~\ref{ex:uncoupled_case} and the corresponding transition matrix, using shorthand notation ${\overline{\lambda}_j := (1- \lambda_j)}$ and $\overline{\eta}_j := (1- \eta_j)$, is given by
\begin{align*}
Q^{\textnormal{\rm(UC)}} := \begin{pmatrix}
\lambda_0^3 & 3\lambda_0^2\overline{\lambda}_0 & 3\lambda_0\overline{\lambda}_0^2 &\overline{\lambda}_0^3 \\
\lambda_0^2\overline{\eta}_1 & \lambda_0^2 \eta_1 + 2\lambda_0\overline{\lambda}_0\overline{\eta}_0 & 2\lambda_0\overline{\lambda}_0\eta_1 + \overline{\lambda}_0^2 \overline{\eta}_1 & \overline{\lambda}_0^2 \eta_1\\
\lambda_0\overline{\eta}_1^2 & 2\lambda_0\eta_1\overline{\eta}_1 + \overline{\lambda}_0\overline{\eta}_1^2 & \lambda_0\eta_1^2 + 2\overline{\lambda}_0\eta_1\overline{\eta}_1 & \overline{\lambda}_0 \eta_1^2\\
\overline{\eta}_1^3 & 3\eta_1\overline{\eta}_1^2 & 3\eta_1^2\overline{\eta}_1 & \eta_1^3
\end{pmatrix} \, .
\end{align*}
The other two models are the CK model from Definition~\ref{ex:lin-inter} by \cite{chung1996} and our VND model described in Section~\ref{sec:hmm-est}.
In the CK model,
the transition matrix of $(S_k)_{k\in\mathbb{N}}$  takes the form
\begin{align*}
Q^{\textnormal{\rm(CK)}} := (1- \kappa) Q^{\textnormal{\rm(UC)}} + \kappa \begin{pmatrix}
\lambda_0 & 0 & 0 & \overline{\lambda}_0 \\
1/2 & 0 & 0 & 1/2 \\
1/2 & 0 & 0 & 1/2 \\
\overline{\eta}_1 & 0 & 0 & \eta_1
\end{pmatrix},
\end{align*}
where additional to the parameters $\lambda_0$ and $\eta_1$ a coupling parameter $\kappa\in [0,1]$ appears.
In the VND model, the Markov chain $(S_k)_{k\in\mathbb{N}}$ is given by the sum process based on a vector norm dependent Markov chain, recall Definition \ref{def:vnd}. For $\ell=3$ its transition matrix is given by
\begin{align*}
Q^{\textnormal{\rm(VND)}} := \begin{pmatrix}
\lambda_0^3 & 3\lambda_0^2\overline{\lambda}_0 & 3\lambda_0\overline{\lambda}_0^2 &\overline{\lambda}_0^3 \\
\lambda_1^2\overline{\eta}_1 & \lambda_1^2 \eta_1 + 2\lambda_1\overline{\lambda}_1\overline{\eta}_1 & 2\lambda_1\overline{\lambda}_1\eta_1 + \overline{\lambda}_1^2 \overline{\eta}_1 & \overline{\lambda}_1^2 \eta_1\\
\lambda_2\overline{\eta}_2^2 & 2\lambda_2\eta_2\overline{\eta}_2 + \overline{\lambda}_2\overline{\eta}_2^2 & \lambda_2\eta_2^2 + 2\overline{\lambda}_2\eta_2\overline{\eta}_2 & \overline{\lambda}_2 \eta_2^2\\
\overline{\eta}_3^3 & 3\eta_3\overline{\eta}_3^2 & 3\eta_3^2\overline{\eta}_3 & \eta_3^3
\end{pmatrix} \, .
\end{align*}
The question naturally emerges, whether the CK or the VND model provide a better explanation for the data than the UC model.

\begin{figure}[h!]
  \centering
  \includegraphics[width=0.95\textwidth]{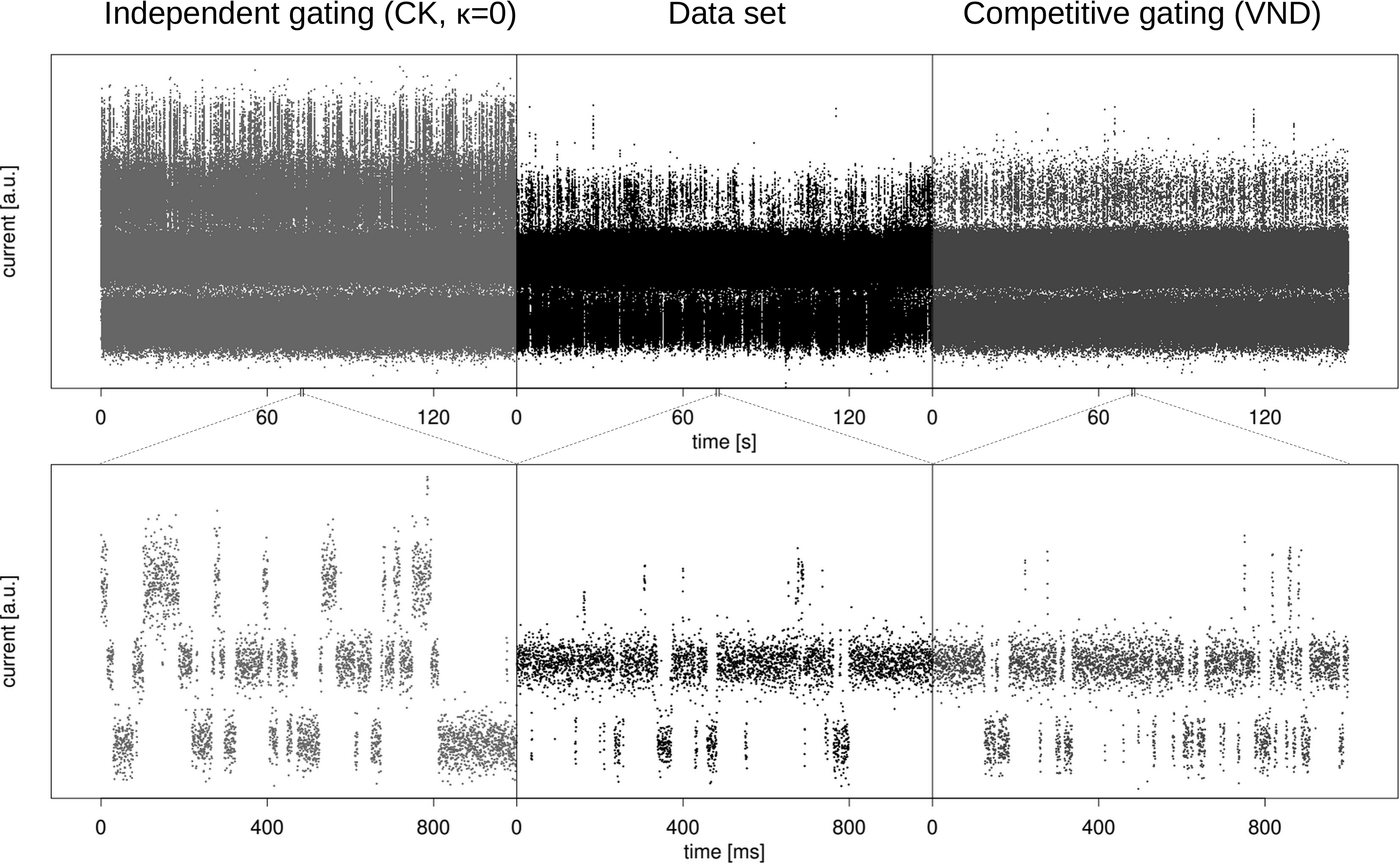}
  \caption{A current measurement on multiple channels in a synthetic lipid bilayer is compared to simulated data from two models at two scales. The zoom-ins show a time interval of 1 second length at the same randomly chosen time for all three trajectories and thus reflect typical behavior of the models and data at short time scales. The data is displayed in the middle, a simulated trace from the Chung-Kennedy model using the estimated parameter value $\widehat{\kappa}=0$, which amounts to independent channels, is displayed to the left and a simulated trace from the VND model with competitive gating is displayed on the right. The parameters used for the simulations are the estimated parameters from the data under the respective models. Due to signal filtering, the real data contain more intermediate values between the levels. It is obvious, especially from the zoom-in plots in the lower panel, that the competitive gating VND model reflects the channel gating behavior much more faithfully than the CK model presented in Definition~\ref{ex:lin-inter}, which predicts uncoupled channels. \label{fig:intro}}
\end{figure}

\begin{figure}[h!]
  \centering
  \subcaptionbox*{state $0$}[0.31\textwidth]{\includegraphics[width=0.3\textwidth]{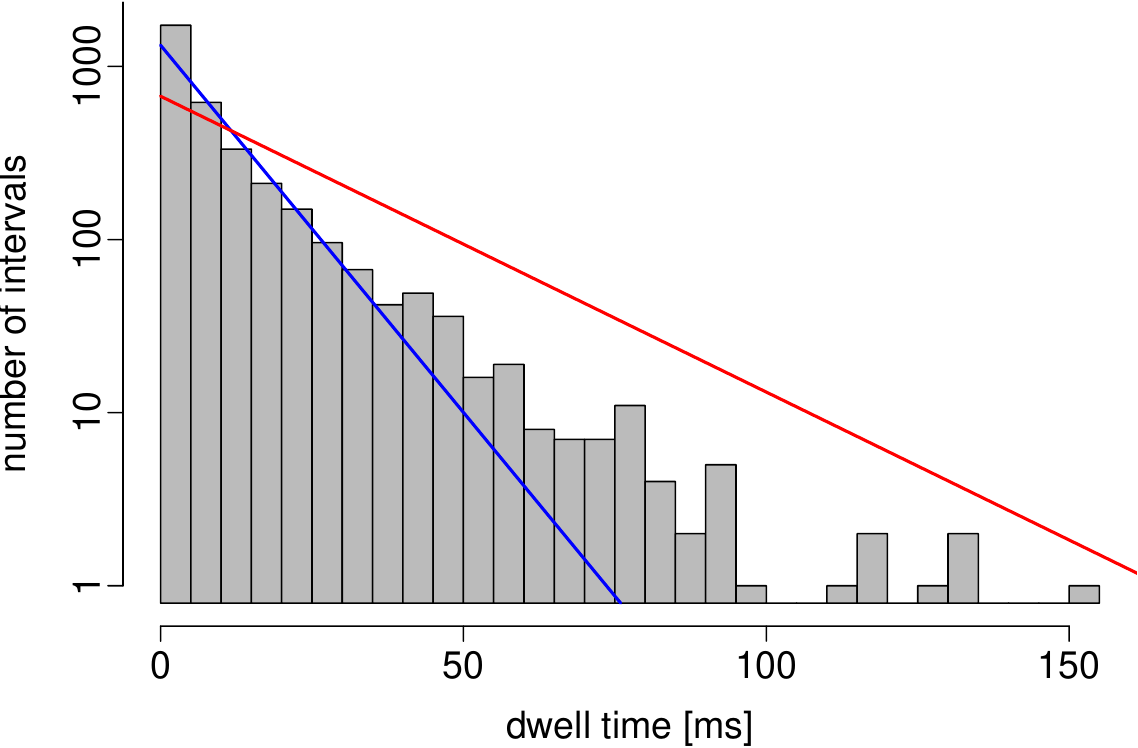}}
  \subcaptionbox*{state $1$}[0.31\textwidth]{\includegraphics[width=0.3\textwidth]{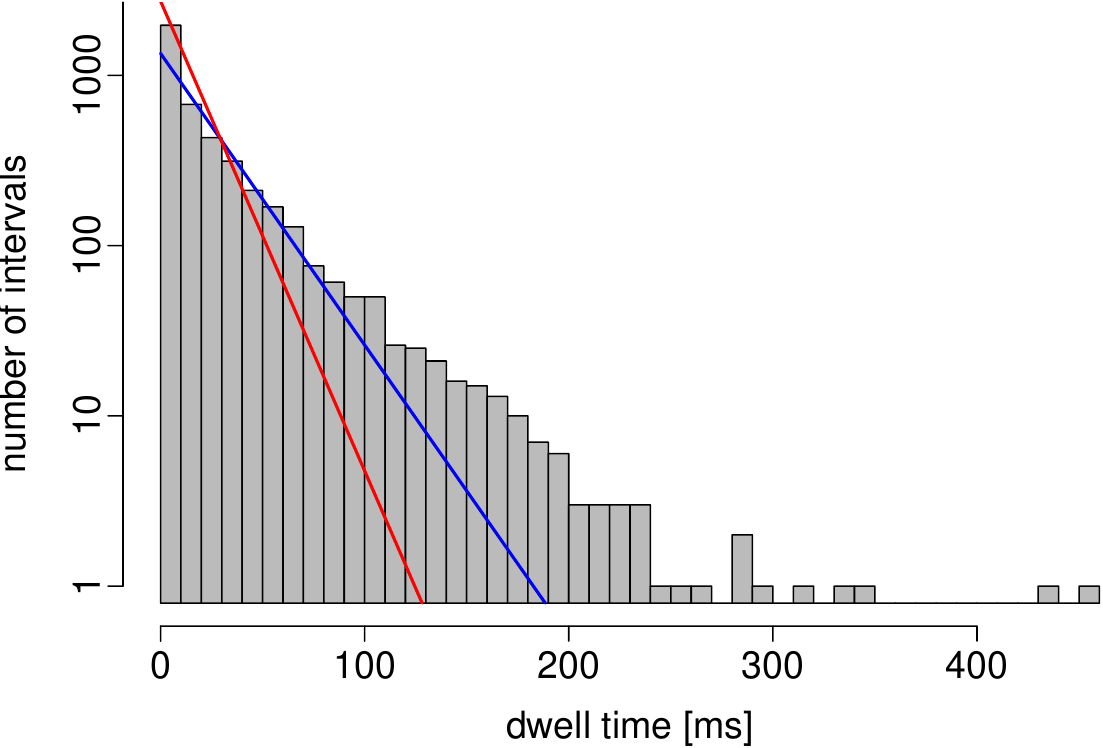}}
  \subcaptionbox*{state $2$}[0.31\textwidth]{\includegraphics[width=0.3\textwidth]{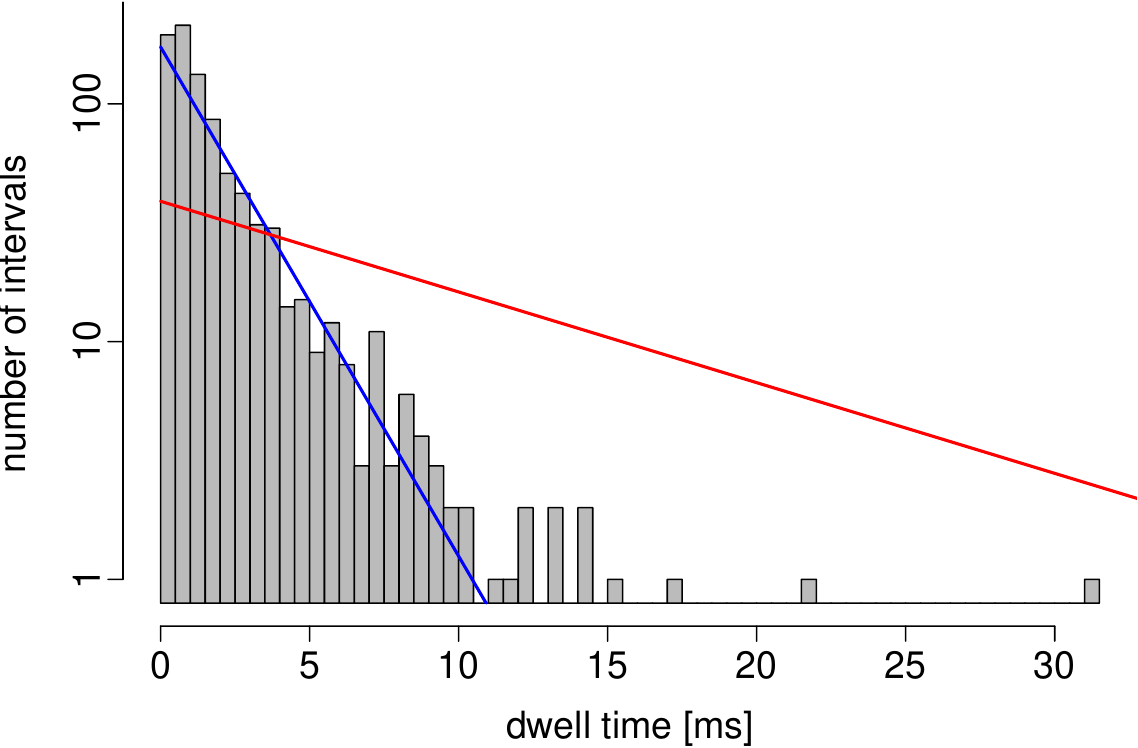}}
  \hspace*{0.31\textwidth}
  \caption{Dwell times in the three lower states compared with model predictions. State $3$ is visited so rarely that no meaningful histogram emerges. The blue line indicates the prediction from the estimated parameters of the VND model, the red curve corresponds to the estimate by the CK model. The histograms display dwell times extracted from the Viterbi path. Note that the vertical axis is logarithmic, which means that the excess of long dwell times over model predictions is less pronounced than it appears. One can clearly see that the predictions of the VND model fit the data much better than the predictions of the CK model. See Appendix~\ref{suppl_sec:dwell_time} for additional illustrations. \label{fig:dwell_times}}
\end{figure}

We provide a qualitative comparison of simulated traces from the two models with the data at two different time scales in Figure~\ref{fig:intro}, which indicates that the VND model visually appears to achieve a better modeling of the data. See also Figure~\ref{fig:plausible_single_traces} in Appendix~\ref{suppl_sec: Intro}. Comparing dwell time distributions from the estimated parameters from both models to the Viterbi path determined from the data in Figure~\ref{fig:dwell_times} also shows a much better fit for the VND model. For a quantitative comparison of the two models, we use the Bayesian information criterion.

\begin{table}
  \caption{Differences of Bayesian information criteria (BIC) between UC and CK model in the first row and UC and VND model in the second row for two ion channel data sequences, provided by the HRCG and described in Section~\ref{subsec:data_sets}. The symbols $\mathcal{L}^{\rm(CK)}$, $\mathcal{L}^{\rm(UC)}$ and $\mathcal{L}^{\rm(VND)}$ denote estimates of log-likelihood for the three models CK, UC, and VND, respectively. The sign of the BIC has been reversed such that larger values indicate a better fit than the UC model. The VND model clearly yields the highest score, i.e., smallest BIC, while the added parameter in the CK model does not achieve a sufficient increase in the likelihood to improve the BIC.
  	\label{tab:likelihood_ratio_with_BIC}}
  \begin{center}
    \fbox{%
      \begin{tabular}{l|c|c}
        negative BIC & Data Set 1 & Data Set 2\\
        \hline
        $2\left( \mathcal{L}^{\rm(CK)} - \mathcal{L}^{\rm(UC)} \right) - \ln(K)$    &    $-88.85$   &  $-42.73$ \\
        $2\left( \mathcal{L}^{\rm(VND)} - \mathcal{L}^{\rm(UC)} \right) - 4\ln(K)$  &   $6526.20$   & $7453.29$
      \end{tabular}
    }
  \end{center}
\end{table}

Table~\ref{tab:likelihood_ratio_with_BIC} shows that the CK model, which can only reveal clustered gating, does not yield an improvement over uncoupled channels, while the VND model clearly does, cf. also Figure~\ref{fig:intro}.

In summary, one can clearly see that the VND model describes the data much better than the uncoupled and the CK model, in accordance with the results of the BIC. In order to see whether the estimated parameters $\widehat{\lambda}_0, \widehat{\lambda}_1, \widehat{\lambda}_2, \widehat{\eta}_1, \widehat{\eta}_2$ and $\widehat{\eta}_3$ support the conclusion of competitive gating, c.f. Definition~\ref{def:coop-comp}, we investigate the ratios $\frac{1-\widehat\lambda_0}{1-\widehat\lambda_1}$, $\frac{1-\widehat\lambda_0}{1-\widehat\lambda_2}$, $\frac{1-\widehat\eta_2}{1-\widehat\eta_1}$ and $\frac{1-\widehat\eta_3}{1-\widehat\eta_1}$. For competitive gating, we expect all four of these quotients to be larger than $1$ since this would indicate that transitions into the state with one channel open are preferred relative to transitions out of this state. In turn, $\frac{1-\widehat\lambda_0}{1-\widehat\lambda_1}$, $\frac{1-\widehat\lambda_0}{1-\widehat\lambda_2}$, $\frac{1-\widehat\eta_3}{1-\widehat\eta_2}$ and $\frac{1-\widehat\eta_3}{1-\widehat\eta_1}$ being smaller than $1$ would indicate cooperative gating, see Definition~\ref{def:coop-comp}.

\begin{table}
  \caption{Estimated parameters for both data sets described in Section~\ref{subsec:data_sets}. 
    From the ratios $\frac{1-\widehat\lambda_0}{1-\widehat\lambda_1}$, $\frac{1-\widehat\lambda_0}{1-\widehat\lambda_2}$, $\frac{1-\widehat\eta_2}{1-\widehat\eta_1}$ and $\frac{1-\widehat\eta_3}{1-\widehat\eta_1}$, which are all much larger than $1$, we can conclude that the gating is competitive.}
  \begin{center}
    \begin{tabular}{l|c|c|c|c|c|c}
      & $1 - \widehat{\lambda}_0$ & $\frac{1-\widehat\lambda_0}{1-\widehat\lambda_1}$ & $\frac{1-\widehat\lambda_0}{1-\widehat\lambda_2}$ & $1 - \widehat{\eta}_1$ & $\frac{1-\widehat\eta_2}{1-\widehat\eta_1}$ & $\frac{1-\widehat\eta_3}{1-\widehat\eta_1}$ \\
      \hline
      Data Set 1 & 0.0082 & 7.800 & 10.590 & 0.0078 & 8.131 &  7.141 \\
      Data Set 2 & 0.0108 & 8.010 &  5.218 & 0.0047 & 9.835 & 12.664
    \end{tabular}
  \end{center}
  \label{tab:estimated_parameters}
\end{table}
In Table \ref{tab:estimated_parameters} we see that $\frac{1-\widehat\lambda_0}{1-\widehat\lambda_1} \gg 1$, $\frac{1-\widehat\lambda_0}{1-\widehat\lambda_2} \gg 1$, $\frac{1-\widehat\eta_2}{1-\widehat\eta_1} \gg 1$ and $\frac{1-\widehat\eta_3}{1-\widehat\eta_1} \gg 1$ for both data sets, which strongly indicates competitive gating. This finding appears to be compatible with the observations by \cite{PDNEFC2012}, which were interpreted in the sense that one channel can act as a ``driver'' for the other channels. In this hypothetical scenario, the ``driver channel'' would have a high probability of opening and low probability of closing, so it will be open most of the time. Other channels would only open while the driver channel is open, albeit with much lower opening probability and much higher closing probability, thus remaining open for much shorter time intervals.


\subsection{Estimating the number of channels}
\label{subsec: est_num_chan}

In Section~\ref{subsec:det_suit_model}, we have chosen $\ell=3$ by simply counting clearly distinguishable levels in the data. In this section, we aim to investigate model selection approaches to select $\ell$, i.e., to determine a methodical,
`data driven' estimate of the number of channels in the VND model. 
As already highlighted in Section \ref{sec:modelsel}, determining $\ell$ from the data is challenging since the entries of $Q^{\textnormal{\rm(VND,$\ell$)}}$ depend only very weakly on $\ell$, in analogy to estimating the number of trials within a binomial distribution with small success probability. In Figure~\ref{fig:number_of_channels} we present the BIC, along with the \emph{penalized maximum likelihood criterion} (PML), and the \emph{integrated complete likelihood criterion} (ICL), which are variants of the BIC using marginal likelihoods. 
We used those criteria and \emph{odd even half sampling} (OEHS) cross validation for the VND model with $\ell \in \{2, \dots, 20\}$ for the assessment of the number of channels.  To have all methods on the same scale, we calculate relative quantities
\begin{align*}
  \mathrm{BIC}(3) &- \mathrm{BIC}(\ell), & \mathrm{PML}(3) &- \mathrm{PML}(\ell), & \mathrm{ICL}(3) &- \mathrm{ICL}(\ell), & \mathrm{OEHS}(3) &- \mathrm{OEHS}(\ell) \, .
\end{align*}
In consequence, the displayed values for $\ell = 3$ are all zero and signs of the displayed values are such that better fits amount to higher values. 
Details to all methods can be found in \citep{celeux2008}.

\begin{figure}[h!]
  \centering
  \subcaptionbox{Data set 1}[0.45\textwidth]{\includegraphics[width=0.45\textwidth]{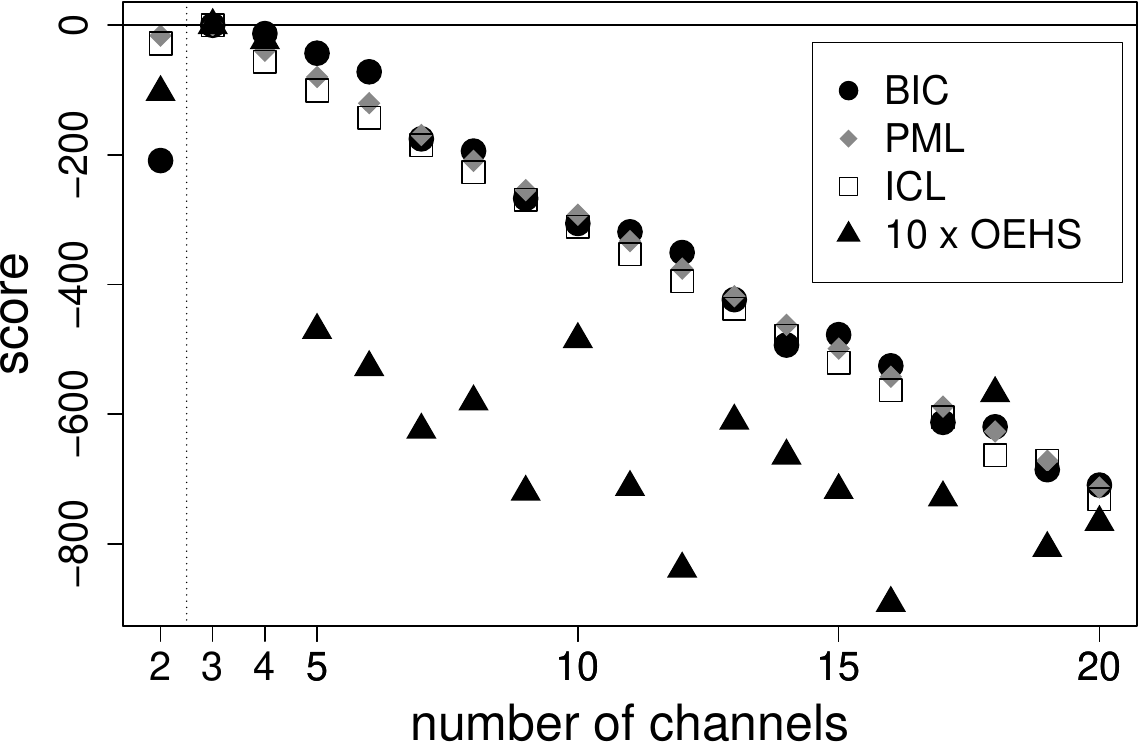}}
  \hspace*{0.02\textwidth}
  \subcaptionbox{Data set 2}[0.45\textwidth]{\includegraphics[width=0.45\textwidth]{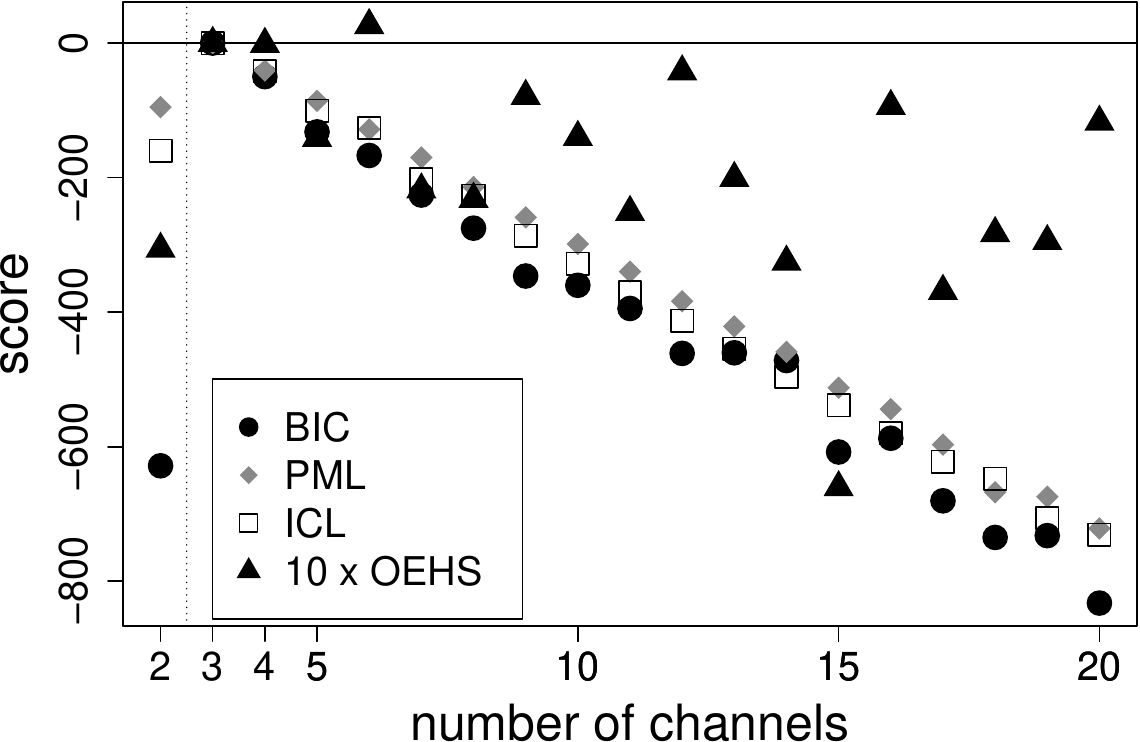}}
  \caption{Several model selection criteria for the VND model with $\ell \in \{2, \dots, 20\}$ applied to the data sets described in Section~\ref{subsec:data_sets} are presented. In order to have comparable absolute values, we display $\mathrm{BIC}(3) - \mathrm{BIC}(\ell)$, $\mathrm{PML}(3) - \mathrm{PML}(\ell)$, $\mathrm{ICL}(3) - \mathrm{ICL}(\ell)$ and $\mathrm{OEHS}(3) - \mathrm{OEHS}(\ell)$. The values for $\ell = 2$ were divided by $10$ so they do not dominate the vertical axis scaling. Values for the cross validation for $\ell \ge 3$ have been inflated by a factor of $10$ to be more easily distinguishable. For both data sets, one can conclude that the model with $\ell=3$ performs best.
  	 \label{fig:number_of_channels}}
\end{figure}

We can see in Figure~\ref{fig:number_of_channels} that the BIC-like measures clearly prefer the model with $\ell=3$ due to the penalty on additional parameters, in accordance with visual inspection. The cross validation results are much less striking but for data set 1 the $\ell=3$ model is still preferred. In the case of data set 2, the $\ell = 6$ model performs slightly better than $\ell = 3$, however in light of the BIC results, we would still suggest the $\ell = 3$ model in that case. This appears to confirm our choice of $\ell =3$ channels above. 
In Appendix~\ref{suppl_sec: est_num_chan} we show that the similarity of VND models for different values of $\ell$ makes determining the number of channels difficult if only at most $j \ll \ell$ channels are open at a time during the measurement and instead the number of channels is generally underestimated to be $\ell = j$.

\subsection{Robustness of results to number of channels}\label{subsec:robust}

In the present experimental system, the number of actively gating channels $\ell$ was estimated using an additional experiment. The Ca${}^{2+}$ concentration on the cis side was increased to 5 $\upmu$M and 1mM of adenosine triphosphate (ATP) was added to fully activate the channels. The maximally elicited transmembrane current was then divided by the single-channel current amplitude. Here, the number of experimentally detected channels results in $\ell= 15$. As the model selection criteria indicate $\ell = 3$, and $\ell=6$ for cross validation and data set 2, they underestimate the number of channels. As discussed in Section~\ref{sec:modelsel}, this situation is not unexpected, since model selection criteria typically estimate the number of channels $\ell$ in the VND model to be the largest number of channels open at a time. If only a smaller number of channels $j < \ell$ is open at a time during the measurement, the number of channels is thus likely to be underestimated. Therefore, we investigate the converse situation namely how robust results of the VND model are, if the number of channels is underestimated. Note that overestimating the number of channels rarely occurs in practice, if the number of levels of a single channel are approximately known.

In order to inspect a situation close to the experimental data, we simulate $1\,000\,000$ data points from systems with different numbers of channels $\ell$ in the UC model and the VND model with different sets of parameters which were chosen such that the highest number of channels open at the same time was $3$. Then we fit a VND model with $3$ channels to the data and inspect the estimated parameters for signs of competitive or cooperative gating, c.f. Definition~\ref{def:coop-comp}. In order to acquire variance estimates, $100$ repetitions were done for each set of parameters. We investigate the ratios $\frac{1-\widehat\eta_3}{1-\widehat\eta_1}$, $\frac{1-\widehat\eta_2}{1-\widehat\eta_1}$, $\frac{1-\widehat\lambda_0}{1-\widehat\lambda_2}$, and $\frac{1-\widehat\lambda_0}{1-\widehat\lambda_1}$, where we expect all of these ratios to be 
larger than $1$ in case of competitive gating since this would indicate that transitions into the state with one channel open are preferred relative to transitions out of this state. Again, we use the shorthand notation $\overline{\lambda} := 1- \lambda$ and $\overline{\eta} := 1- \eta$.

\begin{figure}[h!]
  \centering
  \subcaptionbox{Independent gating}[0.45\textwidth]{\includegraphics[width=0.45\textwidth]{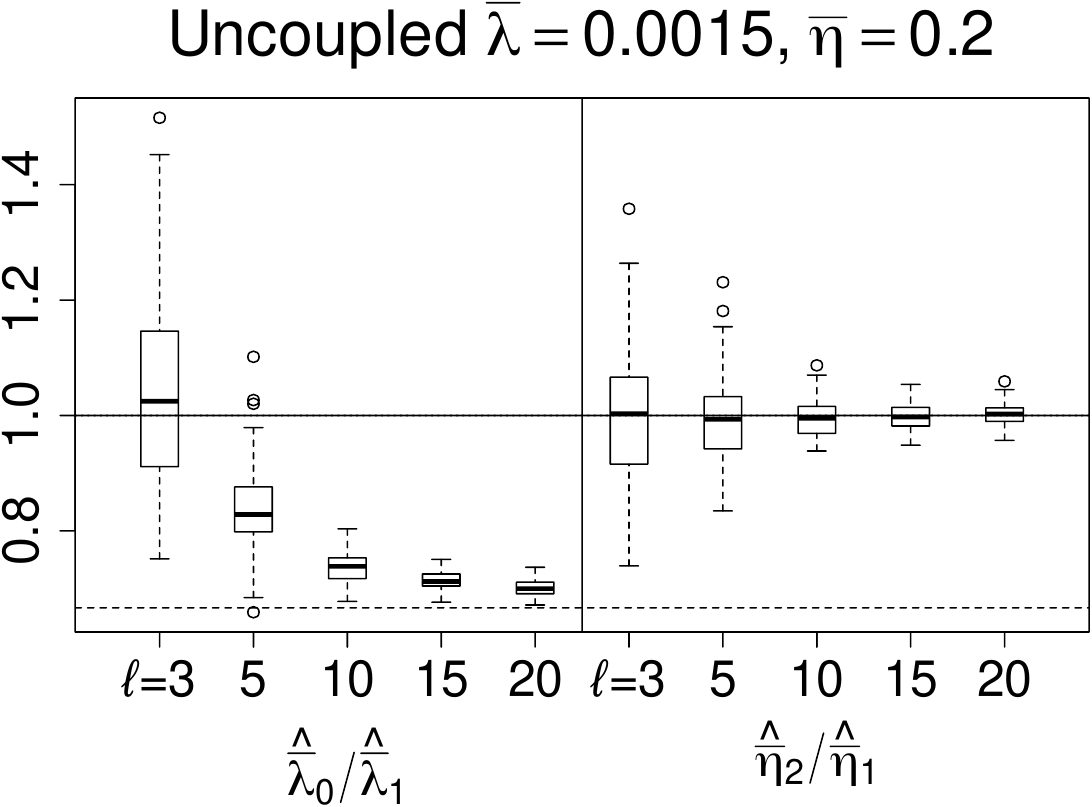}}
  \hspace*{0.02\textwidth}
  \subcaptionbox{Competitive gating}[0.45\textwidth]{\includegraphics[width=0.45\textwidth]{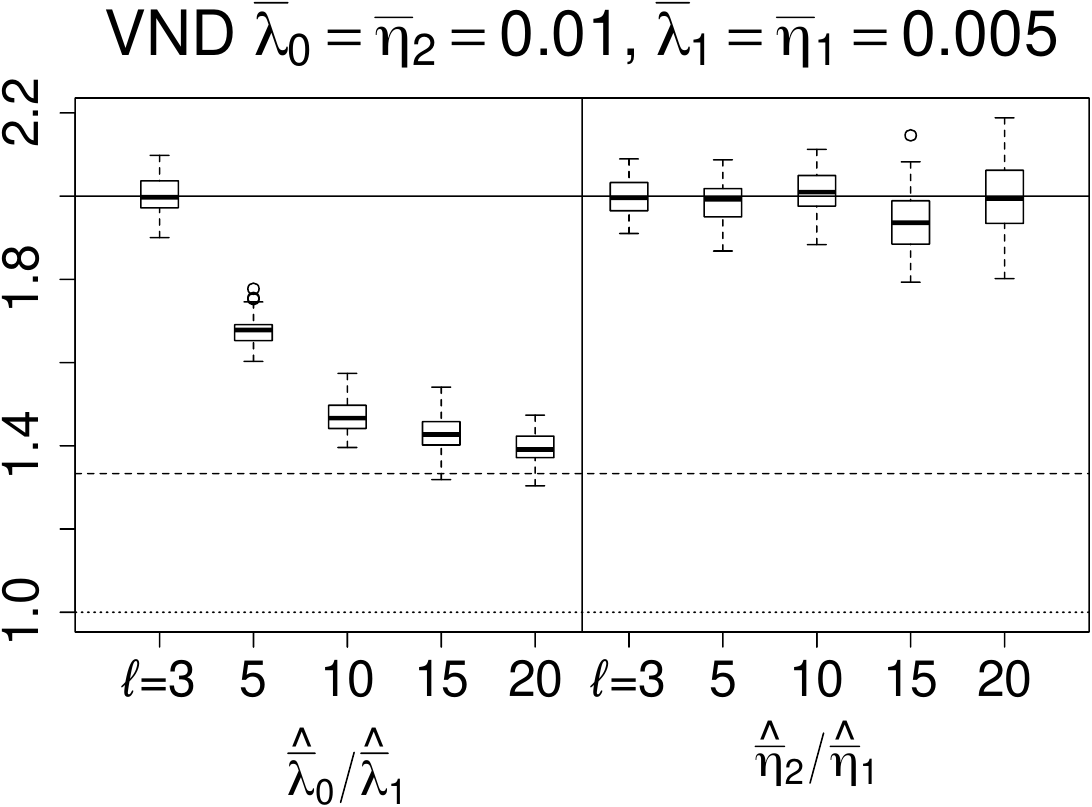}}\\
  \subcaptionbox{Independent gating}[0.45\textwidth]{\includegraphics[width=0.45\textwidth]{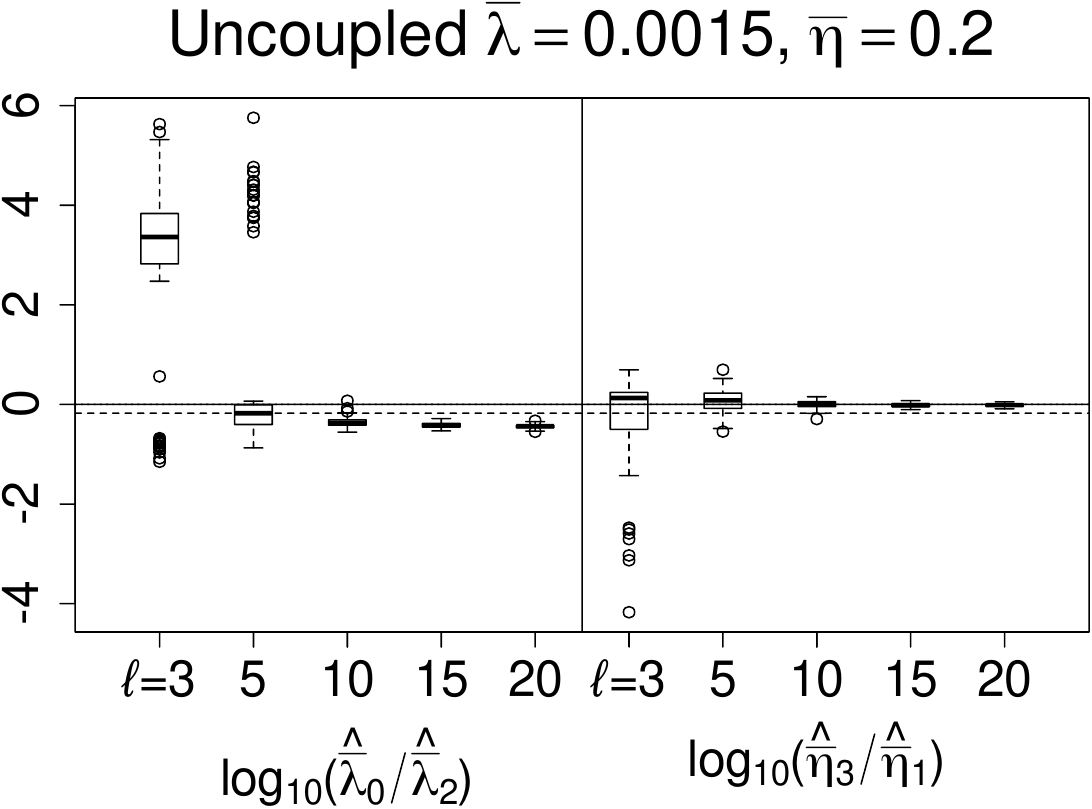}}
  \hspace*{0.02\textwidth}
  \subcaptionbox{Competitive gating}[0.45\textwidth]{\includegraphics[width=0.45\textwidth]{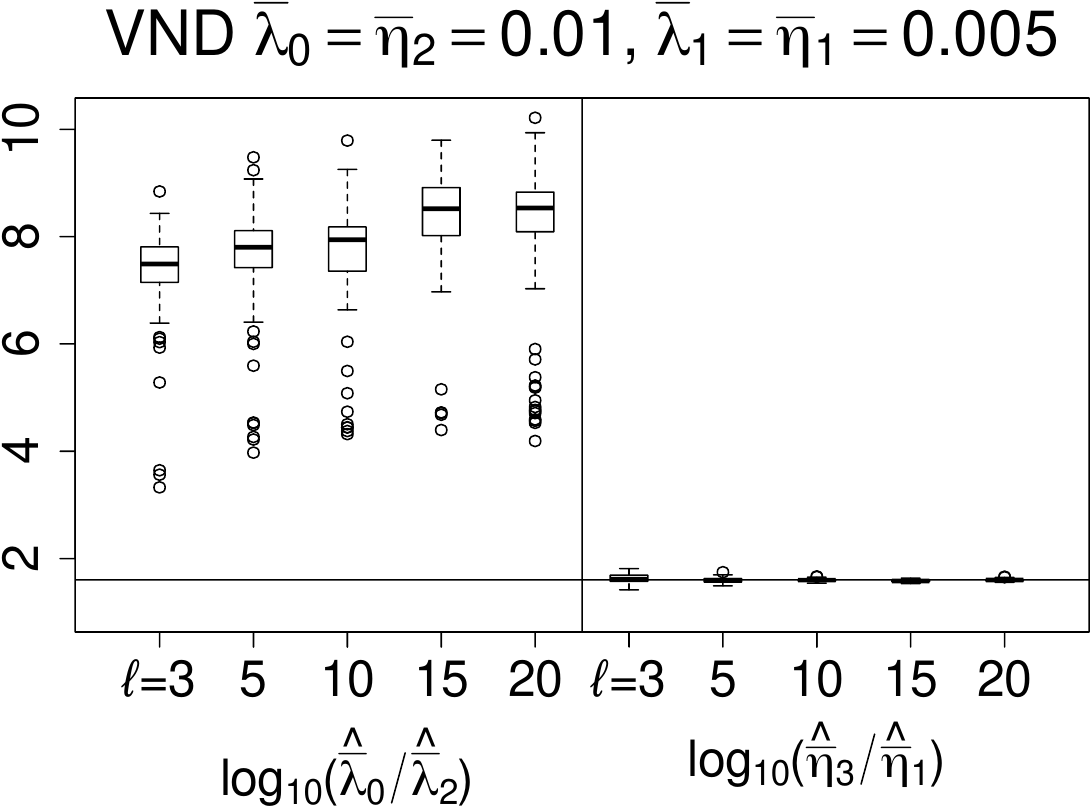}}
  \caption{Boxplots displaying ratios of parameters estimated by a 3 channel VND model for $100$ repetitions of simulations of $1\,000\,000$ data points from a system with $\ell = 3, 5, 10, 15, 20$ channels with the given parameters and $\lambda_k = 1$ for $k > 1$ and $\eta_k = 0.8$ for $k > 2$. For reference, the true value of $\frac{1-\lambda_0}{1-\lambda_1}$ is indicated by a solid line and $\frac{2}{3}\frac{1-\lambda_0}{1-\lambda_1}$ is indicated by a dashed line. 
    The estimated ratio $\frac{1-\widehat\eta_2}{1-\widehat\eta_1}$ is close to the true value in all cases but $\frac{1-\widehat\lambda_0}{1-\widehat\lambda_1}$ is increasingly underestimated with increasing $\ell$, up to a factor of $\sim 2/3$.
    The reason for this is explained in Appendix~\ref{suppl_sec:robustness}. The estimated ratios $\frac{1-\widehat\eta_3}{1-\widehat\eta_1}$ and $\frac{1-\widehat\lambda_0}{1-\widehat\lambda_2}$ exhibit much more variance due to the low population of state $3$. The ratio $\frac{1-\widehat\eta_3}{1-\widehat\eta_1}$ clearly approaches the true values of $1$ and $40$ respectively, however $\frac{1-\widehat\lambda_0}{1-\widehat\lambda_2}$, which should approach $2/3$ and $\infty$, respectively, is underestimated. \label{fig:boxplots}}
\end{figure}
As is clear from the estimated parameters displayed in Figure \ref{fig:boxplots}, the qualitative coupling behavior is recovered. Panel~(a) and panel~(c) show that for uncoupled channels the estimated parameters do not clearly point to either competitive or cooperative gating. Panel~(b) and panel~(d) show that for competitive gating with $\frac{1-\lambda_0}{1-\lambda_1} = \frac{1-\eta_2}{1-\eta_1} = 2$, the probabilities to transition into state $1$ are significantly greater than the probabilities to transition out of this state. This means that the qualitative gating behavior is faithfully recovered even though specific parameter values are not comparable. Since the ratio $\frac{1-\lambda_0}{1-\lambda_1}$ can only be underestimated but not overestimated if the number of channels is underestimated, the conclusion of competitive gating from the data sets we investigate is reinforced. Especially, cooperative gating is conclusively ruled out in our data sets. 
In Appendix~\ref{suppl_sec:robustness} we give a heuristic argument to explain the fact that $\frac{1-\lambda_0}{1-\lambda_1}$ decreases if the ratio between the true number of channels and the number of channels used in the model fit increases. In Appendix~\ref{suppl_sec:coop_gating} we show similar simulations for cooperative gating. In a suitable sense we also find cooperative gating to be robust to underestimation of the number of channels.

\section{Discussion and Outlook}

We have introduced a hidden Markov model for the description of a set of possibly coupled emitters, if only the sum of their signals is measured. We distinguish two main types of emitter interaction within our VND Markov models, namely cooperative and competitive behavior. In case of a competitive model, the resulting signal can often exhibit only few states and thereby disguise the true number of emitters. This can lead to an underestimation of the number of emitters, so it is important to note that the VND model we introduce allows to draw the right conclusion of cooperative or competitive behavior even if the number of emitters is underestimated. In summary, the VND model is a powerful tool to distinguish different types of emitter interaction which is robust to underestimation of the number of emitters.

The VND model faces the typical limitations of homogeneous Markov models in that changes in emitter behavior over time cannot be modeled in this setting. Furthermore, the assumption of permutation invariance states that all emitters are equal in their dynamics and interaction between any pair of emitters is the same. In a hypothetical system of ion channels forming collectively gating clusters but where different clusters gate independently and possibly differently from each other, this assumption would be violated. The requirement of conditional independence goes even further and restricts interaction between emitters to stem exclusively from the overall state of the system which leads to a ``mean field like'' interaction. This assumption is not satisfied by the CK model of clustered gating.

In Section~\ref{subsec:det_suit_model} we show that the VND model can achieve a good fit to real RyR2 ion channel traces and indicates competitive gating. We stress that the competitive phenomenology also appears compatible with the ``driver channel'' hypothesis put forth by \cite{PDNEFC2012} in a system of multiple RyR1 channels, where one channel with very high activity allows for the activation of nearby channels through Calcium transduction. However, from the current time series measured on a membrane with multiple channels as investigated here, one cannot determine whether the finding of competitive gating stems from the presence of a ``driver channel'' or not. The VND model is flexible and can similarly yield a good fit to systems with multiple driver channels, therefore we speculate that it is useful to describe various other ion channel current time series.

In the case of ion channels, the measured signal is usually filtered by a Bessel filter, which leads to additional dependency of the data. This does not affect our simulations, which do not include signal filtering. We disregard filtering in the data sets analyzed here as it will only affect time scales shorter than the filtering time scale of one millisecond. In consequence, the Viterbi algorithm, which aims to determine the underlying Markov state at each time point, may produce flawed results at these short time scales. Especially a transition from state 0 to state 2 or vice versa can easily be mistaken for two transitions, from state 0 to 1 and from 1 to 2 or the other way around. However, such two-step transitions occur only $9$ times upward and $12$ times downward out of a total of $8762$ transitions, thus accounting for $0.5\%$ of all transitions in data set 1 and $21$ times upward and $16$ times downward out of a total of $7341$ transitions, thus accounting for $1\%$ of all transitions in data set 2. Nevertheless, it might be worth extending our model to this situation, see \cite{gunst2001,diehn2019} for existing approaches.

%

%
%

\section*{Acknowledgments}
  We are grateful to Housen Li and Robin Requadt for helpful discussions and validation of our software package.
\section*{Funding}
  The authors acknowledge support of the DFG CRC 803 project Z02, DFG CRC 1456 projects A01, B02, B04, and C06, and the DFG Cluster of Excellence 2067 MBExC. The data set was provided by Lehnart's Lab from the Cellular Biophysics and Translational Cardiology Section in the Heart Research Center G\"ottingen (HRCG). S. E. Lehnart was supported by Deutsche Forschungsgemeinschaft SPP1926 Next Generation Optogenetics.

\appendix

\section{Proofs of Section~2}
For the convenience of the reader we repeat the statements and state their corresponding proofs.

\subsection{Proof of Section~\ref{sec:lump}} 
\label{suppl_sec: proof_lump}

\newtheorem*{thm22}{Theorem \ref{thm:q_dep_on_m}}

\begin{thm22}
  Let $M=(m_{x,y})_{x,y\in\{0,1\}^\ell}$ be the transition matrix of the Markov chain $\Xmp$. If $\Xmp$ satisfies the lumping property, then $(S_k)_{k\in\mathbb{N}}$ is a Markov chain with transition matrix $Q=(q_{i,j})_{i,j\in[\ell\,]}$  given by 
  \begin{equation}
  \label{eq:q_dep_on_m}
  q_{i,j} = \sum_{y\in \mathcal{Z}_j} m_{x,y}
  \end{equation}
  for arbitrary $x\in \mathcal{Z}_i$.
\end{thm22}

The former theorem is an immediate consequence of the following lemma and proposition:
The lemma provides a well known characterization of the lumping property.
\begin{lem} \label{lem: aux_lump}
  For $\Xmp$ satisfying the lumping property is equivalent to
  \begin{equation} \label{eq:equiv_lump}
  \mathbb{P}(S_{k+1}=j\mid S_k=m) =  \mathbb{P}(S_{k+1}=j\mid X_k=x)
  \end{equation}
  for any $k\in\mathbb{N}$, $j,m\in [\ell\,]$ and any $x\in\mathcal{Z}_m$ whenever $\mathbb{P}(S_k=m)\cdot \mathbb{P}(X_k=x)>0$.
\end{lem}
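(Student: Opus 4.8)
The plan is to prove the two implications separately, both resting on the observation that the event $\{S_k=m\}$ decomposes as the disjoint union $\bigcup_{x\in\mathcal{Z}_m}\{X_k=x\}$, since $S_k=\Vert X_k\Vert_1$ and $\mathcal{Z}_m$ collects precisely those configurations with $\Vert x\Vert_1=m$. This lets me pass between conditioning on the coarse event $\{S_k=m\}$ and conditioning on the individual configurations $\{X_k=x\}$ via the law of total probability, which is the only nontrivial ingredient needed.

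For the direction ``lumping $\Rightarrow$ \eqref{eq:equiv_lump}'', I would fix $k,j,m$ and $x\in\mathcal{Z}_m$ with $\mathbb{P}(S_k=m)\cdot\mathbb{P}(X_k=x)>0$, and expand
\[
\mathbb{P}(S_{k+1}=j\mid S_k=m)=\frac{1}{\mathbb{P}(S_k=m)}\sum_{y\in\mathcal{Z}_m}\mathbb{P}(S_{k+1}=j\mid X_k=y)\,\mathbb{P}(X_k=y),
\]
where only the summands with $\mathbb{P}(X_k=y)>0$ contribute. The lumping property (Definition~\ref{def:lump}) asserts that for all such $y$ the factor $\mathbb{P}(S_{k+1}=j\mid X_k=y)$ equals the common value $\mathbb{P}(S_{k+1}=j\mid X_k=x)$; pulling this constant out of the sum and using $\sum_{y\in\mathcal{Z}_m}\mathbb{P}(X_k=y)=\mathbb{P}(S_k=m)$ yields exactly \eqref{eq:equiv_lump}.

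For the converse, I would fix $k,i,j$ and $x,y\in\mathcal{Z}_i$ with $\mathbb{P}(X_k=x)\cdot\mathbb{P}(X_k=y)>0$. Since $\{X_k=x\}\subseteq\{S_k=i\}$, positivity of $\mathbb{P}(X_k=x)$ forces $\mathbb{P}(S_k=i)>0$, so \eqref{eq:equiv_lump} applies with $m=i$ and gives $\mathbb{P}(S_{k+1}=j\mid X_k=x)=\mathbb{P}(S_{k+1}=j\mid S_k=i)$; the same reasoning applied to $y$ gives $\mathbb{P}(S_{k+1}=j\mid X_k=y)=\mathbb{P}(S_{k+1}=j\mid S_k=i)$. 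Equating the two right-hand sides recovers Definition~\ref{def:lump}.

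The computation itself is routine; the only point demanding care is the bookkeeping of the positivity hypotheses — ensuring that null configurations are excluded from the total-probability sum, and checking that conditioning on $\{X_k=x\}$ is legitimate precisely because positivity propagates to $\{S_k=\Vert x\Vert_1\}$. I expect this careful handling of the conditioning events, rather than any genuine analytic difficulty, to be the main thing to get right.
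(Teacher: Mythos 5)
Your proof is correct and follows essentially the same route as the paper's: the forward direction decomposes $\mathbb{P}(S_{k+1}=j\mid S_k=m)$ over the partition $\{X_k=y\}_{y\in\mathcal{Z}_m}$, pulls out the common conditional value supplied by the lumping property, and sums the weights back to $\mathbb{P}(S_k=m)$, while the converse applies \eqref{eq:equiv_lump} to two configurations in $\mathcal{Z}_i$ and equates (the paper dismisses this direction as obvious). Your explicit bookkeeping of the null configurations and of the fact that $\mathbb{P}(X_k=x)>0$ forces $\mathbb{P}(S_k=\Vert x\Vert_1)>0$ is slightly more careful than the paper's write-up, but the argument is the same.
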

\begin{proof}
  First we show that the lumping property implies \eqref{eq:equiv_lump}.
  We have
  \begin{align*}
  \mathbb{P}(S_{k+1}=j\mid S_k=m) 
  & = \sum_{z \in\mathcal{Z}_m} \frac{\mathbb{P}(S_{k+1}=j,S_k=m,X_k=z)}{\mathbb{P}(S_k=m)} \\
  & = \sum_{z\in\mathcal{Z}_m} \frac{\mathbb{P}(S_{k+1}=j,X_k=z)}{\mathbb{P}(X_k=z)} \frac{\mathbb{P}(X_k=z)}{\mathbb{P}(S_k=m)}\\
  & = \mathbb{P}(S_{k+1}=j\mid X_k=x) \frac{1}{\mathbb{P}(S_k=m)} \sum_{z\in\mathcal{Z}_m} \mathbb{P}(X_k=z)\\
  & = \mathbb{P}(S_{k+1}=j\mid X_k=x),
  \end{align*}
  where we used the lumping property in the second last equality. The other direction is obvious, since in the right-hand side of \eqref{eq:equiv_lump} we can substitute $x$ by any $y\in \mathcal{Z}_m$.
\end{proof}
The proposition provides the fact that 
the lumping property implies that $(S_k)_{k\in\mathbb{N}}$ is again a Markov chain, see  
\cite[Theorem~6.3.2]{kemeny1976}. For convenience of the reader we add the proof.
\begin{prop} \label{thm:lump_leads_to_MC}
  The sequence of random variables $(S_k)_{k\in\mathbb{N}}$ is a homogeneous Markov chain if $\Xmp$ satisfies the lumping property.
\end{prop}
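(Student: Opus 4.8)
The plan is to verify the defining Markov property of $(S_k)_{k\in\mathbb{N}}$ directly, namely that conditioning on the entire observed past of the sum process reduces to conditioning on its most recent value, and then to read off homogeneity from the homogeneity of $\Xmp$. Fix $k\in\mathbb{N}$, a target value $j\in[\ell\,]$, and a past trajectory $S_1=m_1,\dots,S_k=m_k$ whose conditioning event has positive probability; write $A:=\{S_1=m_1,\dots,S_{k-1}=m_{k-1}\}$ for the strict past (an event in the $\sigma$-algebra generated by $X_1,\dots,X_{k-1}$). I want to show
\[
\mathbb{P}(S_{k+1}=j\mid S_k=m_k,A)=\mathbb{P}(S_{k+1}=j\mid S_k=m_k).
\]

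First I would expand the left-hand side as a ratio and decompose the numerator over the fibre $\mathcal{Z}_{m_k}=\{x:\Vert x\Vert_1=m_k\}$, using that $\{S_k=m_k\}=\bigcup_{x\in\mathcal{Z}_{m_k}}\{X_k=x\}$ is a disjoint union:
\[
\mathbb{P}(S_{k+1}=j, S_k=m_k, A)=\sum_{x\in\mathcal{Z}_{m_k}}\mathbb{P}(S_{k+1}=j\mid X_k=x, A)\,\mathbb{P}(X_k=x, A),
\]
where the sum effectively ranges over those $x$ with $\mathbb{P}(X_k=x,A)>0$ so that each conditional probability is well defined. The central step is then to drop the past event $A$ from the conditional probability: since $S_{k+1}=\Vert X_{k+1}\Vert_1$ is a deterministic function of $X_{k+1}$, the Markov property of $\Xmp$ yields $\mathbb{P}(S_{k+1}=j\mid X_k=x,A)=\mathbb{P}(S_{k+1}=j\mid X_k=x)$, because $X_{k+1}$ is conditionally independent of the strict past given $X_k$.

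Next I would invoke the lumping property through its equivalent form in Lemma~\ref{lem: aux_lump}, which states precisely that $\mathbb{P}(S_{k+1}=j\mid X_k=x)=\mathbb{P}(S_{k+1}=j\mid S_k=m_k)=:q$ for every admissible $x\in\mathcal{Z}_{m_k}$; crucially this common value does not depend on the representative $x$. Factoring the constant $q$ out of the sum collapses the numerator to $q\cdot\sum_{x\in\mathcal{Z}_{m_k}}\mathbb{P}(X_k=x,A)=q\cdot\mathbb{P}(S_k=m_k,A)$, and dividing by $\mathbb{P}(S_k=m_k,A)$ gives exactly $q=\mathbb{P}(S_{k+1}=j\mid S_k=m_k)$, establishing the Markov property. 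Homogeneity then follows since $\mathbb{P}(S_{k+1}=j\mid X_k=x)=\sum_{y\in\mathcal{Z}_j}\mathbb{P}(X_{k+1}=y\mid X_k=x)$ is independent of $k$ by homogeneity of $\Xmp$, and by Lemma~\ref{lem: aux_lump} the sum-chain transition probability coincides with this expression.

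I expect the main obstacle to be the careful justification of dropping $A$, i.e.\ the claim that $S_{k+1}$ is conditionally independent of the strict past given $X_k$; this is exactly the point where the Markov property of the vector chain and the fact that $S_{k+1}$ factors through $X_{k+1}$ must be combined. The remaining bookkeeping—restricting every sum to positive-probability fibre elements so that all conditional probabilities make sense—is routine but should be stated explicitly to keep the argument rigorous.
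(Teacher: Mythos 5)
Your proof is correct and follows essentially the same route as the paper's: decompose the event $\{S_k=m_k\}$ over the fibre $\mathcal{Z}_{m_k}$, use the Markov property of $\Xmp$ to discard the strict past, and apply Lemma~\ref{lem: aux_lump} to pull the constant $\mathbb{P}(S_{k+1}=j\mid S_k=m_k)$ out of the sum. The only organizational difference is that the paper expands the entire past into cylinder events of the vector chain (summing over $x_1\in\mathcal{Z}_{i_1},\dots,x_{k+1}\in\mathcal{Z}_{i_{k+1}}$), so that your ``dropping $A$'' step follows there from the elementary Markov property on cylinder sets rather than from an abstract conditional-independence statement -- exactly the point you flagged as needing care, and which is indeed valid on this finite state space.
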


\begin{proof}
  \label{suppl_sec: proof_lump_leads_to_MC}
  We verify the Markov property. 
  For $k\in\mathbb{N}$ and arbitrary $i_1,\dots,i_{k+1}\in [\ell\,]$
  we have
  \begin{align}
  & \mathbb{P}(S_{k+1}=i_{k+1}\mid S_k=i_k,\dots,S_1=i_1)\cdot \mathbb{P}(S_k=i_k,,\dots,S_1=i_1)\nonumber\\
  = & \sum_{x_1\in \mathcal{Z}_{i_1}} \dots \sum_{x_{k+1}\in\mathcal{Z}_{i_{k+1}}} \nonumber\\
  & \qquad \qquad
  \mathbb{P}(S_{k+1}=i_{k+1},X_{k+1}=x_{k+1},S_k=i_k,X_k=x_k,\dots,S_1=i_1,X_1=x_1)\nonumber\\
  = & \sum_{x_1\in \mathcal{Z}_{i_1}} \dots \sum_{x_{k+1}\in\mathcal{Z}_{i_{k+1}}}
  \mathbb{P}(X_{k+1}=x_{k+1},X_k=x_k,\dots,X_1=x_1)  \nonumber\\
  = & 
  \sum_{x_1\in \mathcal{Z}_{i_1}} \dots \sum_{x_{k+1}\in\mathcal{Z}_{i_{k+1}}}
  \mathbb{P}(X_{k+1}=x_{k+1}\mid X_k=x_k) \mathbb{P}(X_k=x_k,\dots,X_1=x_1). \label{eqn:lump_prop_proof}
  \end{align}
  Here in the last equality we used the Markov property of the Markov chain $\Xmp$.
  For $x_k\in\mathcal{Z}_{i_k}$ we have by Lemma~\ref{lem: aux_lump}
  \begin{align*}
  \sum_{x_{k+1} \in \mathcal{Z}_{i_{k+1}}}
  \mathbb{P}(X_{k+1}=x_{k+1}\mid X_k=x_k) & 
  = \mathbb{P}(S_{k+1}=i_{k+1}\mid X_k=x_k)\\
  & = \mathbb{P}(S_{k+1}=i_{k+1}\mid S_k=i_k).
  \end{align*}
  By plugging this in (\ref{eqn:lump_prop_proof}) and the fact that 
  \[
  \mathbb{P}(S_k=i_k,,\dots,S_1=i_1) = \sum_{x_1\in \mathcal{Z}_{i_1}} \dots \sum_{x_{k}\in\mathcal{Z}_{i_{k}}} \mathbb{P}(X_k=x_k,\dots,X_1=x_1)
  \]
  the assertion is proven.
\end{proof}

\subsection{Proofs of Section~\ref{sec:perm_inv}}
\label{suppl_sec: proof_pi_lump}

\newtheorem*{prop26}{Proposition \ref{prop:pi_lump}}

\begin{prop26}
  If $\Xmp$ is permutation invariant, it satisfies the lumping property.
\end{prop26}
\begin{proof}
  Let $i,j\in [\ell\,]$ and $x,x' \in\mathcal{Z}_i$. 
  Note that $x,x'\in\mathcal{Z}_i$ implies that the vectors $x,x'$ have the same number of ``$1$'' entries and therefore, there exists a permutation matrix $P\in\{0,1\}^{\ell\times\ell}$ such that $P x = x'$. Then
  \begin{align*}
  \mathbb{P}(S_{k+1}=j\mid X_k=x)&=\sum_{y\in\mathcal{Z}_j}\mathbb{P}(X_{k+1}=y \mid X_k=x)\\
  &=\sum_{y\in\mathcal{Z}_j}\mathbb{P}(X_{k+1}=P y\mid X_k=P x)\\
  &=\sum_{y\in\mathcal{Z}_j}\mathbb{P}(X_{k+1}=y\mid X_k=x')
  =\mathbb{P}(S_{k+1}=j|X_k=x'). 
  \end{align*}
\end{proof}

Permutation invariance can lead to a considerable simplification of the transition matrix of the Markov chain $\Xmp$, which we justify with the following lemma that can be straightforwardly used to determine a first upper bound of the number of free parameters of the transition matrix.

\begin{lem}\label{lem:carPI}
  For $x_1,y_1,x_2,y_2\in\{0,1\}^\ell$ with 
  \begin{equation}\label{eq:pereq}
  \Vert x_1 \Vert_1 = \Vert x_2 \Vert_1,\quad \Vert y_1 \Vert_1 = \Vert y_2 \Vert_1,\quad \mbox{and }\quad \Vert x_1-y_1 \Vert_1 = \Vert x_2-y_2 \Vert_1,
  \end{equation}
  there exists a permutation matrix $P\in\{0,1\}^{\ell\times\ell}$ such that
  \begin{equation}
  \label{eq:ex_P}
  x_1 = Px_2 \quad \text{and} \quad y_1=P y_2.
  \end{equation}
  In particular, this implies for a permutation invariant Markov chain $\Xmp$ that
  \[
  \mathbb{P}(X_{k+1}=y_1\mid X_k=x_1) = \mathbb{P}(X_{k+1}=y_2\mid X_k=x_2),
  \quad k\in\mathbb{N}_0.
  \]
\end{lem}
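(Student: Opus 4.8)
The plan is to reduce the existence of $P$ to a counting argument on coordinate types, and then to read off the probabilistic consequence directly from the definition of permutation invariance. For a pair of binary vectors $(x,y)\in\{0,1\}^\ell\times\{0,1\}^\ell$, I classify each coordinate $i\in\{1,\dots,\ell\}$ according to the value $(x^{(i)},y^{(i)})\in\{0,1\}^2$, and let $n_{ab}(x,y)$ denote the number of coordinates $i$ with $(x^{(i)},y^{(i)})=(a,b)$. The three hypotheses in \eqref{eq:pereq} can be rewritten entirely in terms of these counts: $\Vert x\Vert_1=n_{10}+n_{11}$, $\Vert y\Vert_1=n_{01}+n_{11}$, and $\Vert x-y\Vert_1=n_{01}+n_{10}$ (using that $|x^{(i)}-y^{(i)}|=1$ exactly when $x^{(i)}\neq y^{(i)}$), together with the identity $n_{00}+n_{01}+n_{10}+n_{11}=\ell$.

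First I would show that the four counts coincide for the two pairs, i.e. $n_{ab}(x_1,y_1)=n_{ab}(x_2,y_2)$ for all $a,b\in\{0,1\}$. Writing $(a,b,c,d)$ for the counts $(n_{00},n_{01},n_{10},n_{11})$ of the first pair and $(a',b',c',d')$ for those of the second, the hypotheses give the linear system $c+d=c'+d'$, $b+d=b'+d'$, $b+c=b'+c'$. Subtracting these in pairs yields $b-b'=c-c'=d-d'=:\delta$, and substituting back into $c+d=c'+d'$ forces $2\delta=0$, hence $\delta=0$; the normalization $a+b+c+d=\ell=a'+b'+c'+d'$ then gives $a=a'$ as well. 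This is the crux of the argument and the step I expect to require the most care, since it is exactly here that all three norm conditions must be used simultaneously.

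Next I would build the permutation from these equal counts. I partition $\{1,\dots,\ell\}$ according to the coordinate type of $(x_1,y_1)$ into four blocks of sizes $(a,b,c,d)$, and likewise according to $(x_2,y_2)$ into blocks of sizes $(a',b',c',d')=(a,b,c,d)$. Since corresponding blocks have equal cardinality, there is a bijection $\pi$ of $\{1,\dots,\ell\}$ carrying each block of the second partition onto the block of the same type of the first; let $P$ be the associated permutation matrix. By construction $P$ simultaneously maps the coordinate pattern of $x_2$ to that of $x_1$ and of $y_2$ to that of $y_1$, which is precisely \eqref{eq:ex_P}: $x_1=Px_2$ and $y_1=Py_2$.

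Finally, the probabilistic claim follows immediately by applying Definition \ref{def:perm_inv} with $x=x_2$, $y=y_2$ and the permutation matrix $P$ just constructed:
\[
\mathbb{P}(X_{k+1}=y_2\mid X_k=x_2)=\mathbb{P}(X_{k+1}=Py_2\mid X_k=Px_2)=\mathbb{P}(X_{k+1}=y_1\mid X_k=x_1),
\]
where the last equality uses $x_1=Px_2$ and $y_1=Py_2$. No further estimates are needed, so the entire difficulty is concentrated in the counting identity of the second step.
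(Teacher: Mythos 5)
Your proposal is correct and follows essentially the same route as the paper's proof: both classify coordinates by their type $(x^{(i)},y^{(i)})\in\{0,1\}^2$, use the three norm conditions together with the normalization $\sum_{a,b}n_{ab}=\ell$ to show the four type-counts agree (the paper writes the same linear system in terms of the complementary counts and composes two sorting permutations, $P=P_{x_1,y_1}^{-1}P_{x_2,y_2}$, rather than a direct block bijection, but this is the identical idea), and then conclude via Definition \ref{def:perm_inv}. No gaps.
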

\begin{proof}
  For $x,y\in \set$ define the $\{0,1\}^2$-valued vector 
  \[
  \mathbf{z}_{x,y} := (x,y)^T = ((x^{(1)},y^{(1)}),\dots,(x^{(\ell)},y^{(\ell)}))^T. 
  \]	
  Then, there exists a permutation matrix $P_{x,y}\in \{0,1\}^{\ell\times \ell}$ with 
  $n^{(1)},\dots,n^{(4)}\in [\ell\,]$
  and
  $\sum_{i=1}^{4}n^{(i)} = \ell$
  such that the entries of $P_{x,y}\, \mathbf{z}_{x,y}$ satisfy
  \[
  ( P_{x,y}\, \mathbf{z}_{x,y} )^{(i)} = 
  \begin{cases}
  (0,0) & 1\leq i \leq n^{(1)}\\
  (0,1) & 1+n^{(1)}\leq i \leq n^{(1)}+n^{(2)}\\
  (1,0) & 1+n^{(1)}+n^{(2)} \leq i \leq 
  \sum_{j=1}^{3} n^{(j)}
  \\
  (1,1) & 
  1+\sum_{j=1}^{3} n^{(j)}
  \leq i \leq \ell.
  \end{cases}
  \]	
  Note that the permutation $P_{x,y}$ orders the pairs within the vector $\mathbf{z}_{x,y}$ according to the lexicographic semiorder.	
  %
  
  For the given $x_1,y_1,x_2,y_2\in\{0,1\}^\ell$ satisfying \eqref{eq:pereq} we consider the permutation matrices $P_{x_{1},y_{1}}$ with the numbers $n_1^{(j)} \in [\ell\,]$ for $j=1,\dots,4$ and $P_{x_{2},y_{2}}$ with $n_2^{(j)} \in [\ell\,]$, such that
  $\sum_{j=1}^{4} n_i^{(j)} =\ell$ for $i=1,2$.
  Observe that if we are able to verify that $P_{x_1,y_1}\,\mathbf{z}_{x_1,y_1} = P_{x_2,y_2}\,\mathbf{z}_{x_2,y_2}$, then \eqref{eq:ex_P} follows with
  \[
  P = P_{x_1,y_1}^{-1} P_{x_2,y_2}. 
  \]
  Therefore, to prove the statement of \eqref{eq:ex_P} it is sufficient to show that $n_1^{(j)}=n_2^{(j)}$ for $j=1,\dots,4$.
  %
  For this note that \eqref{eq:pereq} implies 
  \begin{align*}
  n_1^{(1)}+n_1^{(2)}
  &= \ell- \Vert x_1 \Vert_1 = \ell-\Vert x_2 \Vert_1 = n_2^{(1)}+n_2^{(2)}\\
  n_1^{(1)}+n_1^{(3)}
  &= \ell - \Vert y_1 \Vert_1 = \ell-\Vert y_2 \Vert_1 =
  n_2^{(1)}+n_2^{(3)}\\
  n_1^{(1)}+n_1^{(4)}
  &= \ell - \Vert x_1-y_1 \Vert_1 = \ell - \Vert x_2 - y_2 \Vert_1 = n_2^{(1)}+n_2^{(4)}.
  \end{align*}
  Having those three equations and using 
  $\sum_{i=1}^4 n_1^{(i)} = \sum_{i=1}^4n_2^{(i)} = \ell$
  yields to the desired fact that $n_1^{(j)}=n_2^{(j)}$ for $j=1,\dots,4$.
  
  As a consequence with the invariance property for $k\in\mathbb{N}_0$ we get
  \begin{align*}
  \mathbb{P}(X_{k+1}=y_1\mid X_k=x_1) \underset{\eqref{eq:ex_P}}{=} 
  \mathbb{P}(X_{k+1}= Py_2\mid X_k= Px_2) 
  =  \mathbb{P}(X_{k+1}=y_2\mid X_k=x_2),
  \end{align*}
  which finishes the proof.
\end{proof}
Now we are able to prove the claimed statement of the exact number of free parameters.

\newtheorem*{prop27}{Proposition \ref{prop:perm-inv-param}}

\begin{prop27}
  The transition matrix $M$ of a permutation invariant Markov chain $\Xmp$ is determined by $\ell(\ell+1)(\ell+5)/6$ parameters.
\end{prop27}
\begin{proof}
  For $j,k\in [\ell]$ define
  \[
  n_{j,k} := \left\vert \left\{ \Vert x-y \Vert_1 \colon x\in \mathcal{Z}_j, y\in \mathcal{Z}_k   \right\} \right \vert,
  \]
  which denotes the number of different possible values of the sum of ``1''s within the difference of vectors with $j$ and $k$ non-zero entries.
  For $x,y\in \{0,1\}^\ell$ with $x\in \mathcal{Z}_j$ and $y\in \mathcal{Z}_k$ it is clear that switching all ``$0$'' and ``$1$'' entries in $x$ and $y$ does not change $\Vert x-y \Vert_1$, such that $n_{j,k}=n_{\ell-j,\ell-k}$. Similarly one can conclude that
  \[
  n_{j,k} = n_{\ell-j,k} = n_{j,\ell-k}. 
  \]
  %
  Furthermore, $n_{j,k} = n_{k,j}$, because $\Vert x-y \Vert_1$ is symmetric under interchange of $x$ and $y$. 
  For $j \le k \le \frac{\ell-1}{2}$ we obtain
  \begin{align*}
  n_{j,k} = \left| \{ 0, 1, \dots , j\} \right| = j+1,
  \end{align*}
  and taking the symmetries explained above into account we get for any $j,k\in [\ell]$ that
  \begin{align*}
  n_{j,k} = \min\{j+1, k+1, \ell+1-j, \ell+1-k\} \, .
  \end{align*}
  Thus, by Lemma~\ref{lem:carPI} the number $N_{\textnormal{PI}}$ of (possibly) different transition matrix entries for a permutation invariant Markov chain is
  \begin{align*}
  N_{\textnormal{PI}}(\ell) &= \sum_{j=0}^{\ell} \limits \sum_{k=0}^{\ell} \limits \min\{j+1, k+1, \ell+1-j, \ell+1-k\}\\
  &= (\ell+1)(\ell+2)(\ell+3)/6.
  \end{align*}
  The latter equality is shown by induction using the middle-split
  \begin{align*}
  N_{\textnormal{PI}}(\ell + 1) &= \sum_{\substack{j=0 \\ j\neq \left\lfloor\frac{\ell}{2}+1\right\rfloor}}^{\ell+1} \limits \sum_{\substack{k=0 \\ k\neq \left\lfloor\frac{\ell}{2}+1\right\rfloor}}^{\ell+1} \limits \min\{j+1, k+1, \ell+2-j, \ell+2-k\} + \frac{(\ell+2)(\ell+3)}{2} \\
  &= \sum_{j=0}^{\ell} \limits \sum_{k=0}^{\ell} \limits \min\{j+1, k+1, \ell+1-j, \ell+1-k\} + \frac{(\ell+2)(\ell+3)}{2}\\
  & = N_{\textnormal{PI}}(\ell) + \frac{(\ell+2)(\ell+3)}{2} \, ,
  \end{align*}
  which yields the result. The number of independent entries/parameters in the transition matrix is further reduced, since all rows of the matrix sum up to one. The first entry in every row can therefore be considered dependent on the other entries in the row. As there are $\ell+1$ independent parameters in the first column, the total number of independent parameters is
  \[
  (\ell+1)(\ell+2)(\ell+3)/6 - 6 (\ell+1) / 6 = \ell(\ell+1)(\ell+5)/6 \, .
  \qedhere
  \]
\end{proof}

\subsection{Proofs of Section~\ref{subsec: VND}}
\label{suppl_sec:proof_app_lem}


We require the following auxiliary results, where we remind the reader that the commonly underlying probability space is $(\Omega,\mathcal{F},\mathbb{P})$.
\begin{lem}\label{lem:condp} Let $A,B_1,\ldots,B_k\in\mathcal{F}$ with pairwise disjoint 
  $B_1,\ldots,B_k$.
  Assume that $\mathbb{P}(B_i)= \mathbb{P}(B_j)$  and $\mathbb{P}(A \mid B_i)=\mathbb{P}(A\mid B_j)$ for all $i,j\in\{1,\dots,k\}$. Then 
  \[\mathbb{P}\left(A \mid \bigsqcup_{j=1}^kB_j \right) = \mathbb{P}(A|B_i)\]
  for all $i\in\{1,\dots,k\}$. (Here $\bigsqcup$ denotes a union of pairwise disjoint sets.)
\end{lem}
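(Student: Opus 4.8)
The plan is to expand the left-hand conditional probability straight from its definition and then exploit the pairwise disjointness of the $B_j$ together with the two uniformity hypotheses. First I would record that the common value $p := \mathbb{P}(B_i)$ must be strictly positive, since otherwise the conditionals $\mathbb{P}(A \mid B_i)$ appearing in the statement would not be defined; consequently $\mathbb{P}(\bigsqcup_{j=1}^k B_j) = \sum_{j=1}^k \mathbb{P}(B_j) = kp > 0$ and the left-hand side is legitimate. I would then write
\[
\mathbb{P}\left(A \mid \bigsqcup_{j=1}^k B_j\right) = \frac{\mathbb{P}\left(A \cap \bigsqcup_{j=1}^k B_j\right)}{\mathbb{P}\left(\bigsqcup_{j=1}^k B_j\right)}.
\]

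The key step is to apply disjointness twice. Since $A \cap \bigsqcup_{j=1}^k B_j = \bigsqcup_{j=1}^k (A \cap B_j)$ is again a disjoint (finite) union, additivity gives $\mathbb{P}(A \cap \bigsqcup_j B_j) = \sum_{j=1}^k \mathbb{P}(A \cap B_j)$, while the denominator equals $\sum_{j=1}^k \mathbb{P}(B_j)$. Next I would factor each numerator term as $\mathbb{P}(A \cap B_j) = \mathbb{P}(A \mid B_j)\,\mathbb{P}(B_j)$ and invoke the hypotheses: with $q := \mathbb{P}(A \mid B_i)$, both $\mathbb{P}(A \mid B_j) = q$ and $\mathbb{P}(B_j) = p$ are constant in $j$. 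Hence numerator and denominator become $kqp$ and $kp$, and the ratio collapses to $q = \mathbb{P}(A \mid B_i)$, with the outcome manifestly independent of the chosen index $i$.

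I expect essentially no obstacle: the statement is a direct bookkeeping consequence of the definition of conditional probability and finite additivity. The only point deserving a moment's care is the well-definedness flagged above, namely that all $B_j$ share the same \emph{positive} probability so that neither the aggregated conditional nor the individual conditionals are vacuous. Once this is noted, the remaining argument is purely algebraic, and the uniform cancellation of the common factors $p$ and $q$ is what makes the aggregated conditional agree with each constituent one.
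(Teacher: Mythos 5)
Your proof is correct and follows essentially the same route as the paper's: expand the conditional probability by definition, use finite additivity over the disjoint union, and cancel the common factors coming from the two uniformity hypotheses. The only (cosmetic) difference is that you treat general $k$ in one direct summation, whereas the paper proves the case $k=2$ and appeals to induction; your version is, if anything, slightly cleaner since it avoids the induction step entirely.
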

\begin{proof}
  It is sufficient to show the statement for $k=2$, since then the rest follows inductively. For $k=2$ we have	
  \begin{align*}
  \mathbb{P}(A\mid B_1\sqcup B_2) &= \frac{\mathbb{P}(A\cap(B_1\sqcup B_2))}{\mathbb{P}(B_1\sqcup B_2)}
  = \frac{\mathbb{P}((A\cap B_1)\sqcup (A\cap B_2))}{\mathbb{P}(B_1\sqcup B_2)}\\
  &= \frac{\mathbb{P}(A\cap B_1)+ \mathbb{P}(A\cap B_2)}{2\mathbb{P}(B_1)}
  = \frac{\mathbb{P}(A\cap B_1)}{2\mathbb{P}(B_1)}+ \frac{\mathbb{P}(A\cap B_2)}{2\mathbb{P}(B_2)}\\
  & = (1/2)\mathbb{P}(A\mid B_1)+(1/2)\mathbb{P}(A\mid B_2)\\
  &= \mathbb{P}(A\mid B_1) = \mathbb{P}(A\mid B_2).
  \end{align*}
\end{proof}

\begin{lem} \label{lem: init_implies_Xk}
  Let $\Xmp$ be a permutation invariant vector Markov chain  with
  \begin{equation*}
  \mathbb{P}(X_0 = y) 
  =\mathbb{P}(X_0=Py),  \quad y\in\{0,1\}^{\ell},
  \end{equation*}
  for a permutation matrix $P\in\{0,1\}^{\ell \times \ell}$. Then, for any $k\in \mathbb{N}_0$ we have
  \begin{equation}
  \label{eq:prob_equal}
  \mathbb{P}(X_k=y) = \mathbb{P}(X_k= P y), \quad y\in\{0,1\}^{\ell}.
  \end{equation}
\end{lem}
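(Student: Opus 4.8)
The plan is to argue by induction on $k$. The base case $k=0$ is precisely the hypothesis placed on the initial distribution of $X_0$, so nothing needs to be shown there. For the inductive step I would fix $k\in\mathbb{N}_0$, assume that \eqref{eq:prob_equal} holds at stage $k$, and deduce it at stage $k+1$.

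The cleanest way to carry this out is to phrase everything through the transition matrix $M=(m_{x,y})$ of $\Xmp$, so that no conditioning on null events occurs; in this language permutation invariance (Definition~\ref{def:perm_inv}) reads $m_{x,y}=m_{Px,Py}$ for every permutation matrix $P$. By the law of total probability I would first write
\[
\mathbb{P}(X_{k+1}=Py) = \sum_{x\in\{0,1\}^\ell} m_{x,Py}\,\mathbb{P}(X_k=x).
\]
Since a permutation matrix acts as a bijection on $\{0,1\}^\ell$, I may reindex this sum by substituting $x=Px'$ with $x'$ ranging over $\{0,1\}^\ell$, which yields
\[
\mathbb{P}(X_{k+1}=Py) = \sum_{x'\in\{0,1\}^\ell} m_{Px',Py}\,\mathbb{P}(X_k=Px').
\]

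Now two simplifications apply to the summand. First, permutation invariance gives $m_{Px',Py}=m_{x',y}$. Second, the inductive hypothesis \eqref{eq:prob_equal}, applied with $x'$ in place of $y$, gives $\mathbb{P}(X_k=Px')=\mathbb{P}(X_k=x')$. Substituting both back produces
\[
\mathbb{P}(X_{k+1}=Py) = \sum_{x'\in\{0,1\}^\ell} m_{x',y}\,\mathbb{P}(X_k=x') = \mathbb{P}(X_{k+1}=y),
\]
which is exactly \eqref{eq:prob_equal} at stage $k+1$ and completes the induction.

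I do not anticipate any genuine obstacle: the argument is a routine induction, and the only point demanding care is the bookkeeping of the single reindexing $x=Px'$ together with the two places where $P$ is absorbed---once through permutation invariance of the transition matrix and once through the inductive hypothesis on the marginal of $X_k$. It is worth noting that homogeneity of the chain plays no role here; only the one-step permutation invariance of Definition~\ref{def:perm_inv} and the permutation invariance of the $X_k$-marginal are used.
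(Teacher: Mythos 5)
Your proof is correct and follows essentially the same route as the paper: induction on $k$, the law of total probability, permutation invariance of the one-step transition probabilities, the inductive hypothesis on the marginal of $X_k$, and the reindexing of the sum via the bijection $x\mapsto Px$. Phrasing it through the transition matrix $m_{x,y}$ rather than conditional probabilities (as the paper does) is a minor cosmetic refinement that sidesteps conditioning on null events, but the argument is otherwise identical.
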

\begin{proof}
  We prove the statement by induction over $k$. Note that, by assumption, \eqref{eq:prob_equal} holds for $k=0$. If it is true for $k$, then
  \begin{align*}
  \mathbb{P}(X_{k+1}=y)& = \sum_{x\in\{0,1\}^\ell}\mathbb{P}(X_{k+1}=y \mid X_{k}=x)\mathbb{P}(X_{k}=x)\\
  &=\sum_{x\in\{0,1\}^\ell}\mathbb{P}(X_{k+1}=P y\mid X_{k}=P x)\, \mathbb{P}(X_{k}=P x)\\
  &=\mathbb{P}(X_{k+1}=P y),
  \end{align*}
  which verifies \eqref{eq:prob_equal} for $k+1$ and finishes the proof.
\end{proof}
\begin{lem}  \label{lem:cond_prob_repre_vnd}
  Let $\Xmp$ be a permutation invariant vector Markov chain with permutation invariant initial distribution, that is, 
  \begin{equation}
  \mathbb{P}(X_0 = y) 
  =\mathbb{P}(X_0=Py),
  \end{equation}
  for any permutation matrix $P\in\{0,1\}^{\ell \times \ell}$ and any $ y\in\{0,1\}^{\ell}$. Then, for any ${i\in\{1,\dots,\ell\}}$, $x\in \{0,1\}^\ell$ and $k\in\mathbb{N}_0$ we have
  \begin{equation}
  \label{eq:to_show}
  \mathbb{P}(X_{k+1}^{(i)}=y^{(i)} \mid X_k=x) 
  = \mathbb{P}(X_{k+1}^{(i)}=y^{(i)} \mid X^{(i)}_k=x^{(i)}, \Vert X_k \Vert_1=\Vert x\Vert_1 )
  \end{equation}
  with $x=(x^{(1)},\dots,x^{(\ell)})^T$ and $y=(y^{(1)},\dots,y^{(\ell)})^T$.
\end{lem}

\begin{proof}
  Define the set
  \[
  I_{x,i} := \{ z\in\mathcal{Z}_{\Vert x \Vert_1}\mid x^{(i)} = z^{(i)}  \}.
  \]
  We aim to apply Lemma~\ref{lem:condp} with $A=\{X_{k+1}^{(i)} = y^{(i)}\}$ and $B_z = \{X_k=z\}$ for $z\in I_{x,i}$. For any $z\in I_{x,i}$ we have a permutation matrix $P$ such that $Px = z$
  and $(P z')^{(i)} = {z'}^{(i)}$ for any $z'\in \{0,1\}^\ell$. (The last condition says that the permutation does not change the $i$th coordinate entry.) 
  Then, with Lemma~\ref{lem: init_implies_Xk} we have
  \[
  \mathbb{P}(B_x) = \mathbb{P}(X_k=x) = \mathbb{P}(X_k = Px) = \mathbb{P}(B_z).
  \]
  In addition to that, by the permutation invariance we obtain
  \begin{align*}
  \mathbb{P}(A \mid B_x) 
  & = \mathbb{P}(X_{k+1}^{(i)} = y^{(i)} \mid X_k=x)
  = \mathbb{P}(X_{k+1}^{(i)} = (Py)^{(i)} \mid X_k= Px)\\
  & = \mathbb{P}(X_{k+1}^{(i)} = y^{(i)} \mid X_k=z)
  = \mathbb{P}(A \mid B_z).
  \end{align*}
  Furthermore
  \[
  \bigsqcup_{z\in I_{x,i}} B_z 
  = \bigsqcup_{z\in I_{x,i}} \{X_k=z\}
  = \{X_k^{(i)}=x^{(i)}, \Vert X_k \Vert_1=\Vert x\Vert_1\}.
  \]
  Thus, by the application of Lemma~\ref{lem:condp} we have \eqref{eq:to_show}.
\end{proof}
An immediate consequence is the following.
\begin{rem} \label{rem:const_in_k}
  Under the assumptions of the previous lemma we have 
  by the fact that $\Xmp$ is a homogeneous Markov chain that the expression
  \[
  \mathbb{P}(X_{k+1}^{(i)}=y^{(i)} \mid X^{(i)}_k=x^{(i)}, \Vert X_k \Vert_1=\Vert x\Vert_1 )
  \]
  is independent of $k\in\mathbb{N}_0$.
\end{rem}

Now we have all the tools for proving the VND characterization theorem.

\newtheorem*{thm218}{Theorem \ref{thm:vnd-char}}

\begin{thm218}[Characterization of VND Markov chain]
  For a vector Mar\-kov chain $\Xmp$ assume that the initial distribution is permutation invariant\footnote{The distribution of $X_0$ is permutation invariant if $\mathbb{P}(X_0=y) = \mathbb{P}(X_0=Py)$ for any $y\in \{0,1\}^\ell$ and any permutation matrix $P\in \{0,1\}^{\ell\times\ell}$.}.
  Then, the following statements are equivalent:
  \begin{enumerate}
    \item The Markov chain $\Xmp$ is vector norm dependent;
    \item The Markov chain $\Xmp$ is permutation invariant and conditionally independent.
  \end{enumerate}
\end{thm218}

\begin{proof}
  By Proposition~\ref{prop:vnd_perm_inv} a vector norm dependent Markov chain is permutation invariant. Furthermore, by the definition of vector norm dependence and Lemma~\ref{lem:cond_prob_repre_vnd} the conditional independence property is satisfied.
  
  We turn to the other direction. 
  Let $\Xmp$ be permutation invariant and conditionally independent. Then, for all $x,y\in\set$ we have
  \[
  \mathbb{P}(X_{k+1}=y\mid X_k=x)=\prod_{i=1}^\ell\mathbb{P}(X_{k+1}^{(i)}=y^{(i)}\mid X_k=x),
  \]
  such that by Lemma~\ref{lem:cond_prob_repre_vnd} the equality of (\ref{eq:m_vnd_def}) of Definition~\ref{def:vnd} follows.
  By Remark~\ref{rem:const_in_k}, the transition probabilities (\ref{eq:vnd_const_map}) of Definition~\ref{def:vnd} are constant in $k$, so that we only need to argue that they are also constant w.r.t. $i\in\{1,\dots,\ell\}$. Let $y_i,\overline{y}_i \in \{0,1\}^\ell$ with
  \begin{align*}
  y_i & = (y^{(1)},\dots,y^{(i-1)},1,y^{(i+1)},\dots,y^{(\ell)})^T,\\
  \overline{y}_i 
  & =  (y^{(1)},\dots,y^{(i-1)},0,y^{(i+1)},\dots,y^{(\ell)})^T,
  \end{align*}
  that is, $y_i$ and $\overline{y}_i$ differ only in the $i$th entry of the vector. Furthermore, let $P_{i,j}\in\{0,1\}^{\ell\times \ell}$ be the permutation which only permutes the $i$-th and $j$-th entry of a vector. Then for any $x\in \{0,1\}^{\ell}$ with $x=(x^{(1)},\dots,x^{(\ell)})^T$ we have
  \begin{align*}
  & \frac{\mathbb{P}(X_{k+1}^{(i)}=1\mid X_k^{(i)}=x^{(i)}, \Vert X_k \Vert_1 = \Vert x \Vert_1)}{\mathbb{P}(X_{k+1}^{(i)}=0\mid X_k^{(i)}=x^{(i)}, \Vert X_k \Vert_1 = \Vert x \Vert_1)}
  \overset{\eqref{eq:to_show}}{=}
  \frac{\mathbb{P}(X_{k+1}^{(i)}=y_i^{(i)}\mid X_k =x)}{\mathbb{P}(X_{k+1}^{(i)}=\overline{y}_i^{(i)}\mid X_k =x)} 	\\
  =	& \frac{\prod_{j=1}^{\ell} 			\mathbb{P}(X_{k+1}^{(j)}=y_i^{(j)}\mid X_k=x)}{\prod_{j=1}^{\ell} \mathbb{P}(X_{k+1}^{(j)}=
    \overline{y}_i^{(j)}\mid X_k=x)}  	
  = \frac{\mathbb{P}(X_{k+1}=y_i \mid X_k=x)}{\mathbb{P}(X_{k+1}=\overline{y}_i \mid X_k=x)} \\
  = 	& \frac{\mathbb{P}(X_{k+1}=P_{i,j} y_i \mid X_k=P_{i,j}x)}{\mathbb{P}(X_{k+1}=P_{i,j}\overline{y}_i \mid X_k=P_{i,j}x)}	
  = \frac{\prod_{j=1}^{\ell} 			\mathbb{P}(X_{k+1}^{(j)}=P_{i,j} y_i^{(j)}\mid X_k=P_{i,j} x)}{\prod_{j=1}^{\ell} \mathbb{P}(X_{k+1}^{(j)}=
    P_{i,j} \overline{y}_i^{(j)}\mid X_k=P_{i,j} x)} \\
  =   & 
  \frac{\mathbb{P}(X_{k+1}^{(j)}=(P_{i,j} y_i)^{(j)}\mid X_k =P_{i,j} x)}{\mathbb{P}(X_{k+1}^{(j)}=(P_{i,j}\overline{y}_i)^{(j)}\mid X_k = P_{i,j} x)}
  \overset{\eqref{eq:to_show}}{=} \frac{\mathbb{P}(X_{k+1}^{(j)}=1\mid X_k^{(j)}=x^{(i)}, \Vert X_k \Vert_1 = \Vert x \Vert_1)}{\mathbb{P}(X_{k+1}^{(j)}=0\mid X_k^{(j)}=x^{(i)}, \Vert X_k \Vert_1 = \Vert x \Vert_1)},
  \end{align*}
  where we used the conditional independence and permutation invariance. Taking into account that for any $j\in\{1,\dots,\ell\}$ holds
  \begin{align*}
  &	\mathbb{P}(X_{k+1}^{(j)}=1\mid X_k^{(j)}=x^{(i)}, \Vert X_k \Vert_1 = \Vert x \Vert_1) \\
  &\qquad + \mathbb{P}(X_{k+1}^{(j)}=0\mid X_k^{(j)}=x^{(i)}, \Vert X_k \Vert_1 = \Vert x \Vert_1) =1,
  \end{align*}
  leads to
  \begin{align*}
  &~ \frac{1}{\mathbb{P}(X_{k+1}^{(i)}=0\mid X_k^{(i)}=x^{(i)}, \Vert X_k \Vert_1 = \Vert x \Vert_1)} -1\\
  =&~ \frac{1}{\mathbb{P}(X_{k+1}^{(j)}=0\mid X_k^{(j)}=x^{(i)}, \Vert X_k \Vert_1 = \Vert x \Vert_1)} -1
  \end{align*}
  for any $i,j\in \{1,\dots,\ell\}$, any value $x^{(i)} \in \{0,1\}$ and $\Vert x \Vert_1 \in [\ell\,]$. In consequence, simplifying the former expression, we get for any $i,j\in \{1,\dots,\ell\}$, $r\in [\ell\,]$ and $b\in \{0,1\}$ that
  \begin{align*}
  \mathbb{P}(X_{k+1}^{(i)}=b \mid X^{(i)}_k=b, \Vert X_k \Vert_1=r ) =\mathbb{P}(X_{k+1}^{(j)}=b \mid X^{(j)}_k=b, \Vert X_k \Vert_1=r )\,,
  \end{align*}
  which finishes the proof.
\end{proof}
We add a representation of the transition matrix of the ``sum'' Markov chain $(S_k)_{k\in\mathbb{N}}$ based on a vector norm dependent Markov chain $\Xmp$.

\newtheorem*{prop221}{Proposition \ref{prop:repres_Q_vnd}}

\begin{prop221}
  Given a vector norm dependent Markov chain $\Xmp$ with transition matrix $M^{(\rm VND)}$ determined by the parameters $\lambda_0,\dots,\lambda_{\ell-1},\eta_1,\dots,\eta_{\ell}\in [0,1]$, see Remark~\ref{rem:param_vnd} in the main text. Then,
  the transition matrix $Q=(q_{i,j})_{i,j\in[\ell\,]}$ of the corresponding ``sum'' Markov chain $(S_k)_{k\in\mathbb{N}}$ is given by
  \begin{align*}
  q_{i,j} := \sum_{r=\max\{0,i-j\}}^{\min\{i,\ell-j\}} {i\choose r} {\ell-i\choose j-i+r} \eta_{i}^{i-r}(1-\eta_{i})^{r} \lambda_i^{\ell-j-r}(1-\lambda_i)^{j-i+r},
  \end{align*}
  for any $i,j\in[\ell\,]$, where for completeness we set $\eta_0:=1$ and $\lambda_\ell:=1$.
\end{prop221}
%
\begin{proof}
  For arbitrary $i\in[\ell\,]$ set
  \[
  x=(\underbrace{1,\dots,1}_{i \text{ times}},\underbrace{0,\dots,0}_{\ell-i \text{ times}})
  \]
  and note that $x\in\mathcal{Z}_i$. Thus, for $j\in[\ell\,]$ by the lumping property we have
  \[
  q_{i,j} = \mathbb{P}(S_{k+1}=j\mid X_k=x).
  \]
  The idea is to decompose the event $X_k=x$ into different sets in such a way that we can use counting problem arguments. For this define the random variable
  \[
  R:=\text{``The number of 1s from $x$ that become 0s at time $k+1$''}.
  \]
  Observe that the random variable $R$ takes only values in $\{\max\{0, i-j\},\dots,\min\{i,\ell-j\}\}$, since trivially $0\leq R\leq i$, $R \ge i-j$ because the number of ones at time $k+1$ is $j$ and $R\leq \ell-j$ follows by the fact that the number of zeros at time $k+1$ is $\ell-j$. Furthermore, we have
  \[
  q_{i,j}= \sum_{r=\max\{0, i-j\}}^{\min\{i,\ell-j\}} \mathbb{P}(S_{k+1}=j,R=r\mid X_k=x).
  \]
  
  For $r\in\{ \max\{0, i-j\},\dots,\min\{i,\ell-j\}  \}$ we have
  \begin{align} \label{eq:repr_ell}
  \MoveEqLeft[4] \mathbb{P}(S_{k+1}=j,R=r\mid X_k=x) =\nonumber\\
  & {i\choose r} {\ell-i\choose j-i+r} \eta_{i}^{i-r}(1-\eta_{i})^{r} \lambda_i^{\ell-j-r}(1-\lambda_i)^{j-i+r},
  \end{align}
  which is justified as follows. Recall that $R=r$ means that there were $r$ ones that became zeros and observe that the number of cases when that happens is ${i\choose r} {\ell-i\choose j-i+r}$. (This is true since there are ${i \choose r}$ possibilities for $r$ ones to become zeros and eventually, to have $j$ ones at time $k+1$, there are ${\ell-i \choose j-i+r}$ possibilities of zeros which become ones.) Finally, by taking into account that the probabilities of $i-r$ ones to remain ones is $\eta_{i+1}^{i-r}$, of $r$ ones to become zeros is $(1-\eta_{i+1})^r$, of $j-i+r$ zeros to become ones is $(1-\lambda_i)^{j-i+r}$ and of $\ell-j-r$ zeros to remain zeros is $\lambda_i^{\ell-j-r}$, the representation of \eqref{eq:repr_ell} is verified. 
\end{proof}

\subsection{Proof of Section~\ref{subsec: inv_lump_ident}} \label{sec:proof-params}

\newtheorem*{thm223}{Theorem \ref{thm:params-unique}}

\begin{thm223}	[Inverse lumping identifiability for VND Markov chains]
  Let $(S_k)_{k\in\mathbb{N}}$ be a ``sum'' Markov chain with transition matrix $Q^{\rm(VND)}$ on $[\ell\,]$ based on a vector norm dependent Markov chain $\Xmp$ with transition matrix $M^{\rm(VND)}$. If $\ell$ is odd, then the parameters $\lambda_j$ and $\eta_{j+1}$ with $j=0,\dots,\ell-1$ defining $M^{(\rm VND)}$ are uniquely determined by the entries of $Q^{\rm(VND)}$. If $\ell$ is even, the same holds true provided that $\lambda_{\ell/2} \geq 1-\eta_{\ell/2}$.
\end{thm223}

\begin{proof}
  In Proposition~\ref{prop:repres_Q_vnd} we provided a functional representation of the entries of $Q=(q_{s,r})_{s,r\in [\ell\,]}$ in terms of the parameters $\lambda_0,\dots,\lambda_{\ell-1},\eta_{1},\dots,\eta_{\ell}\in [0,1]$. The idea is to exploit this structure. 
  
  First, observe that $q_{0,0} = \lambda_0^\ell$ and $q_{\ell,\ell} = \eta_\ell^\ell$ such that $\lambda_0$ and $\eta_\ell$ are uniquely determined. If $\ell = 1$, this concludes the proof, so that in all of the following we can assume $\ell \ge 2$. For $i=1,2$ let $\lambda_{0,(i)},\dots,\lambda_{\ell-1,(i)},\eta_{1,(i)},\dots,\eta_{\ell,(i)}\in[0,1]$ be solutions of the inverse lumping problem (of course with $\lambda_{0,(i)}=q_{0,0}^{1/\ell}$ and $\eta_{\ell,(i)}=q_{\ell,\ell}^{1/\ell}$). For $s\in\{1,\dots,\ell-1\}$ let us use the notation $\widetilde\eta_{s,(i)}:= 1-\eta_{s,(i)}$ and note that from Proposition~\ref{prop:repres_Q_vnd} it follows that
  \begin{align}
  \label{eq:s,1}
  \widetilde{\eta}_{s,(1)}^s \lambda_{s,(1)}^{\ell-s} = q_{s,0} = \widetilde{\eta}_{s,(2)}^s \lambda_{s,(2)}^{\ell-s}. 
  \end{align}
  We immediately see that $\widetilde{\eta}_{s,(1)} = \widetilde{\eta}_{s,(2)}$ iff $\lambda_{s,(1)} = \lambda_{s,(2)}$. Now, for $s\in\{1,\dots,\ell-1\}$ assume that the pair $(\widetilde \eta_{s,(1)},\lambda_{s,(1)})$ is different from $(\widetilde \eta_{s,(2)},\lambda_{s,(2)})$, which leads to the fact that without loss of generality we have $\lambda_{s,(1)}>0$.  Therefore $x:= \lambda_{s,(2)}/\lambda_{1,(2)} \in [0,\infty)$. Additionally by $s\in\{1,\dots,\ell-1\}$ we have $\alpha:= \ell/s \in (1,\ell]$. By exploiting \eqref{eq:s,1} we obtain $\widetilde{\eta}_{s,(2)} = \widetilde{\eta}_{s,(1)} x^{1-\alpha}$ and trivially $\lambda_{s,(2)} = \lambda_{s,(1)} x$, which gives
  \begin{align}
  \label{eq:aux1}
  x - \widetilde{\eta}_{s,(2)} x &= (x^{\alpha} - \widetilde{\eta}_{s,(1)} x) x^{1-\alpha},\\
  \label{eq:aux2}
  x -\lambda_{s,(2)} x &= (1 - \lambda_{s,(1)} x) x \, .
  \end{align}
  Taking the form of $q_{s,1}$ from Proposition~\ref{prop:repres_Q_vnd} into account yields
  \begin{align*}
  & s \widetilde{\eta}_{s,(1)}^{s-1} (x-\widetilde{\eta}_{s,(1)} x) \lambda_{s,(1)}^{\ell-s} + (\ell - s) \widetilde{\eta}_{s,(1)}^s \lambda_{s,(1)}^{\ell-s-1} (x - \lambda_{s,(1)} x) = x q_{s,1} \\ 
  & =  s \widetilde{\eta}_{s,(2)}^{s-1} (x-\widetilde{\eta}_{s,(2)} x) \lambda_{s,(2)}^{\ell-s} + (\ell - s) \widetilde{\eta}_{s,(2)}^s \lambda_{s,(2)}^{\ell-s-1} (x - \lambda_{s,(2)} x).
  \end{align*}
  Using \eqref{eq:aux1} and \eqref{eq:aux2} on the left-hand side of the previous equality gives
  \begin{align*}
  & s \widetilde{\eta}_{s,(1)}^{s-1} (x-\widetilde{\eta}_{s,(1)} x) \lambda_{s,(1)}^{\ell-s} + (\ell - s) \widetilde{\eta}_{s,(1)}^s \lambda_{s,(1)}^{\ell-s-1} (x - \lambda_{s,(1)} x) \\ 
  & = s \widetilde{\eta}_{s,(1)}^{s-1} (x^\alpha - \widetilde{\eta}_{s,(1)} x) \lambda_{s,(1)}^{\ell-s} + (\ell - s) \widetilde{\eta}_{s,(1)}^s \lambda_{s,(1)}^{\ell-s-1} (1 - \lambda_{s,(1)} x).
  \end{align*}
  Further transformations of this yield
  \begin{align}
  &x^\alpha = x + \frac{\ell - s}{s} \frac{\widetilde{\eta}_{s,(1)}}{\lambda_{s,(1)}} (x - 1) \, . \label{eq:xalpha}
  \end{align}
  It is clear that $x=1$ is a solution to equation \eqref{eq:xalpha} and that there is at most one other solution. To find a simple expression for the other solution for general $\alpha$, we consider the representation of $q_{s,2}$ (again) from Proposition~\ref{prop:repres_Q_vnd}. We have
  \begin{align*}
  &  s(s-1) \widetilde{\eta}_{s,(1)}^{s-2} (x-\widetilde{\eta}_{s,(1)} x)^2 \lambda_{s,(1)}^{\ell-s}\\
  &+ 2s(\ell - s) \widetilde{\eta}_{s,(1)}^{s-1} \lambda_{s,(1)}^{\ell-s-1} (x-\widetilde{\eta}_{s,(1)} x)(x - \lambda_{s,(1)} x)\\
  &+ (\ell - s)(\ell-s-1) \widetilde{\eta}_{s,(1)}^s \lambda_{s,(1)}^{\ell-s-2} (x - \lambda_{s,(1)} x)^2 =  2 x^2 q_{s,2}\\  
  = & s(s-1) \widetilde{\eta}_{s,(2)}^{s-2} (x-\widetilde{\eta}_{s,(2)} x)^2 \lambda_{s,(2)}^{\ell-s}\\
  &+ 2s(\ell - s) \widetilde{\eta}_{s,(2)}^{s-1} \lambda_{s,(2)}^{\ell-s-1} (x-\widetilde{\eta}_{s,(2)} x)(x - \lambda_{s,(2)} x)\\
  &+ (\ell - s)(\ell-s-1) \widetilde{\eta}_{s,(2)}^s \lambda_{s,(2)}^{\ell-s-2} (x - \lambda_{s,(2)} x)^2.
  \end{align*}
  Note that in the special case $s=1$ the first term on either side vanishes and the remaining terms are exactly those given in Proposition~\ref{prop:repres_Q_vnd}. By \eqref{eq:aux1} and \eqref{eq:aux2} the right-hand side can be further modified such that
  \begin{align*}
  & s(s-1) \widetilde{\eta}_{s,(1)}^{s-2} (x-\widetilde{\eta}_{s,(1)} x)^2 \lambda_{s,(1)}^{\ell-s}\\
  &+ 2s(\ell - s) \widetilde{\eta}_{s,(1)}^{s-1} \lambda_{s,(1)}^{\ell-s-1} (x-\widetilde{\eta}_{s,(1)} x)(x - \lambda_{s,(1)} x)\\
  &+ (\ell - s)(\ell-s-1) \widetilde{\eta}_{s,(1)}^s \lambda_{s,(1)}^{\ell-s-2} (x - \lambda_{s,(1)} x)^2\\
  =& s(s-1) \widetilde{\eta}_{s,(1)}^{s-2} (x^\alpha-\widetilde{\eta}_{s,(1)} x)^2 \lambda_{s,(1)}^{\ell-s}\\
  &+ 2s(\ell - s) \widetilde{\eta}_{s,(1)}^{s-1} \lambda_{s,(1)}^{\ell-s-1} (x^\alpha-\widetilde{\eta}_{s,(1)} x)(1 - \lambda_{s,(1)} x)\\
  &+ (\ell - s)(\ell-s-1) \widetilde{\eta}_{s,(1)}^s \lambda_{s,(1)}^{\ell-s-2} (1 - \lambda_{s,(1)} x)^2.
  \end{align*}
  By plugging \eqref{eq:xalpha} in, the previous expression reduces to a quadratic equation in $x$. 
  One solution is again $x=1$ and the other solution is 
  \[x = \frac{\ell \widetilde{\eta}_{s,(1)}}{\ell \widetilde{\eta}_{s,(1)} -2s \widetilde{\eta}_{s,(1)}+2s \lambda_{s,(1)}} = \frac{\alpha \widetilde{\eta}_{s,(1)}}{(\alpha -2) \widetilde{\eta}_{s,(1)} + 2 \lambda_{s,(1)}}.\] 
  Defining $y:= \frac{\lambda_{s,(1)}}{\widetilde{\eta}_{s,(1)}}$, we obtain 
  $x = \frac{\alpha}{2y  + \alpha -2}$ and from $x \in [0, \infty)$ we get $y \in (1 - \frac{\alpha}{2}, \infty]$. Substituting the second solution back into equation \eqref{eq:xalpha} yields
  \begin{align}
  \left(\frac{\alpha}{2y + \alpha -2}\right)^\alpha = \frac{1}{y}\frac{(2\alpha -2) - (\alpha - 2)y}{2 y + \alpha -2} \label{eq:doesthishold?}
  \end{align}
  which can be rewritten as
  \begin{align*}
  f_\alpha(y) := \big(2y + \alpha -2 \big)^{\alpha-1} \big( (\alpha - 2)y - 2 (\alpha-1) \big) + \alpha^\alpha y = 0 \, .
  \end{align*}
  Note that $f_\alpha$ is continuous on the whole domain for every $\alpha > 1$. Straightforward calculations give
  \begin{align*}
  f'_\alpha(y) =& \alpha \big(2y + \alpha -2 \big)^{\alpha-2} \big( 2(\alpha - 2)y - 3 \alpha + 4 \big) + \alpha^\alpha,\\
  f''_\alpha(y) =& 4\alpha(\alpha-1)(\alpha-2)\big(2y + \alpha -2 \big)^{\alpha-3} \big( y-1 \big),\\
  f'''_\alpha(y) =& 4\alpha(\alpha-1)(\alpha-2)\big(2y + \alpha -2 \big)^{\alpha-4} \big( 2(\alpha - 2)y - \alpha + 4 \big),\\
  f'''_\alpha(1) =& 4\alpha^{\alpha-2}(\alpha-1)(\alpha-2),\\
  g_\alpha(\epsilon) :=& f'_\alpha\left(1-\frac{\alpha}{2}+\epsilon\right) = -\alpha^2 (\alpha -1) \epsilon^{\alpha-2} - 2\alpha^2 (2 - \alpha) \epsilon^{\alpha-1} + \alpha^\alpha.
  \end{align*}
  We treat three cases:
  \vspace*{0.5\baselineskip}
  
  \noindent\textbf{Case 1: $\alpha > 2$}\\
  Here, $1- \alpha/2 < 0$ and thus $y\geq 0$.
  We have $f''_\alpha(y) = 0$ if and only if $y=1$ and $f'''_\alpha(1) > 0$, so $f'_\alpha(y)$ has a unique global minimum at $y=1$. Since $f'_\alpha(1) = 0$ this means that $f'_\alpha(y) > 0$ for all $0 < y < 1$ and $y>1$. Since $f_\alpha(1) = 0$ and $\lim_{y \to \infty} f_\alpha(y) = \infty$, we conclude that $y=1$ is the only zero.\vspace*{0.5\baselineskip}
  
  \noindent\textbf{Case 2: $\alpha = 2$}\\
  Note that this case only occurs if $\ell$ is even. Here, $f_2(y) = 0$ for any $y>0$ and we have 
  \[
  x = \frac{1}{y} \, \Longleftrightarrow \, \frac{\lambda_{s,(2)}}{\lambda_{s,(1)}} = \frac{\widetilde{\eta}_{s,(1)}}{\lambda_{s,(1)}}  \, \Longleftrightarrow \, \lambda_{s,(2)} = \widetilde{\eta}_{s,(1)} \, \Longleftrightarrow \, \lambda_{s,(1)} = \widetilde{\eta}_{s,(2)}. 
  \] 
  If $\lambda_{s,(1)} \ge \widetilde{\eta}_{s,(1)}$,
  we get $\lambda_{s,(2)} = \widetilde{\eta}_{s,(1)} \le \lambda_{s,(1)} = \widetilde{\eta}_{s,(2)}$ which means that the second solution is invalid unless both solutions are the same, which is excluded by assumption. Thus only one of the solutions is valid.\vspace*{0.5\baselineskip}
  
  \noindent\textbf{Case 3: $1 < \alpha < 2$}\\
  Here, $1- \alpha/2 > 0$ and thus 
  $y >1- \alpha/2$. 
  We have $f''_\alpha(y) = 0$ if and only if $y=1$ and $f'''_\alpha(1) < 0$, so $f'_\alpha(y)$ has a local maximum at $y=1$. For completeness, we notice 
  \begin{align*}
  \lim_{y \searrow 1-\frac{\alpha}{2}} \limits f'_\alpha(y) = \lim_{\epsilon \searrow 0} \limits g_\alpha(\epsilon) = \lim_{\epsilon \searrow 0} \limits \left(-\alpha^2 (\alpha -1) \epsilon^{\alpha-2} - 2\alpha^2 (2 - \alpha) \epsilon^{\alpha-1} + \alpha^\alpha \right) = - \infty \, .
  \end{align*}
  This means that $f'_\alpha(y) < 0$ for 
  $1-\alpha/2<y<1$ and $y>1$.
  Since $f_\alpha(1) = 0$, we conclude that $f_\alpha(y) < 0$ for $y>1$ and $f_\alpha(y) > 0$ for $1 - \frac{\alpha}{2} < y < 1$. Furthermore, note that
  \[
  \lim_{y \to \infty} f_\alpha(y) = \lim_{y \to \infty} \Big( (\alpha - 2) 2^{\alpha-1} y^{\alpha} + \alpha^\alpha y \Big) = (\alpha - 2) 2^{\alpha-1} \lim_{y \to \infty} y^{\alpha} = - \infty
  \]
  thus we conclude that $y=1$ is the only zero.
  \vspace*{0.5\baselineskip}
  
  This proves the claim of uniqueness since in all three cases we obtain $\lambda_{s,(1)}=\lambda_{s,(2)}$ and therefore $\eta_{s,(1)}=\eta_{s,(1)}$ which contradicts the assumption that the tuples $(\lambda_{s,(i)},\eta_{s,(i)})$ with $i=1,2$ are different to each other.
\end{proof}

\section{Generalization of cooperative and competitive VND Markov chain models} \label{suppl_sec:comp-coop-def}
\begin{defi} \label{def:coop-comp-gen}
  For $i \in \{2, \dots, \ell\}$ we call a VND Markov chain $\Xmp$ with transition matrix $M^{(\text{VND})}$, determined by $\lambda_j,\eta_{j+1}$ with $j\in\{0,\dots,\ell-1\}$ (see \eqref{eq:trans_p_r} and \eqref{eq:trans_q_r}), $i$-cooperative if for all $a \in \{1, \dots, i-1\}$ holds
  \begin{align*}
  \frac{1-\lambda_{a}}{1- \lambda_0} &> 1 & \frac{1-\eta_a}{1- \eta_i} &> 1 \, .
  \end{align*}
  In contrast to that for $i \in \{1, \dots, \ell-1\}$ we call such a VND Markov chain $i$-competitive if for all $a \in \{1, \dots, i\}$ and $b \in \{i, \dots, \ell-1\}$  holds\begin{align*}
  \frac{1-\lambda_{a-1}}{1- \lambda_b} &> 1 & \frac{1-\eta_{b+1}}{1- \eta_a} &> 1 \, .
  \end{align*}
\end{defi}

As indicated by the terms, $i$-cooperativeness implies that the state with zero or $i$ emitters signaling are preferred, which can be interpreted as a group of $i$ emitters having a tendency to occupy the same state. On the other hand, $i$-competitiveness indicates that emitters compete over being on with a preferred number of $i$ open channels. Given these generalized definitions, we can easily show the following incompatibility result.
\begin{thm}
  For $i\in\{2,\dots,\ell\}$ and $j\in\{1,\dots,\ell-1\}$
  with $i \neq j$ a VND Markov chain cannot be $i$-cooperative and $j$-competitive.
\end{thm}

\begin{proof}
  For $j < i$ a $j$-competitive VND Markov chain satisfies $\frac{1-\lambda_{0}}{1- \lambda_j} > 0$ and if it is also $i$-competitive we have $\frac{1-\lambda_{j}}{1- \lambda_0} > 0$, since $j \le i-1$, which is a contradiction.
  
  For $i < j$ a $j$-competitive VND Markov chain satisfies $\frac{1-\eta_a}{1- \eta_j} > 0$ for all $a \in \{1, \dots, i\}$ and if it is also  $i$-competitive we have $\frac{1-\eta_{j}}{1- \eta_a} > 0$ for all $a \in \{1, \dots, i\}$, since $i \le j-1$, which yields a contradiction.
\end{proof}

In order to fix terminology, we call a VND Markov chain which is both $j$-competitive and $j$-cooperative \emph{purely $j$-cooperative}, since in that case only the states $0$ and $j$ are highly populated, which indicates that the dynamics are fully dominated by a cluster of $j$ emitters.

\newpage

\section{Illustration to Section~3}
\label{suppl_sec:max}
We consider the case of $\ell=2$ and use the polynomials
\begin{align}
P_0(\theta_H)&=\sum_{i=0}^{2}n_{0,i}\log q^{(\text{VND})}_{0,i}(\lambda_0),\label{eq:pol0}\\
P_1(\theta_H)&=\sum_{i=0}^{2}n_{1,i}\log q^{(\text{VND})}_{1,i}(\lambda_1, \eta_1),\label{eq:pol1}\\ P_2(\theta_H)&=\sum_{i=0}^{2}n_{2,i}\log q^{(\text{VND})}_{2,i}(\eta_2),\label{eq:pol2}
\end{align}
where the number of state changes from $i$ to $j$ in $S_{1:K}$ is denoted as $n_{i,j}$. With that we have $\sum_{k=0}^2P_k(\theta_H) = \sum_{k=1}^{K-1}\log q^{(\text{VND})}_{s_k,s_{k+1}}(\theta_H)$, which provides a representation of the ``middle'' term of the log-likelihood function 
\begin{align}
\label{al: log_lik_disc}
\ell(\theta,s_{1:K}, y_{1:K}) &= \log\pi^{(s_1)} + \sum_{k=1}^{K-1}\log q^{(\text{VND})}_{s_k,s_{k+1}}(\theta_H) + \sum_{k=1}^K\log g_{\theta_E}(y_k,s_k).
\end{align}
This log-likelihood function is fundamental for the Baum-Welch algorithm, in particular it leads to $\theta\mapsto f(\theta,\theta_t, y_{1:K})$ which is the function that is used in the maximization step, described in Section \ref{sec:BW_alg} of the main text. From \eqref{al: log_lik_disc} we obtain that for unimodality of $\theta\mapsto f(\theta,\theta_t, y_{1:K})$ it is sufficient to show that $\sum_{k=0}^2P_k(\theta_H)$ is strictly unimodal, since the two other terms are strictly concave in the Gaussian scenario we are interested in. Therefore, in Figure~\ref{fig:concavity} we plot $P_0$, $P_1$ and $P_2$ to provide an indication of their behavior. In general $\sum_{k=0}^2P_k(\theta_H)$ is not strictly unimodal, but a sufficient condition may be to restrict to $\lambda_1 \geq 1-\eta_1$.

\begin{figure}[h!]
  \centering
  \includegraphics[width=\textwidth]{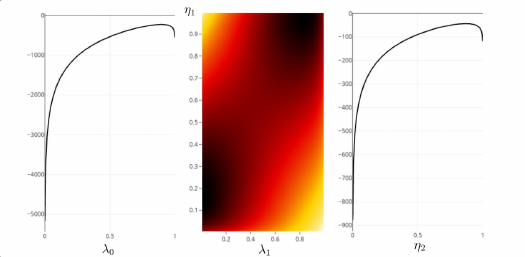}
  \caption{\label{fig:concavity} 
    From a simulation of 1000 points from the VND model with parameters (0.9,0.8,0.98,0.89) plots of the polynomials $P_0$ (\ref{eq:pol0}), $P_1$ (\ref{eq:pol1}), and $P_2$ (\ref{eq:pol2}) respectively. The plot of $P_1$ is depicted here as a heat map. Although, this heat map shows two maxima (in black), if one restricts to $\lambda_1\geq 1- \eta_1$ the resulting polynomial is strictly unimodal.}
\end{figure}

\FloatBarrier
\pagebreak

\section{Additional Comments on Dwell Time Estimation}\label{suppl_sec:dwell_time}

In order to visually determine time constants of dwell time distributions depicted in Figure~\ref{fig:dwell_times} one can examine histograms of dwell times with logarithmic $x$-axis. To see this, consider $z := \log(x)$ and the exponential distribution
\begin{align*}
\phi(x) dx =&~ \frac{1}{\tau} \exp(-x/\tau) dx = \frac{1}{\tau} \exp(-\exp(z)/\tau) \exp(z) dz\\
=&~ \frac{1}{\tau} \exp(z-\exp(z)/\tau) dz =: \phi_{\log}(z) dz \, .
\end{align*}
Then $\phi_{\log}'(z) = 0$, i.e. the maximum of the transformed density, is at $\exp(z) = x = \tau$.

However, as the histograms in Figure~\ref{fig:dwell_times_alt} show, this approach does not work for state 0 in our case since the time constants of the other states are too close to the measurement time intervals. This leads to binning artifacts and an uncertainty of more than a factor of $2$ of the time constants.

\begin{figure}[h!]
  \centering
  \subcaptionbox*{data set $2$, state $0$, binning 1}[0.41\textwidth]{\includegraphics[width=0.4\textwidth]{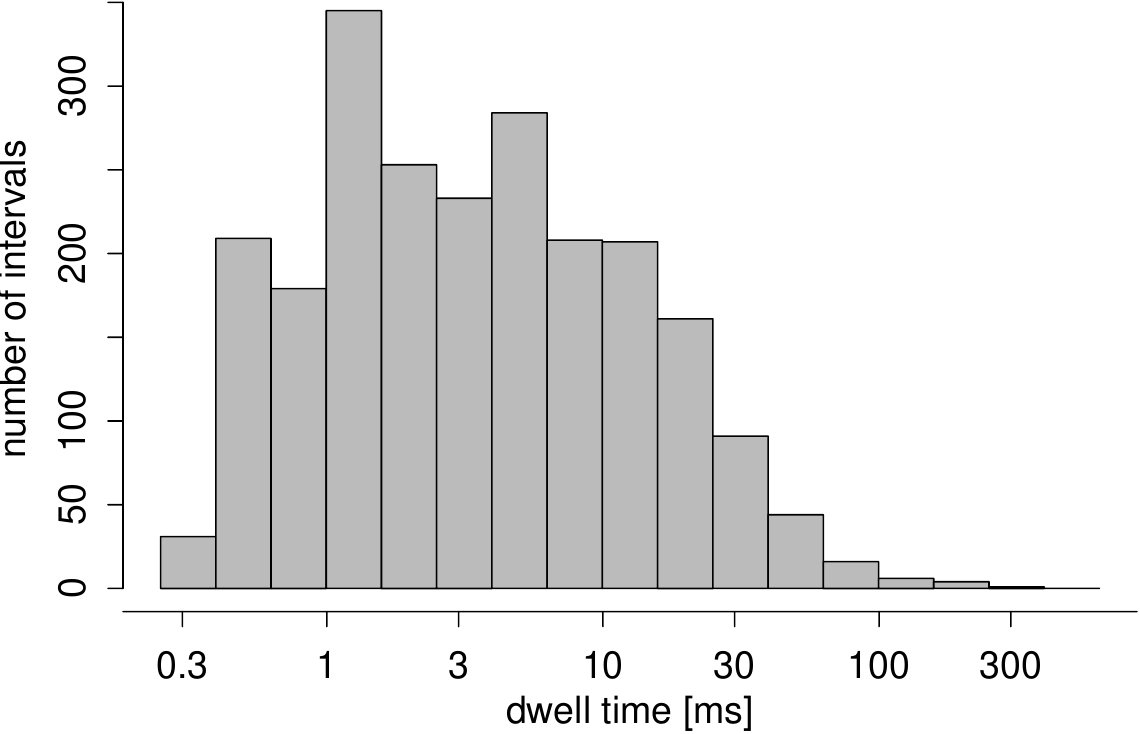}}
  \subcaptionbox*{data set $2$, state $0$, binning 2}[0.41\textwidth]{\includegraphics[width=0.4\textwidth]{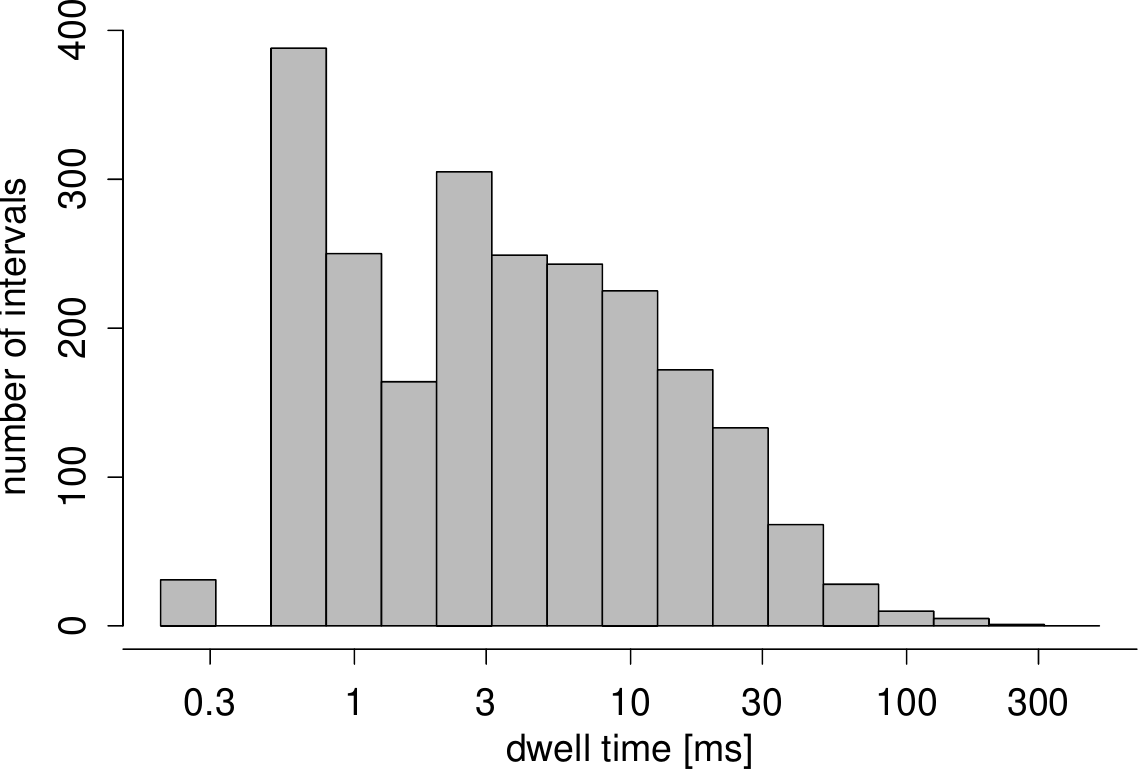}}
  \caption{Dwell times of state 0. The histograms display dwell times extracted from the Viterbi path. Note that the horizontal axis is logarithmic. \label{fig:dwell_times_alt}}
\end{figure}

In order to get more meaningful results, we consider a Gaussian kernel density estimate of the log-transformed dwell times. The kernel bandwidth is determined via Silverman's rule of thumb as cross validation suffers from the same binning artifacts as histograms. The kernel density estimates are displayed in in Figure~\ref{fig:dwell_times_kde}. One can see that some of the curves for states $0$ and $2$ are flat near the maxima regions so one may consider the dwell times read off from them less reliably. Also, almost all curves have pronounced ``shoulders'' which indicate deviations from an exponential distribution, especially for short times. These can be partly explained by binning artifacts but they can also by interpreted as indicating deviations from the Markov assumption at small time scales.

\begin{figure}[h!]
  \centering
  \subcaptionbox*{data set $1$, state $0$}[0.31\textwidth]{\includegraphics[width=0.3\textwidth]{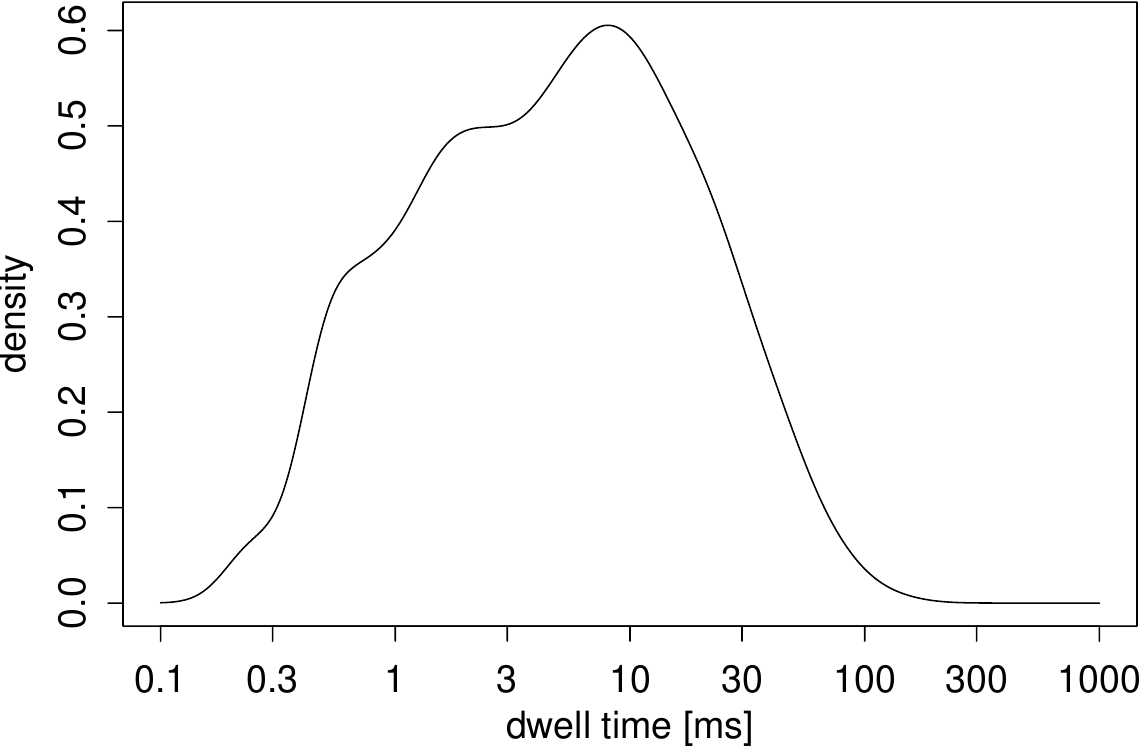}}
  \subcaptionbox*{data set $1$, state $1$}[0.31\textwidth]{\includegraphics[width=0.3\textwidth]{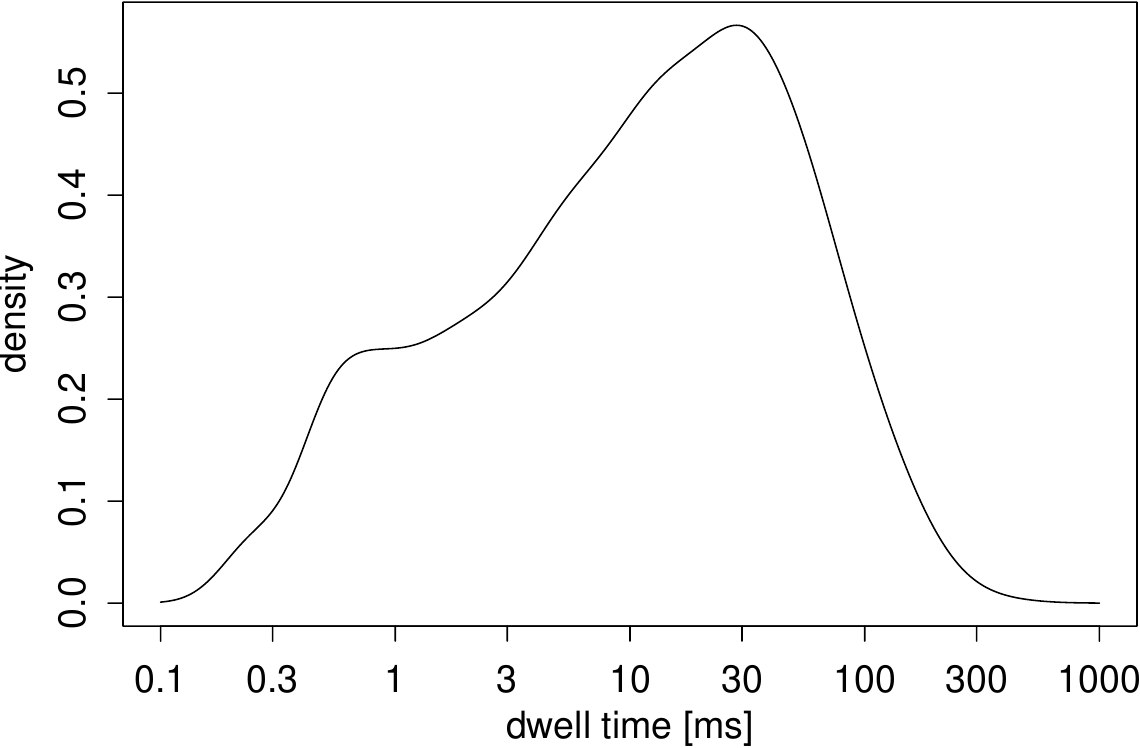}}
  \subcaptionbox*{data set $1$, state $2$}[0.31\textwidth]{\includegraphics[width=0.3\textwidth]{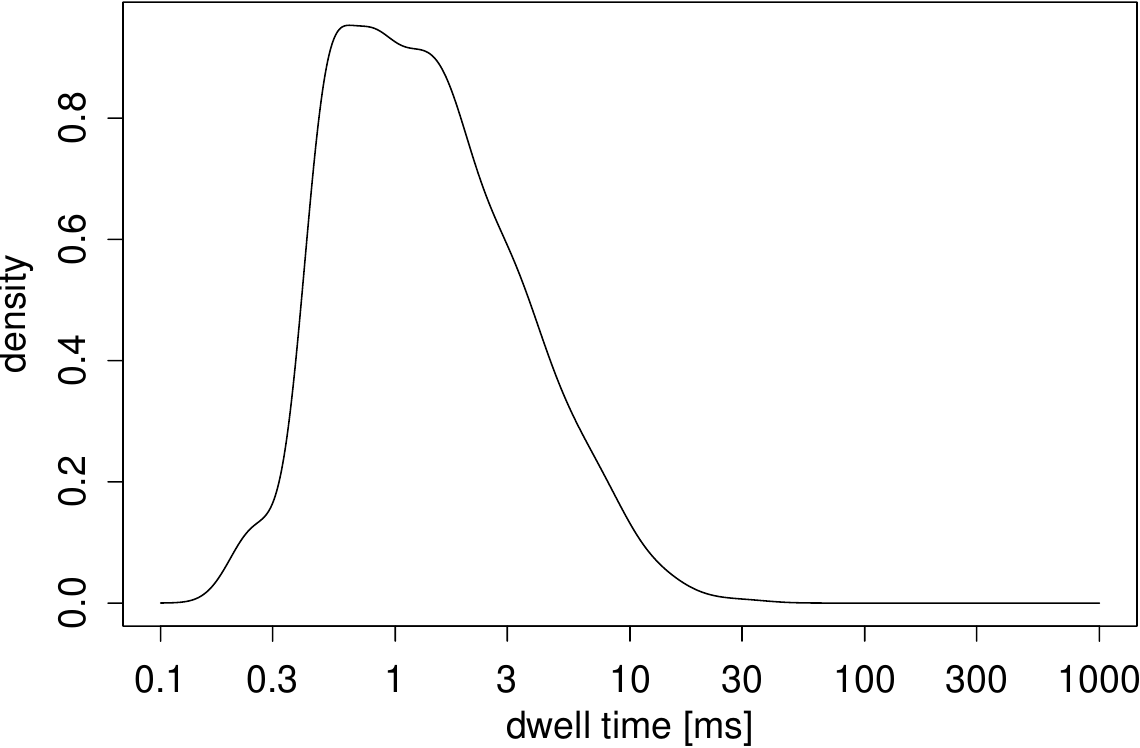}}\\
  \subcaptionbox*{data set $2$, state $0$}[0.31\textwidth]{\includegraphics[width=0.3\textwidth]{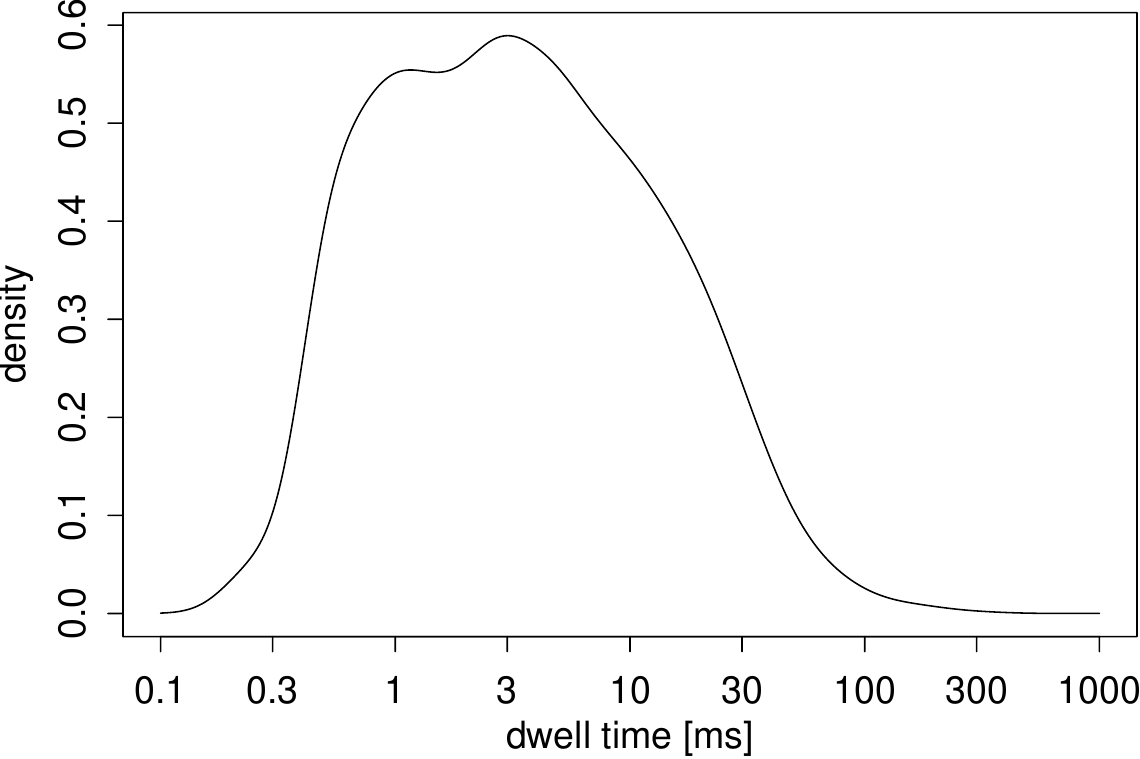}}
  \subcaptionbox*{data set $2$, state $1$}[0.31\textwidth]{\includegraphics[width=0.3\textwidth]{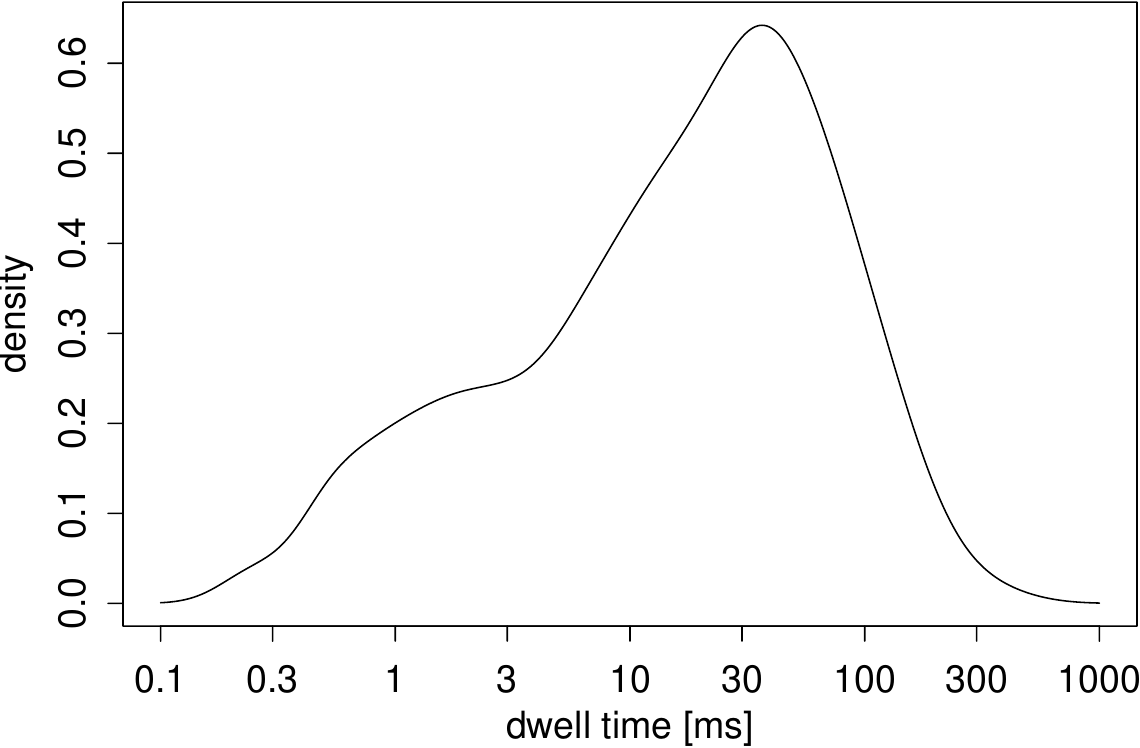}}
  \subcaptionbox*{data set $2$, state $2$}[0.31\textwidth]{\includegraphics[width=0.3\textwidth]{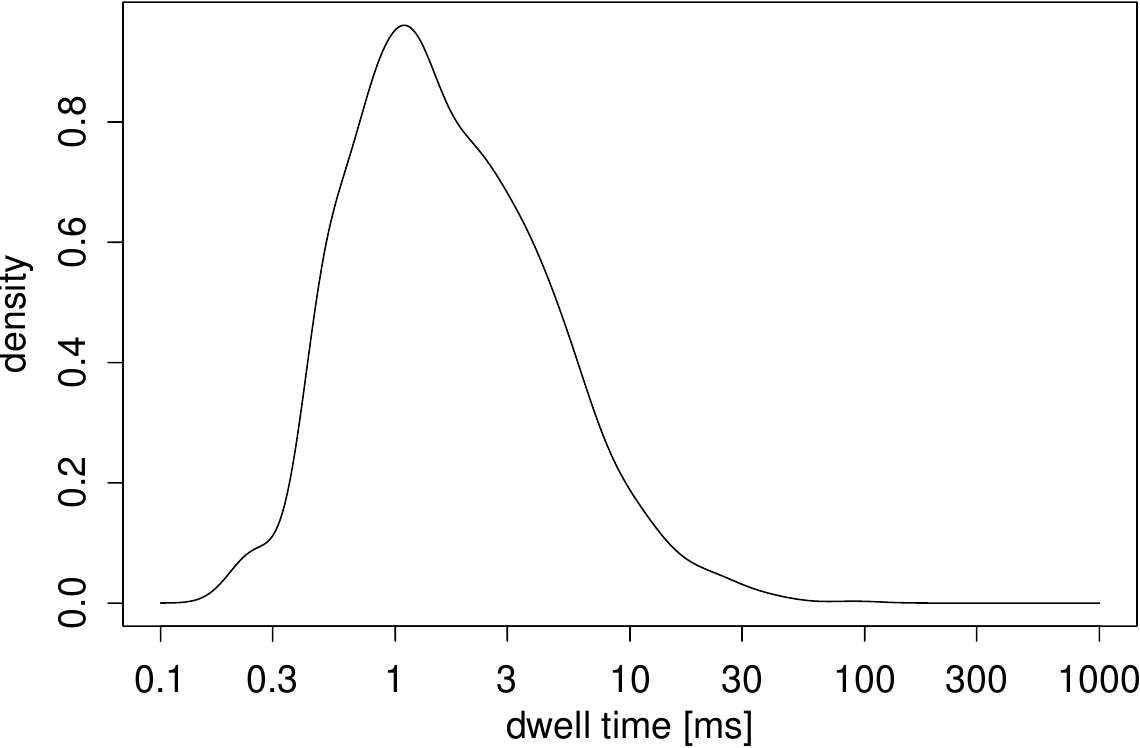}}
  \caption{Kernel density estimates of dwell times of the lower three states for both data sets. State 3 has to few events to yield a meaningful result. The densities display dwell times extracted from the Viterbi path. Note that the horizontal axis is logarithmic. \label{fig:dwell_times_kde}}
\end{figure}

\FloatBarrier
\newpage

\section{Illustration of potential individual channel traces} \label{suppl_sec: Intro}

Figure~\ref{fig:plausible_single_traces} below contains a decomposition of measured data into possible marginal traces of two individual channels, displayed in red and blue.

\begin{figure}[h!!]
  \centering
  \subcaptionbox{data set 1}[0.95\textwidth]{\includegraphics[width=0.8\textwidth]{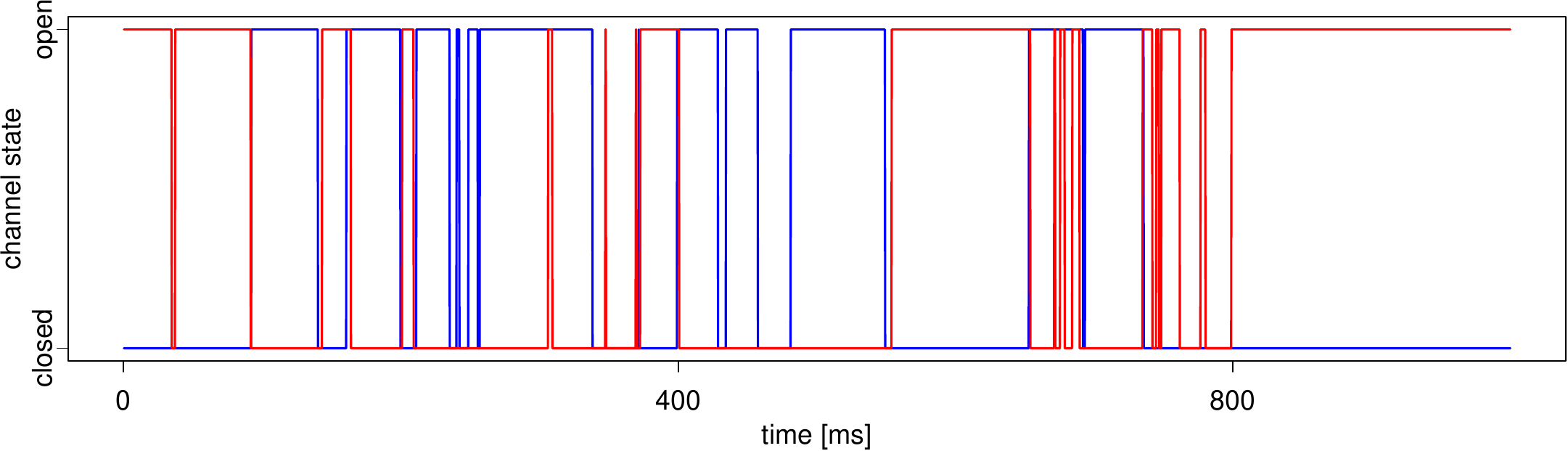}}\\
  \subcaptionbox{CK model with $\widehat{\kappa}=0$}[0.95\textwidth]{\includegraphics[width=0.8\textwidth]{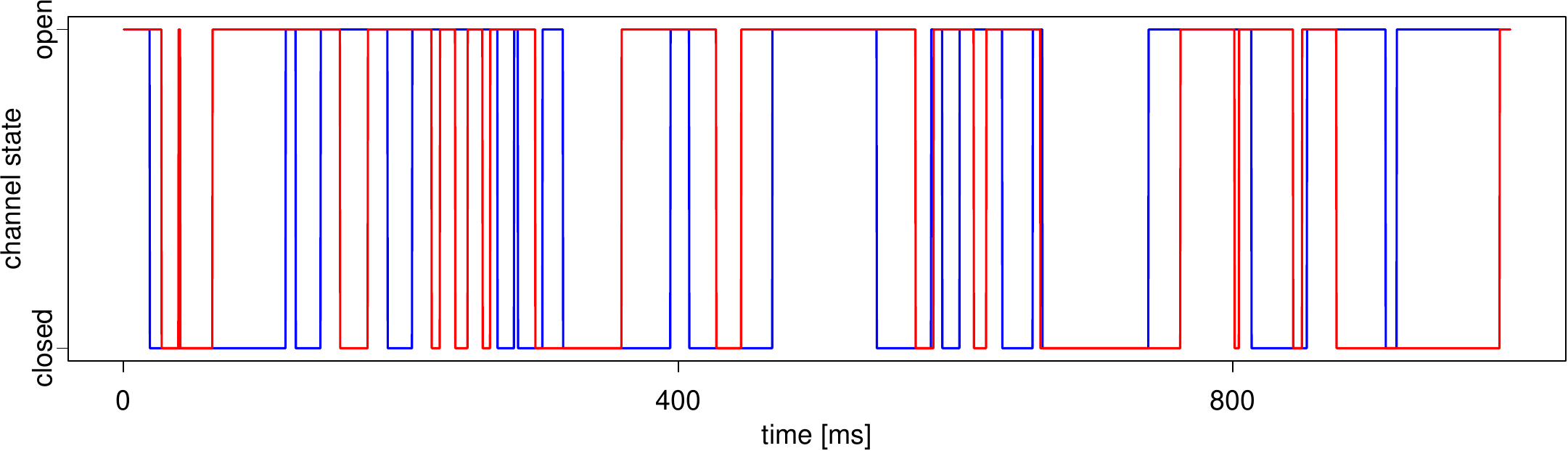}}\\
  \subcaptionbox{VND model}[0.95\textwidth]{\includegraphics[width=0.8\textwidth]{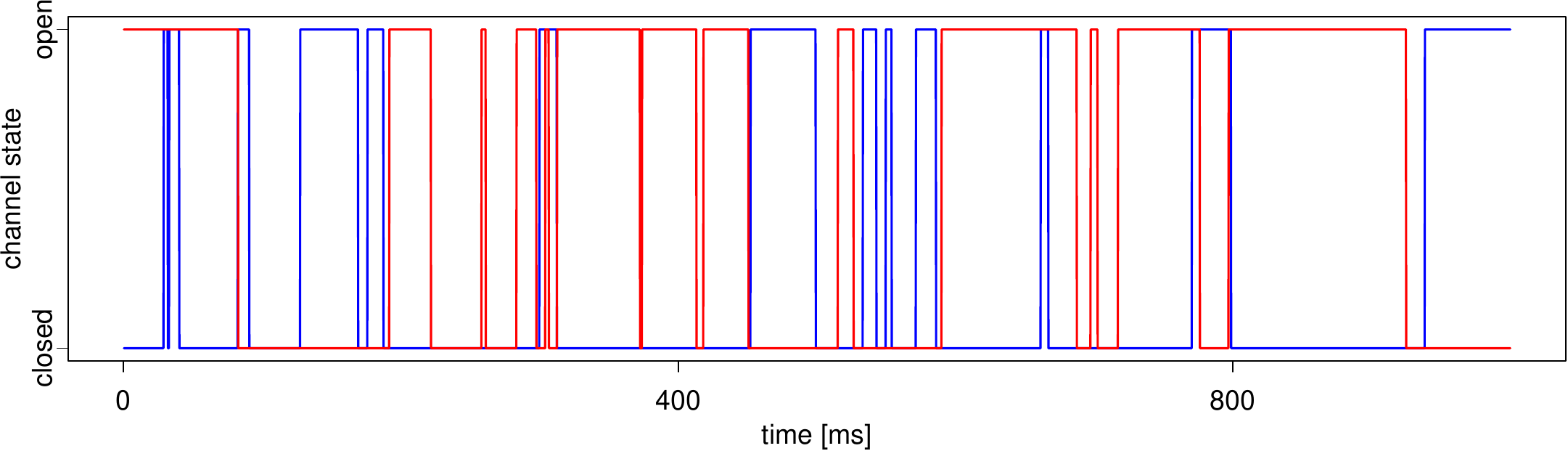}}
  \caption{
    Decomposition of
    measured data into possible marginal traces of two individual channels, displayed in red and blue. 
    \label{fig:plausible_single_traces}}
\end{figure}

For simplicity, events are excluded in which one channel opens and the other closes at exactly the same time. Instead, for every time interval, during which the measured sum process remains in state 1, indicating one open channel, one of the two channels is chosen to be open at random. The resulting possible traces are compared to marginal traces simulated from CK and VND models using estimated model parameters. While the CK traces are uncorrelated, the VND traces as well as the possible traces for real data show very strong negative correlation. Therefore, the possible data traces resemble much more closely the VND traces than the CK traces.

\FloatBarrier
\newpage

\section{Additional material to the ion channels application}

\subsection{Planar lipid bilayer channel measurements}
\label{suppl_sec:exp_meth}

ER vesicles from HEK293 cells expressing RyR2-WT were prepared as described previously \citep{meli2011}. 
Planar lipid bilayers were formed by painting a mixture of phosphatidylethanolamine and phosphatidylcholine (3:1 ratio; Avanti Polar Lipids) across 200-$\upmu$m aperture in polysulfonate cup (Warner Instruments) separating 2 chambers. The trans chamber (1.0 ml), representing the intra-SR (luminal) compartment, was connected to the head stage input of a bilayer voltage clamp amplifier. The cis chamber (1.0 ml), representing the cytoplasmic compartment, was held at virtual ground. Used basic solutions were as follows: 1 mM EGTA, 250/125 mM Hepes/Tris, 50 mM KCl, 0.64 mM CaCl2, pH 7.35 as cis solution (150 nM free [Ca${}^{2+}$]) and 53 mM Ba(OH)2, 50 mM KCl, 250 mM Hepes, pH 7.35 as trans solution. 
RyR2-WT channels were reconstituted by spontaneously fusing ER vesicles into the planar lipid bilayer. Currents through incorporated RyR2-WT channels were recorded at 0 mV using an amplifier (BC-525D, Warner Instruments), filtered at 1 kHz (LPF-8, Warner Instruments), and digitized at 4 kHz. Data acquisition was performed using Digidata 1440A and Axoscope 10 software (Axon Instruments). 
Activity / gating of RyR2-WT channels was first recorded at 150 nM cytosolic [Ca${}^{2+}$] and either 5 or 10 mM luminal [Ca${}^{2+}$]. After that, the cytosolic [Ca${}^{2+}$] was increased up to 5 $\upmu$M together with addition of 1mM Na-ATP to fully activate and to assess the number of channels in bilayer. At the end of the experiment, 8-16 $\upmu$M ryanodine was applied to confirm RyR2 channels identity.

{color{blue}
  \subsection{Simulations for Cooperative Gating}
  \label{suppl_sec:coop_gating}
  
  Here, we present simulations showing the robustness of finding cooperative gating even if channel number is misidentified. The setup is similar to Subsection~\ref{subsec:robust}, simulating data from a cooperative gating model instead of a competitive gating model. We simulate $1\,000\,000$ data points from systems with different numbers of channels $\ell$ in the VND model with different sets of parameters which were chosen such that the highest number of channels open at the same time was $3$. In the language of Definition~\ref{def:coop-comp-gen} all models used for simulation are $3$-cooperative. Then we fit a VND model with $3$ channels to the data and inspect the estimated parameters for signs of competitive or cooperative gating (c.f. Definition~\ref{def:coop-comp}). In order to acquire variance estimates, $100$ repetitions were done for each set of parameters. We investigate the ratios $\frac{1-\widehat\eta_1}{1-\widehat\eta_3}$, $\frac{1-\widehat\eta_2}{1-\widehat\eta_3}$, $\frac{1-\widehat\lambda_1}{1-\widehat\lambda_0}$, and $\frac{1-\widehat\lambda_1}{1-\widehat\lambda_0}$, where we expect all of these ratios to be 
  larger than $1$ in case of cooperative gating since this would indicate that transitions into the state with none or three channels open are preferred relative to transitions out of these states. Again, we use the shorthand notation $\overline{\lambda} := 1- \lambda$ and $\overline{\eta} := 1- \eta$.
  
  \begin{figure}[h!]
    \centering
    \subcaptionbox{}[0.45\textwidth]{\includegraphics[width=0.45\textwidth]{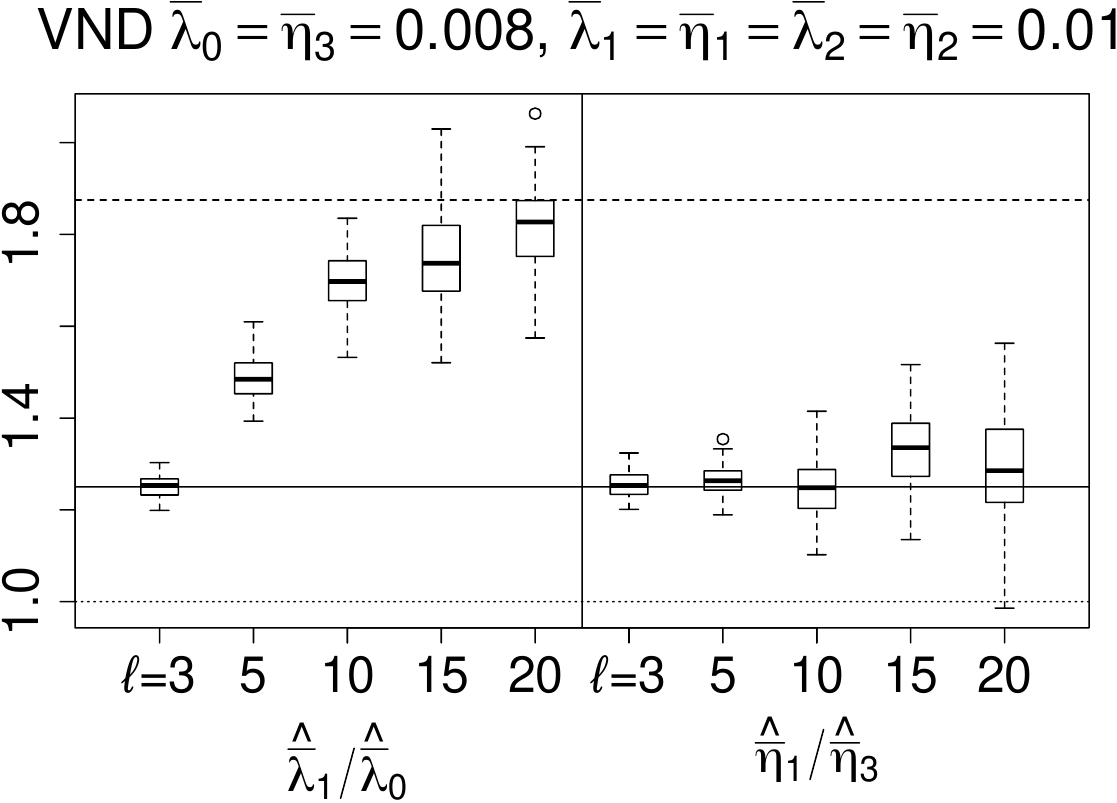}}
    \hspace*{0.02\textwidth}
    \subcaptionbox{}[0.45\textwidth]{\includegraphics[width=0.45\textwidth]{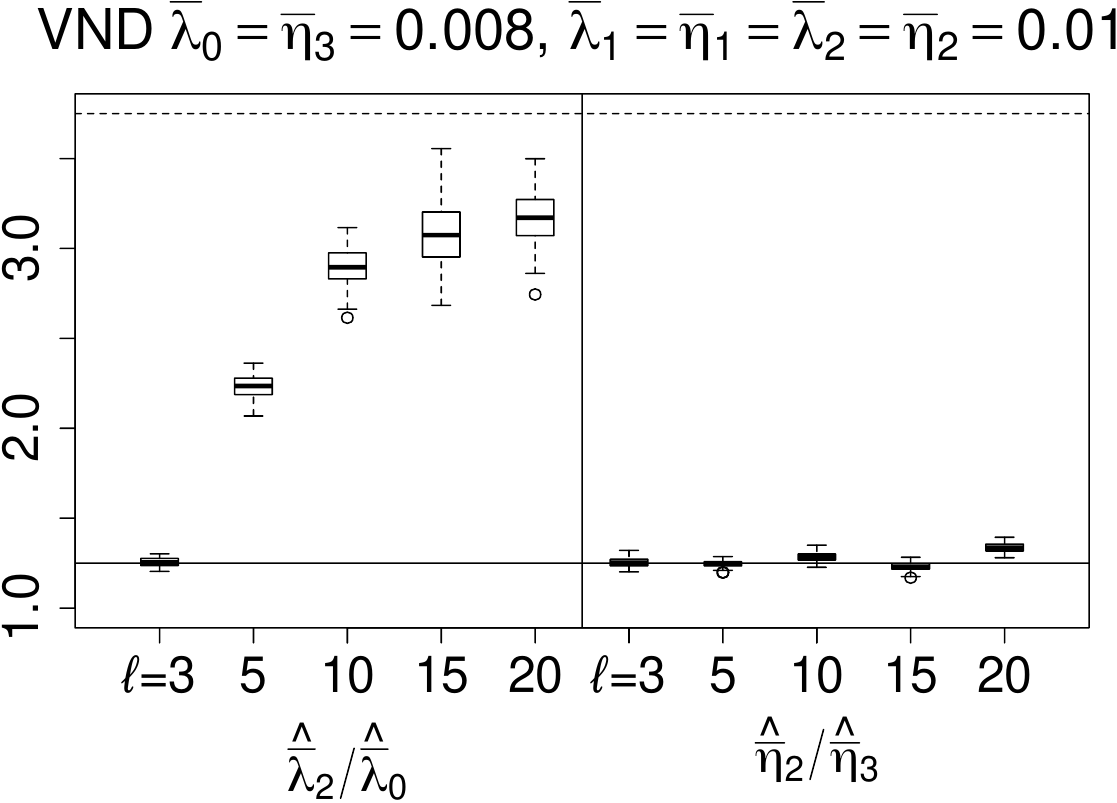}}\\
    \subcaptionbox{}[0.45\textwidth]{\includegraphics[width=0.45\textwidth]{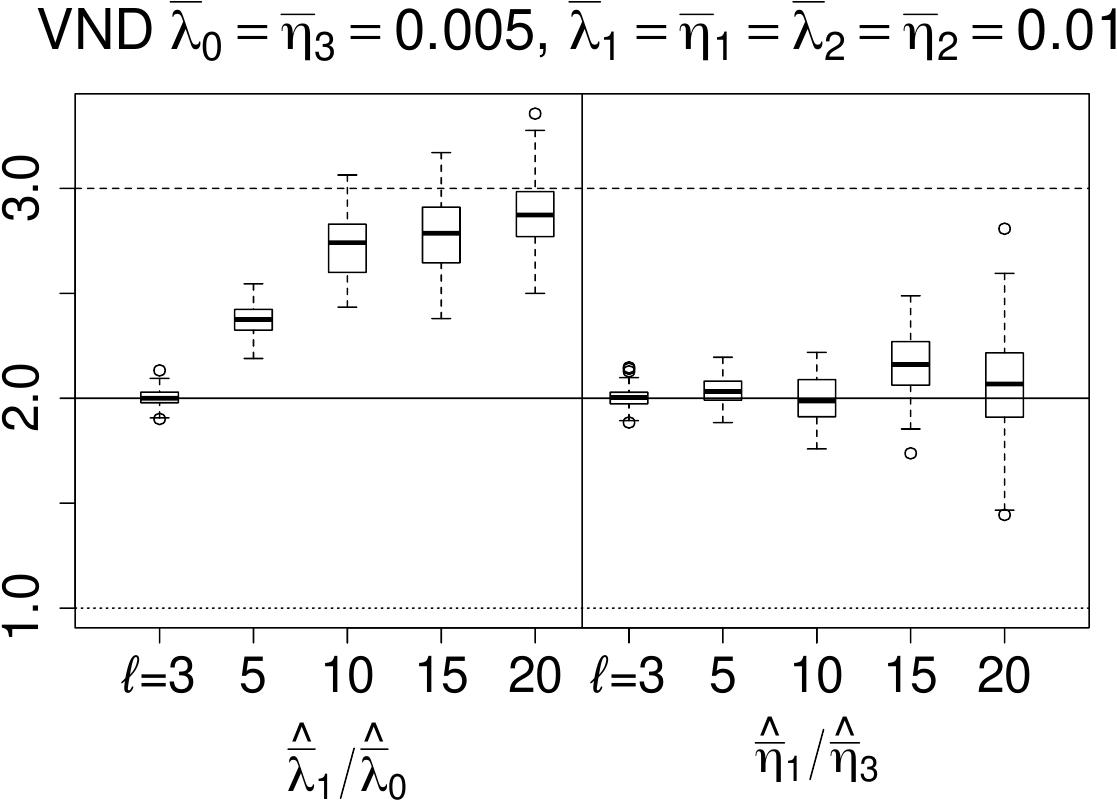}}
    \hspace*{0.02\textwidth}
    \subcaptionbox{}[0.45\textwidth]{\includegraphics[width=0.45\textwidth]{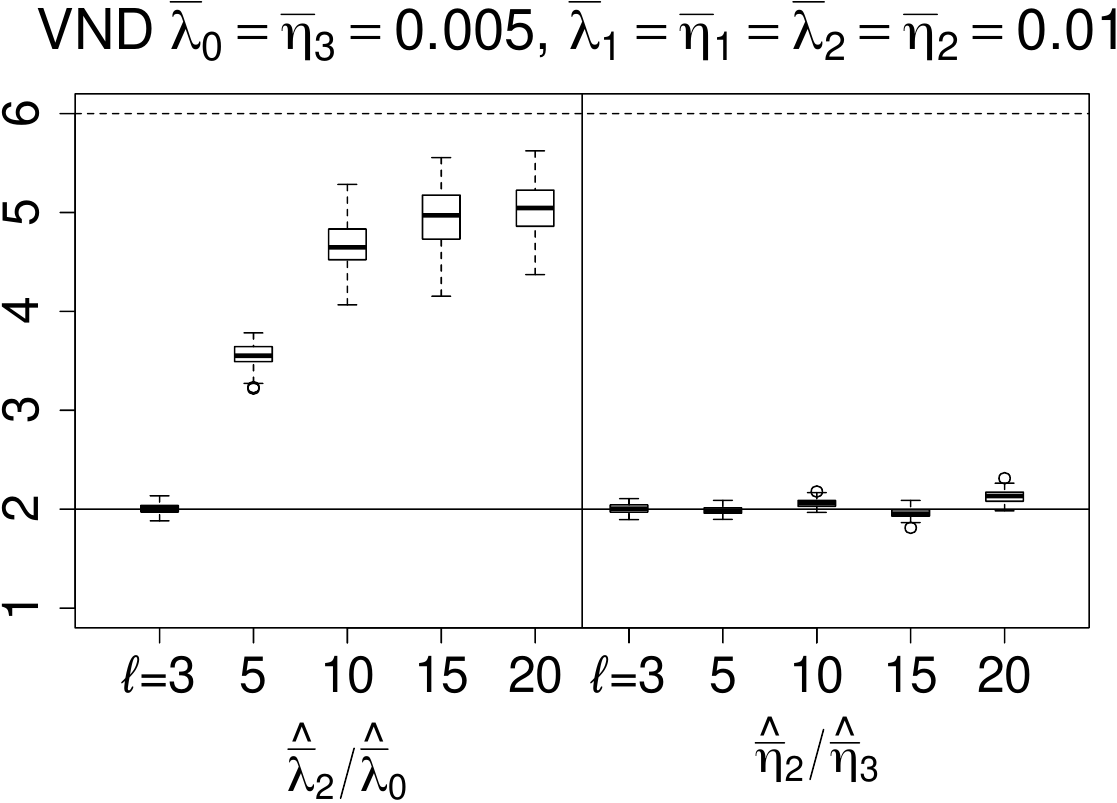}}
    \caption{Boxplots displaying ratios of parameters estimated by a 3 channel VND model for $100$ repetitions of simulations of $1\,000\,000$ data points from a system with $\ell = 3, 5, 10, 15, 20$ channels with the given parameters and $\lambda_k = 1$ for $k > 1$ and $\eta_k = 0.8$ for $k > 2$. For reference, the true values of the ratios are indicated by a solid line and $\overline{\lambda}$-ratios adjusted for channel number based on the considerations explained in Subsection~\ref{suppl_sec:robustness} are indicated by a dashed line. \label{fig:boxplots-coop}}
  \end{figure}
  
  As is clear from the estimated parameters displayed in Figure \ref{fig:boxplots-coop}, the qualitative coupling behavior is recovered. Since all parameter ratios are clearly positive, none of the systems would be wrongly interpreted as competitively coupled.
}
\FloatBarrier

\subsection{Estimating the number of channels}
\label{suppl_sec: est_num_chan}

Here, we investigate the model selection criteria we applied in Section 4.2 of the article with simulated data. We simulated data sets of size $n=600\,000$ from the VND model with $\ell = 5,10,15,20$ channels. The parameters were chosen in such a way that the resultant traces closely resemble data set 1. In particular, no more than three channels are open at the same time in any data set. The parameter value were chosen as the results of the fit of the $\ell$ channel VND models to data set 1. This means that the transition matrices are deliberately similar to each other and the traces all look similar to data set 1. The numerical values of the parameters are listed in Table \ref{tab:simulation_parameters}.

\begin{table}
  \caption{Parameters used for the VND models in simulations. The other parameters were fixed as $\lambda_3 = \dots = \lambda_{\ell-1} = 1$ and $\eta_4 = \dots = \eta_{\ell} = 0$. The parameters were chosen as the results of parameters fits of the VND models with the same number of channels to data set 1. As a results, the transition matrices corresponding to these parameters are very similar.}
  \begin{center}
    \begin{tabular}{l|c|c|c|c|c|c}
      & $\lambda_0$ & $\lambda_1$ & $\lambda_2$ & $\eta_1$ & $\eta_2$ & $\eta_3$ \\
      \hline
      $\ell = 5$ & 0.99507 & 0.99947 & 0.99974 & 0.99221 & 0.93665 & 0.94441 \\
      $\ell = 10$ & 0.99753 & 0.99977 & 0.99990 & 0.99221 & 0.93665 & 0.94388 \\
      $\ell = 15$ & 0.99836 & 0.99985 & 0.99994 & 0.99221 & 0.93665 & 0.94390 \\
      $\ell = 20$ & 0.99877 & 0.99989 & 0.99996 & 0.99221 & 0.93666 & 0.94353 
    \end{tabular}
  \end{center}
  \label{tab:simulation_parameters}
\end{table}

\begin{figure}[h!]
  \centering
  \subcaptionbox{$\ell=5$ simulation}[0.45\textwidth]{\includegraphics[width=0.45\textwidth]{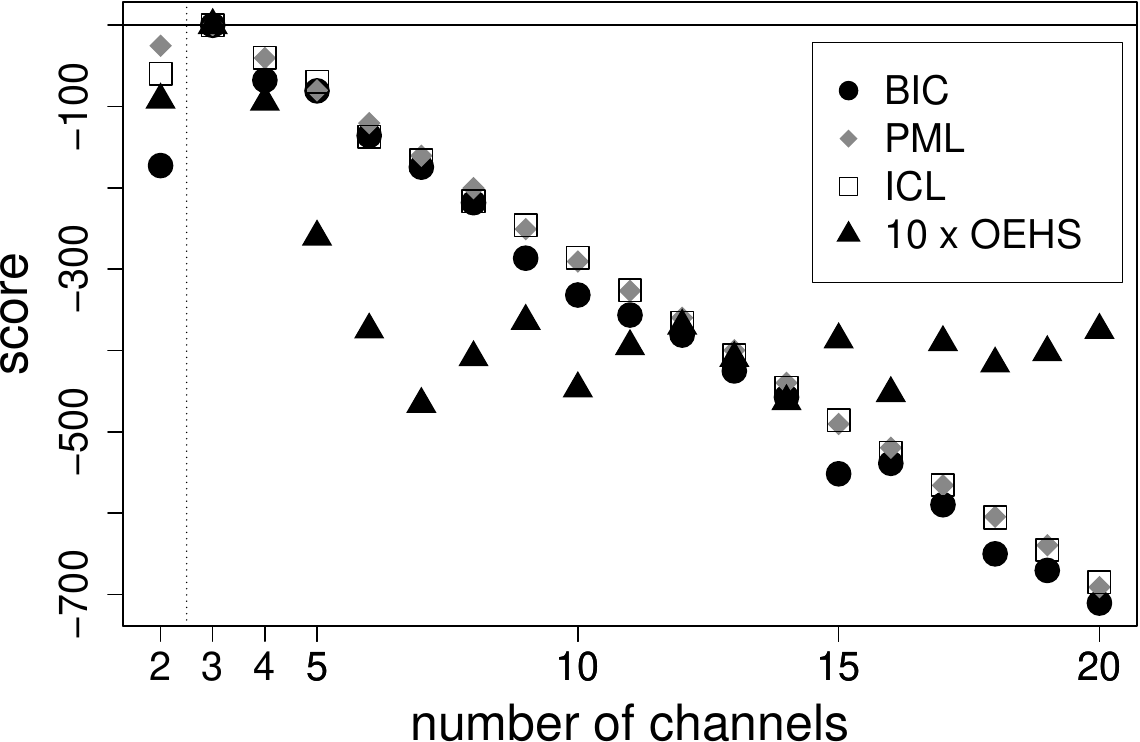}}
  \hspace*{0.02\textwidth}
  \subcaptionbox{$\ell=10$ simulation}[0.45\textwidth]{\includegraphics[width=0.45\textwidth]{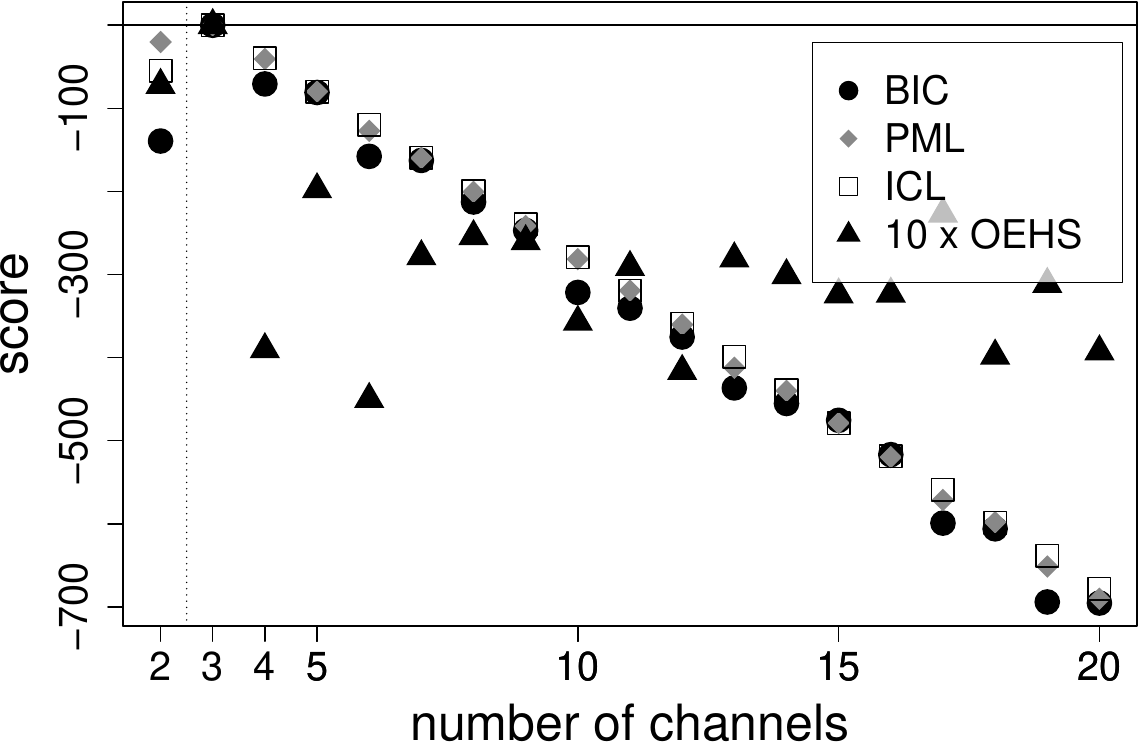}}\vspace*{\baselineskip}\\
  \subcaptionbox{$\ell=15$ simulation}[0.45\textwidth]{\includegraphics[width=0.45\textwidth]{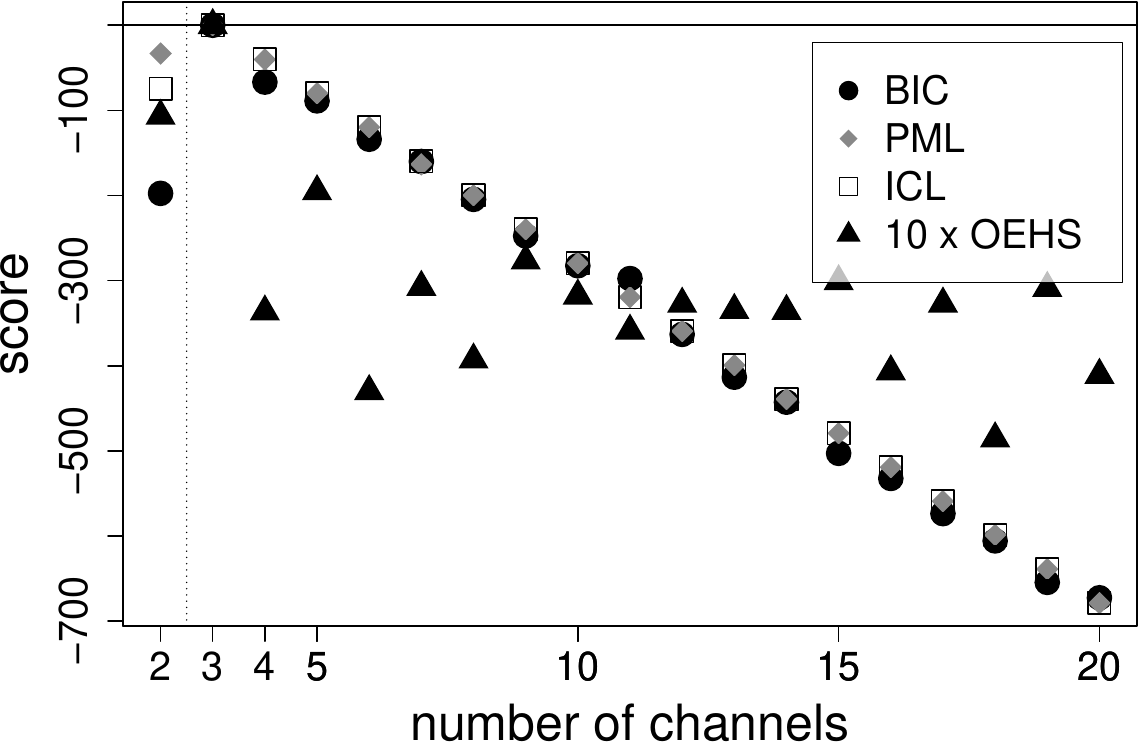}}
  \hspace*{0.02\textwidth}
  \subcaptionbox{$\ell=20$ simulation}[0.45\textwidth]{\includegraphics[width=0.45\textwidth]{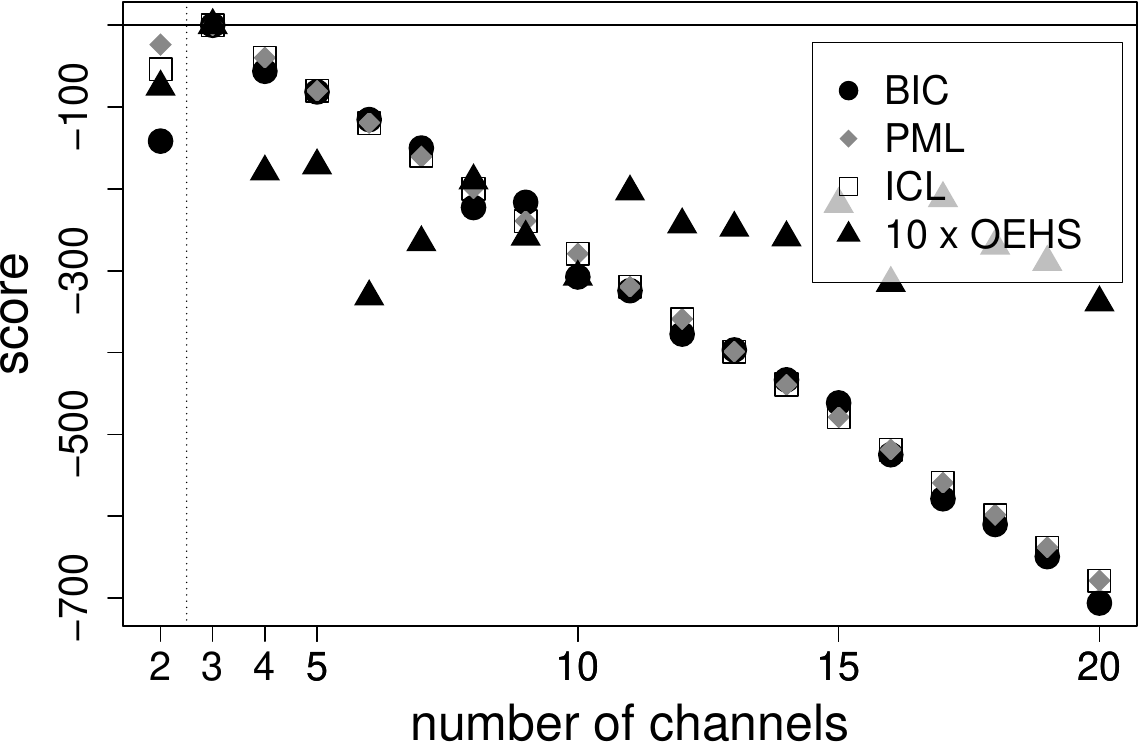}}
  \caption{Several model selection criteria for the VND model with $\ell \in \{2, \dots, 20\}$ applied to the simulated data sets. The values for $\ell = 2$ were divided by $10$ so they do not dominate the vertical axis scaling. Values for the cross validation for $\ell \ge 3$ have been inflated by a factor of $10$ since they are much more similar than the other criteria. For all data sets, one can conclude that the model with $\ell=3$ provides the best fit. \label{fig:n_channels}}
\end{figure}

The results of the model selection criteria are displayed in Figure \ref{fig:n_channels}. As one can clearly see, all model selection criteria prefer the model with $\ell = 3$, contrary to the true number of $\ell$ chosen in the simulations. This can be explained by the strongly competitive gating suggested by the estimated parameters. In fact, the left upper corner part $U^{\textnormal{\rm(VND,$\ell$)}}$ of the transition matrix $Q^{(\text{VND})}$ of the hidden Markov chain within the VND model for $\ell$ channels, which can be estimated from the data at hand, takes the form
\begin{align*}
U^{\textnormal{\rm(VND,$\ell$)}} :=& \begin{pmatrix}
\lambda_0^\ell & \ell \lambda_0^{\ell-1}(1-\lambda_0) & \frac{\ell (\ell-1)}{2}\lambda_0^{\ell-2}(1-\lambda_0)^2 \\
\lambda_1^{\ell-1}(1-\eta_1) & q_{1,1} & q_{1,2} \\
\lambda_1^{\ell-2}(1-\eta_2)^2 & q_{2,1} & q_{2,2}
\end{pmatrix}
\end{align*}
with
\begin{align*}
q_{1,1} :=& \lambda_1^{\ell-1} \eta_1 + (\ell-1)\lambda_1^{\ell-2}(1-\lambda_1)(1-\eta_1)\\
q_{1,2} :=& (\ell-1)\lambda_1^{\ell-2}(1-\lambda_1) \eta_1 + \frac{(\ell-1)(\ell-2)}{2}\lambda_1^{\ell-3}(1-\lambda_1)(1-\eta_1)^2\\
q_{2,1} :=& 2 \lambda_2^{\ell-2} \eta_2 (1-\eta_2) + (\ell-2)\lambda_2^{\ell-3}(1-\lambda_2)(1-\eta_2)^2\\
q_{2,2} :=& \lambda_2^{\ell-2}\eta_2^2 + 2(\ell-2)\lambda_2^{\ell-3}(1-\lambda_2) \eta_2 (1-\eta_2)\\
&+ \frac{(\ell-2)(\ell-3)}{2}\lambda_2^{\ell-4}(1-\lambda_2)^2(1-\eta_2)^2.
\end{align*}
This shows that determining $\ell$ from the data is very difficult since the entries of $Q^{\textnormal{\rm(VND,$\ell$)}}$ depend only very weakly on $\ell$ (in analogy of estimating the number of trials within a binomial distribution with small success probability). In the case that only part of the state space is realized in a data set, the model selection hinges on the matrix entries $q_{0,2}$ and $q_{2,0}$, which differ most clearly between estimated matrices for different $\ell$, both increasing with $\ell$. Both entries are very small and therefore very sensitive to noise. Therefore, the results remain inconclusive apart from the finding that the parameter $\ell$ is very difficult to identify from the present data sets and even simulated data with similar features.

\subsection{Robustness to channel number}
\label{suppl_sec:robustness}

To understand why the ratio $\frac{1-\widehat\lambda_0}{1-\widehat\lambda_1}$ decreases with increasing $\ell$, we note the for small values of $\overline{\lambda}_j$ and $\overline{\eta}_j$, the upper left square of the transition matrix can be approximated by the linearization
\begin{align*}
U^{\textnormal{\rm(VND,$\ell$)}} =& \begin{pmatrix}
1 - \ell\overline{\lambda}_0 & \ell\overline{\lambda}_0 & 0 \\
\overline{\eta}_1 & 1 - (\ell-1)\overline{\lambda}_1 - \overline{\eta}_1 & (\ell-1)\overline{\lambda}_1 \\
0 & 2\overline{\eta}_2 & 1 - 2\overline{\eta}_2
\end{pmatrix} + \mathcal{O}(\overline{\lambda}_i^2, \overline{\eta}_j^2, \overline{\lambda}_i \overline{\eta}_j) \, .
\end{align*}
Now, one can set $\widehat{U}^{\textnormal{\rm(VND,$3$)}} = U^{\textnormal{\rm(VND,$\ell$)}} + \mathcal{O}(\overline{\lambda}_i^2, \overline{\eta}_j^2, \overline{\lambda}_i \overline{\eta}_j)$, which leads to the relations
\begin{align*}
\frac{1}{3 \widehat{\overline{\lambda}}_0} - 1 &= \frac{1}{\ell \overline{\lambda}_0} - 1 + \mathcal{O}(\overline{\lambda}_i^2, \overline{\eta}_j^2, \overline{\lambda}_i \overline{\eta}_j) & \Rightarrow \quad \widehat{\overline{\lambda}}_0 &= \frac{\ell}{3} \overline{\lambda}_0 + \mathcal{O}(\overline{\lambda}_i^2, \overline{\eta}_j^2, \overline{\lambda}_i \overline{\eta}_j)\\
\frac{1}{2 \widehat{\overline{\lambda}}_1} - 1 &= \frac{1}{(\ell-1) \overline{\lambda}_1} - 1 + \mathcal{O}(\overline{\lambda}_i^2, \overline{\eta}_j^2, \overline{\lambda}_i \overline{\eta}_j) & \Rightarrow \quad \widehat{\overline{\lambda}}_1 &= \frac{\ell-1}{2} \overline{\lambda}_1 + \mathcal{O}(\overline{\lambda}_i^2, \overline{\eta}_j^2, \overline{\lambda}_i \overline{\eta}_j)\\
\frac{1-\widehat\lambda_0}{1-\widehat\lambda_1} &= \frac{2\ell}{3(\ell-1)} \frac{1-\lambda_0}{1-\lambda_1} + \mathcal{O}(\overline{\lambda}_i^2, \overline{\eta}_j^2, \overline{\lambda}_i \overline{\eta}_j) \, ,
\end{align*}
which indicates that, in the limit of all $\overline{\eta}_j$, $\overline{\lambda}_i$ going to $0$ at the same rate, the ratio $\frac{1-\lambda_0}{1-\lambda_1}$ will be underestimated by a factor $\frac{2\ell}{3(\ell-1)}$, which approaches $\frac{2}{3}$ for large $\ell$.

In conclusion, underestimating the number of channels may lead one to wrongly miss but not to wrongly detect competitive gating. This underscores the usefulness of the VND model for the identification of coupled gating, even if the number of channels is not exactly known and only a subset of the channels are open at the same time during the observation time span. More specifically, it reinforces the qualitative finding of competitive gating for the ion channels analyzed above as this does not arise as an artifact of underestimating the number of channels.

\bibliographystyle{Chicago} 
\bibliography{LDCpaper}       


\end{document}